\newtheorem{Twierdzenie}{Theorem}[section]
\newtheorem{Definicja}{Definition}[section]
\newtheorem{Lemat}{Lemma}[section]
\newtheorem{Uwaga}{Remark}[section]
\newtheorem{Wniosek}{Corollary}[section]
\title{On Walker and para-Hermite Einstein spaces.}
\author{$\textrm{Adam Chudecki}^{a,b,c}$}
\begin{document}
\sloppy

\maketitle

$^a$ Center of Mathematics and Physics, Lodz University of Technology, Al. Politechniki 11, 90-924 Lodz, Poland. 
\newline
$^b$ Institute of Physics, Faculty of Technical Physics, Information Technology and Applied Mathematics, Lodz University of Technology, Wólcza\'nska 217/221, 93-005 Lodz, Poland.
\newline
$^c$ adam.chudecki@p.lodz.pl
\newline
\newline
\textbf{Abstract}. A special class of (complex) para-Hermite Einstein spaces is analyzed. For this class of spaces the self-dual Weyl tensor is type-[D] in the Petrov-Penrose classification. The anti-self-dual Weyl tensor is algebraically degenerate, equivalently, there exists an anti-self-dual congruence of null strings. It is assumed that this congruence is parallely propagated. Thus, the spaces are not only para-Hermite but also Walker. A classification of the spaces according to three criteria is given. Finally, explicit metrics of all admitted Petrov-Penrose types are found.
\newline
\newline
\textbf{PACS numbers:} 04.20.Jb, 04.20.Gz.
\newline
\newline
\textbf{Key words:} para-Hermite Einstein spaces, Walker spaces, congruences of null strings.


\section{Introduction}
\setcounter{equation}{0}

\subsection{Background}

Real 4-dimensional neutral spaces (i.e., smooth 4D-manifolds equipped with a smooth metric of neutral signature $(++--)$) play an important role in mathematics and theoretical physics. For example, \textsl{Walker spaces}, \textsl{para-Hermite spaces} (\textsl{pH-spaces}), \textsl{para-K\"ahler spaces} (\textsl{pK-spaces)} and their Einstein counterparts (\textsl{pHE-spaces}, \textsl{pKE-spaces}) belong to this class. Walker spaces are equipped with a totally null 2D-distribution which is integrable and parallely propagated \cite{Walker}. PH-spaces and pK-spaces admit a pair of integrable and totally null 2D-distributions. Such spaces were defined in the fifties \cite{Libermann} and since then they have been intensively studied \cite{Flaherty}. Recently it has been discovered that pKE-spaces are essential in geometrically and physically important issues, like non-integrable twistor distributions, (2,3,5)-distributions with maximal algebra of infinitesimal symmetries, rolling without slipping or twisting, self-dual (SD) and anti-self-dual (ASD) spaces \cite{Alexandrov, An, Bor, Bor_Makhmali_Nurowski}. 

Motivated by these results we dealt with pKE-spaces and pHE-spaces in \cite{Chudecki_Examples} where we found many examples. The examples, however, were not general and the question whether vacuum Einstein equations with cosmological constant in pKE-spaces and pHE-spaces can be solved in all generality, remained open. Later we understood that a key to success lies in equations of congruences of null strings. [Congruences of null strings are complex counterparts of congruences of null geodesics. They were analyzed in the seventies \cite{Plebanski_Hacyan_Sachs,Plebanski_Rozga} and they led to the discovery of hyperheavenly spaces ($\mathcal{HH}$-spaces) \cite{Plebanski_Robinson_2}]. We realized that vacuum Einstein equations with cosmological constant should not be attacked before the equations of congruences of null strings are solved. Being aware of this fact, we returned to pKE-spaces. Eventually, we found general metrics of all pKE-spaces with algebraically degenerate ASD Weyl tensor. Results published in \cite{Chudecki_Ref_1,Chudecki_Ref_2} are significant contribution to the field of exact solutions of pKE-spaces. In fact, only an algebraically general case has not been solved yet. However, the question of general solutions of pHE-spaces remained open.

In this paper we try to, at least partially, fill this gap. We deal with a special class of pHE-spaces. For all pHE-spaces the SD Weyl tensor is type-[D] in the Petrov-Penrose classification (see Section \ref{section_preliminaries}). The ASD Weyl tensor could be arbitrary but in what follows we assume that it is algebraically degenerate. Equivalently, there exists an integrable, ASD, totally null, 2D-distribution. We assume  that this distribution is parallely propagated. Thus, all spaces considered in the article are also Walker spaces. In other words, we deal with spaces of the types $[\textrm{D}]^{ee} \otimes [\textrm{deg}]^{n}$ (the meaning of this symbol is explained in Section \ref{section_preliminaries}). Our goal is to find general metrics of such spaces. To achieve this goal we use the theory of hyperheavenly spaces.

Theory of $\mathcal{HH}$-spaces is a powerful tool in the general theory of relativity and in various branches of differential geometry. $\mathcal{HH}$-spaces are defined as 4-dimensional, complex, Einstein spaces with cosmological constant $\Lambda$ equipped with a holomorphic metric such that the SD (or ASD) Weyl tensor is algebraically degenerate\footnote{The only case which is not covered by the theory of $\mathcal{HH}$-spaces is the type $[\textrm{I}] \otimes [\textrm{I}]$.}. Vacuum Einstein field equations with cosmological constant in $\mathcal{HH}$-spaces were reduced to a single, partial, second order, strongly nonlinear differential equation called \textsl{the hyperheavenly equation ($\mathcal{HH}$-equation)} for a holomorphic function $W$ (called \textsl{the key function}) which completely determines the metric (Pleba\'nski, Robinson, \cite{Plebanski_Robinson_2}). An integration of vacuum Einstein equations with cosmological constant to a single differential equation is one of the most outstanding results in the field of complex methods in the general theory of relativity. 

In what follows we specify $\mathcal{HH}$-spaces to be of the types  $[\textrm{D}]^{ee} \otimes [\textrm{deg}]^{n}$ and we solve the corresponding $\mathcal{HH}$-equation in all the cases. PHE-metrics are obtained as real neutral slices of the corresponding complex metrics. Fortunately, real neutral slices of complex metrics can be easily found.

All considerations are purely local. We use the spinorial formalism in Infeld - Van der Waerden - Pleba\'nski notation. All coordinates are complex, all functions are holomorphic.

\subsection{Structure of the article}

Our paper is organized, as follows. Section \ref{section_preliminaries} is devoted to geometrical preliminaries. It consists of a brief introduction to geometrical structures (like congruences of null strings and congruences of null geodesics), the Petrov-Penrose classification and para-Hermite spaces. We also introduce several abbreviations used in the rest of the paper. In order not to make the article too long, we skipped an introduction to the spinorial formalism. It is cordially suggested to the Reader of this paper to read Sections 1.3 and 2 of \cite{Chudecki_Ref_1} where all the abbreviations were introduced and illustrated with examples and a brief introduction to the spinorial formalism in Infeld - Van der Waerden - Pleba\'nski approach was presented. Readers who are familiar with our previous cycle of publications \cite{Chudecki_Ref_1, Chudecki_Ref_2} could skip Section 2 and start studying the article from Section \ref{section_HH_spaces}. 

In Section \ref{section_HH_spaces} we explain how pHE-spaces fit into the theory of expanding $\mathcal{HH}$-spaces. Formulas for the metric, null tetrad and Weyl curvature coefficients of expanding $\mathcal{HH}$-spaces for which SD  Weyl tensor is type-[D] are given. Some remarks about congruences of SD and ASD null strings and their intersections are posted. We solve equations of a congruence of ASD null strings and we find a form of the key function (\ref{Key_function_step_3}). We prove that the $\mathcal{HH}$-equation (\ref{HH_equation_ogolne}) reduces to the system (\ref{HH_resztki_1})-(\ref{HH_resztki_2}) which is "the core" of the further analysis. Finally, we discuss criteria for the ASD Weyl tensor to be of a specific Petrov-Penrose type.

Sections \ref{section_Dee_x_IIn} - \ref{section_Dee_x_Nn} are devoted to exact solutions of different Petrov-Penrose types. We always follow the same pattern. First we deal with spaces for which congruences of null geodesics are all expanding and twisting. Then we focus on a simpler class for which one nonexpanding and nontwisting congruence of null geodesics is admitted. Finally, we analyze even more special solutions with 2D and 3D-algebra of infinitesimal symmetries (it is explained in Section \ref{section_HH_spaces} that any pHE-space is equipped with at least one Killing vector). 

Some concluding remarks with a sketch of possible directions for further investigations close the paper.

\subsection{Summary of main results}

To convince the Reader that the results of the paper constitute a significant progress in the field of exact solutions of pHE-spaces, we make a brief comparison with our first paper devoted to pHE-spaces \cite{Chudecki_Examples}. Results published in \cite{Chudecki_Examples} are far away from being general. It is so because of an additional assumption which we made (see Eq. (3.12) of \cite{Chudecki_Examples}). In what follows our approach is general and it does not involve any "ad hoc" assumptions. Moreover, the current paper contains a complete geometrical classification of the results, which was not done in \cite{Chudecki_Examples}.

The main results of the paper are gathered in Sections \ref{section_Dee_x_IIn} - \ref{section_Dee_x_Nn}. Consider the metric (\ref{metryka_TypII_pp_pp_ostateczna}). It is the general metric of a space of the type $\{ [\textrm{D}]^{ee} \otimes [\textrm{II}]^{n}, [++,++] \}$. It depends on one function of two variables which satisfies an Abel differential equation of the first kind. Our previous example of such a metric (\cite{Chudecki_Examples}, the metric (3.31) with $f=0$) depends on one function of one variable only. This is a significant progress in generality of considerations.

Similarly, the metric (\ref{metryka_TypIII_pp_pp_ostateczna}) is the general metric of a space of the type $\{ [\textrm{D}]^{ee} \otimes [\textrm{III}]^{n}, [++,++] \}$ and it depends on two functions of one variable. It is more general then its counterpart of \cite{Chudecki_Examples} (the metric (3.61), one function of one variable).

A special attention should be paid on the metric (\ref{metryka_TypN_pp_pp_ostateczna}). It is the general metric of a space of the type $\{ [\textrm{D}]^{ee} \otimes [\textrm{N}]^{n}, [++,++] \}$. In this case we cannot compare it with any other metric because according to our best knowledge, (\ref{metryka_TypN_pp_pp_ostateczna}) is the first example of a pHE-space for which the ASD Weyl tensor is type-[N]. In our opinion this fact makes Section \ref{section_Dee_x_Nn} the most valuable part of the paper. 

Results concerning types $ [\textrm{D}]^{ee} \otimes [\textrm{D}]^{nn}$ (Section \ref{section_Dee_x_Dnn}) are less original. The metrics (\ref{metryka_TypD_ostateczna_pppppppp}) and (\ref{metryka_TypD_ostateczna_ppmmmmpp}) belong to pKE class. Such solutions were found in \cite{Bor_Makhmali_Nurowski} in all generality. However, our approach is slightly different and it is enriched with an additional geometrical subclassification of such spaces.

\subsection{Remarks about real slices}

In the paper we mainly deal with complex 4D-spaces: metrics are holomorphic and coordinates are complex. Real pHE-spaces can be, however, quite easily obtained from the generic complex results. The metrics should be only considered as real ones (with real coordinates and real smooth functions). Also, the Petrov-Penrose types should be replaced by their real neutral counterparts, as below
\begin{equation}
\nonumber
[\textrm{D}] \rightarrow [\textrm{D}_{r}], \ [\textrm{II}] \rightarrow [\textrm{II}_{r}], \ [\textrm{III}] \rightarrow [\textrm{III}_{r}], \ [\textrm{N}] \rightarrow [\textrm{N}_{r}]
\end{equation}

One can also obtain neutral slices with more "exotic" Petrov-Penrose types ($[\textrm{D}_{c}]$, $[\textrm{II}_{rc}]$), but a corresponding procedure is more complicated and we do not deal with such slices in this paper.  
[Subtleties of the Petrov-Penrose classification of conformal curvature in real neutral spaces have been discussed in many papers, see, e.g., \cite{Bor_Makhmali_Nurowski} or Section 2.2 of \cite{Chudecki_Ref_1}].

The only candidates which could possibly admit real Lorentzian slices are spaces of the types $[\textrm{D}]^{ee} \otimes [\textrm{D}]^{nn}$ (Section \ref{section_Dee_x_Dnn}), because both SD and ASD Weyl spinors are of the same Petrov-Penrose type. However, within the types $[\textrm{D}]^{ee} \otimes [\textrm{D}]^{nn}$ congruences of SD null strings are expanding while congruences of ASD null strings are nonexpanding. Such spaces do not admit Lorentzian slices. Hence, the metrics (\ref{metryka_TypD_ostateczna_pppppppp}) and (\ref{metryka_TypD_ostateczna_ppmmmmpp}) are interesting examples of two-sided type-[D] complex spaces which do not admit Lorentzian slices.


\renewcommand{\arraystretch}{1.5}
\setlength\arraycolsep{2pt}

\section{Preliminaries}
\label{section_preliminaries}
\setcounter{equation}{0}

\subsection{Petrov-Penrose classification}

A spinorial image of the SD Weyl tensor is called \textsl{the SD Weyl spinor}. It is an undotted 4-index spinor  totally symmetric in all indices, $C_{ABCD} = C_{(ABCD)}$. It is well known that the SD Weyl spinor can be decomposed into a symmetric product of 1-index spinors, $C_{ABCD} = a_{(A} b_{B} c_{C} d_{D)}$. Spinors $a_{A}$, $b_{B}$, $c_{C}$ and $d_{D}$ are called \textsl{undotted Penrose spinors}. The spinors $a_{A}$, $b_{B}$, $c_{C}$ and $d_{D}$ are mutually linearly independent, in general. In such a case the SD Weyl spinor is \textsl{algebraically general}. Otherwise, it is \textsl{algebraically special}. \textsl{The Petrov-Penrose classification} consists in distinguishing the possible coincidences between the spinors $a_{A}$, $b_{B}$, $c_{C}$ and $d_{D}$. More precisely, we have
\begin{eqnarray}
\textrm{type [I]} : \ && \ C_{ABCD} = a_{(A} b_{B} c_{C} d_{D)}
\\ \nonumber
\textrm{type [II]}: \ && \ C_{ABCD} = a_{(A} a_{B} b_{C} c_{D)}
\\ \nonumber
\textrm{type [D]}: \ && \  C_{ABCD} = a_{(A} a_{B} b_{C} b_{D)}
\\ \nonumber
\textrm{type [III]}: \ && \  C_{ABCD} = a_{(A} a_{B} a_{C} b_{D)}
\\ \nonumber
\textrm{type [N]}: \ && \  C_{ABCD} = a_{A} a_{B} a_{C} a_{D}
\\ \nonumber
\textrm{type [O]}: \ && \  C_{ABCD} = 0 
\end{eqnarray} 

The Petrov-Penrose classification of the ASD Weyl spinor $C_{\dot{A}\dot{B}\dot{C}\dot{D}}$ is analogous except the fact that the ASD Weyl spinor is decomposable into a symmetric product of 1-index \textsl{dotted Penrose spinors}. In complex spaces both SD and ASD Weyl spinors are independent. Thus, a commonly accepted symbol in which a type of conformal curvature is encoded reads 
\begin{equation}
\label{symbol_typy}
[\textrm{SD}_{\textrm{type}}] \otimes [\textrm{ASD}_{\textrm{type}}]
\end{equation}
where $\textrm{SD}_{\textrm{type}}, \textrm{ASD}_{\textrm{type}} = \{ \textrm{I}, \textrm{II}, \textrm{D}, \textrm{III}, \textrm{N}, \textrm{O}  \}$.

The Petrov-Penrose classification of the SD and ASD Weyl spinors in real neutral spaces is a little more complicated (compare, e.g., the Table 1 of \cite{Chudecki_Ref_1}).

The Petrov-Penrose classification is the first of three criteria which we use for classification of pHE-spaces.

\subsection{Congruences of null strings}

Geometrical structures which are fundamental in our approach to pHE-spaces are \textsl{congruences (foliations) of SD (or ASD) null strings}. Below we present only a brief introduction to these structures, for more details see, e.g., \cite{Plebanski_Rozga, Chudecki_notes_on_congr}.

To understand what a congruence of SD null strings is, we introduce first a 2D-holomorphic distribution $\{ m_{A} a_{\dot{B}}, m_{A} b_{\dot{B}} \}$ where $m_{A}$ is an undotted spinor, while $a_{\dot{A}}$ and $b_{\dot{A}}$ are dotted spinors such that $a_{\dot{A}} b^{\dot{A}} \ne 0$. The distribution  $\{ m_{A} a_{\dot{B}}, m_{A} b_{\dot{B}} \}$ is completely integrable in the Frobenius sense, if and only if the spinor $m_{A}$ satisfies the set of equations
\begin{equation}
\label{SD_null_strings_equations}
m^{A} \nabla_{B\dot{B}} m_{A} = m_{B} M_{\dot{B}}
\end{equation}

Eqs. (\ref{SD_null_strings_equations}) we call \textsl{SD null string equations}. If a spinor $m_{A}$ satisfies (\ref{SD_null_strings_equations}) we say that it \textsl{generates a congruence of SD null strings}. Integral manifolds of the distribution $\{ m_{A} a_{\dot{B}}, m_{A} b_{\dot{A}} \}$ are 2D-holomorphic surfaces called \textsl{SD null strings}. A null string is a totally null surface (in the sense that any vector tangent to it is null). A \textsl{congruence (foliation) of SD null strings} is a family of such surfaces\footnote{In Penrose terminology, a SD (ASD) null string is called \textsl{$\alpha$-surface} (\textsl{$\beta$-surface}).}. 

Let us return to Eqs. (\ref{SD_null_strings_equations}). The right hand side of (\ref{SD_null_strings_equations}) is proportional to a dotted spinor $M_{\dot{B}}$ which is called \textsl{an expansion of a congruence}. Its geometrical meaning is fundamental. If $M_{\dot{B}}=0$ holds true then the distribution $\{ m_{A} a_{\dot{B}}, m_{A} b_{\dot{A}} \}$ is parallely propagated. Such congruences were analyzed in a distinguished work by A.G. Walker \cite{Walker} for the first time. Spaces which are equipped with such congruences are called \textsl{Walker spaces}. Congruences for which $M_{\dot{B}}=0$ we call \textsl{nonexpanding} while congruences for which $M_{\dot{B}} \ne 0$ we call \textsl{expanding}. This terminology could be a little misleading, however, it was originally used in the seventies by the pioneers of the $\mathcal{HH}$-spaces theory (J.F. Pleba\'nski, I. Robinson, J.D. Finley III, S. Hacyan, M. Przanowski and others). It was also used by the Author of this paper in his previous works. Hence, it will be used in the rest of the current work. 

A congruence of ASD null strings is defined analogously like a SD one. It is generated by a dotted spinor $m_{\dot{A}}$ with expansion given by an undotted spinor $M_{A}$. \textsl{ASD null string equations} take the form
\begin{equation}
\label{ASD_null_strings_equations}
m^{\dot{A}} \nabla_{B \dot{B}} m_{\dot{A}} = m_{\dot{B}} M_{B}
\end{equation}

Following \cite{Chudecki_Ref_1} we introduce abbreviations
\begin{eqnarray}
\nonumber
\mathcal{C} &-& \textrm{a congruence of null strings}
\\ \nonumber
\mathcal{C}s &-& \textrm{congruences of null strings}
\\ \nonumber
\mathcal{C}_{m^{A}} &-& \textrm{a SD congruence of null strings generated by a spinor } m^{A}
\\ \nonumber
\mathcal{C}_{m^{\dot{A}}} &-& \textrm{an ASD congruence of null strings generated by a spinor } m^{\dot{A}}
\end{eqnarray}

The famous \textsl{complex Goldberg-Sachs theorem} says \cite{Plebanski_Hacyan_Sachs, Przanowski_classification}
\begin{Twierdzenie}
\label{Twierdzenie_Goldberga_Sachsa}
If $(\mathcal{M}, ds^{2})$ is a complex 4-dimensional Einstein space then the following statements are equivalent
\begin{enumerate}[label=(\roman*)]
\item $(\mathcal{M}, ds^{2})$ admits a congruence of self-dual (anti-self-dual) null strings generated by a spinor $m_{A}$ ($m_{\dot{A}}$)
\item self-dual (anti-self-dual) Weyl spinor is algebraically degenerate and $m_{A}$ ($m_{\dot{A}}$) is a multiple undotted (dotted) Penrose spinor
\end{enumerate}
\end{Twierdzenie}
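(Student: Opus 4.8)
The plan is to prove the complex Goldberg-Sachs theorem by a direct spinorial computation, exploiting the Einstein condition to relate the derivatives of the propagating spinor to the Weyl spinor. The two implications go in opposite directions, so I would treat them separately, though both rest on the same underlying identity. I will carry out the argument for the self-dual case only; the anti-self-dual case is obtained by dotting all undotted indices and vice versa, a symmetry of the formalism in Infeld--Van der Waerden--Pleba\'nski notation.

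\textbf{From (i) to (ii).} Suppose $m_A$ satisfies the null string equation $m^A \nabla_{B\dot B} m_A = m_B M_{\dot B}$. The first step is to differentiate this relation once more, applying $\nabla_{C\dot C}$ and antisymmetrizing appropriately to bring in the curvature through the Ricci/Bianchi commutator $\nabla_{A\dot A}\nabla_{B\dot B} - \nabla_{B\dot B}\nabla_{A\dot A}$ acting on $m_A$. Decomposing that commutator into its self-dual and anti-self-dual parts produces, on the one hand, a term built from the self-dual Weyl spinor $C_{ABCD}$ contracted with $m$'s, and on the other hand terms involving the traceless Ricci spinor and the scalar curvature. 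Here the Einstein hypothesis is essential: it kills the traceless Ricci spinor and makes the scalar curvature covariantly constant, so the only curvature obstruction that survives is $C_{ABCD} m^B m^C m^D$. The integrability already assumed forces the left-hand side of the twice-differentiated identity to be consistent only if $C_{ABCD}\, m^A m^B m^C = 0$ (the precise contraction pattern is what needs to be checked), which is exactly the statement that $m_A$ is a multiple Penrose spinor and hence the self-dual Weyl spinor is algebraically special.

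\textbf{From (ii) to (i).} Conversely, assume $C_{ABCD} = m_{(A} m_B \beta_C \gamma_{D)}$, so $m_A$ is at least a double Penrose spinor. The goal is to show $m_A$ can be chosen (up to scaling) to satisfy the null string equation. The natural approach is to write $m^A \nabla_{B\dot B} m_A = m_B M_{\dot B} + \ell_B N_{\dot B}$ for some spinor $\ell_B$ not proportional to $m_B$ and some coefficient spinors $M_{\dot B}$, $N_{\dot B}$, and to prove that the \emph{obstruction} $\ell_B N_{\dot B}$ vanishes. Feeding this ansatz into the second Bianchi identity contracted suitably with $m$'s, and using again that the Einstein condition reduces the curvature content to the Weyl spinor, one gets an algebraic equation whose only solution (given the degeneracy $C_{ABCD} m^B m^C = 0$) is $N_{\dot B} = 0$. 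One must also verify that a rescaling of $m_A$ can absorb any residual freedom so that the equation holds in the stated form rather than merely projectively; this is a standard manoeuvre.

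\textbf{The main obstacle} I anticipate is bookkeeping: getting the index contractions and symmetrizations exactly right when commuting covariant derivatives and splitting into self-dual / anti-self-dual pieces, so that the Weyl-spinor term isolates cleanly and one genuinely lands on the contraction $C_{ABCD} m^A m^B m^C = 0$ rather than a weaker or stronger condition. The conceptual content is entirely in the Einstein condition eliminating the Ricci terms; once that is in place the rest is the integrability analysis of an overdetermined first-order spinor equation, which is classical. I would also want to be careful about whether a single differentiation already suffices or whether one genuinely needs the Bianchi identity (i.e.\ $\nabla C$) rather than just the curvature commutator --- in the Einstein case the Bianchi identity for the Weyl spinor, $\nabla^{A}{}_{\dot A} C_{ABCD} = 0$, is the cleaner route and I would build the proof around it.
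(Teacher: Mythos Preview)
The paper does not prove this theorem at all: it is stated as the \textsl{complex Goldberg--Sachs theorem} and attributed to the literature via the citations \cite{Plebanski_Hacyan_Sachs, Przanowski_classification}, with no argument given. So there is nothing in the paper to compare your proposal against.

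That said, your outline is a reasonable sketch of the classical spinorial proof and is broadly in the spirit of the references the paper cites. A couple of cautions if you want to flesh it out. For (i) $\Rightarrow$ (ii), the cleanest route is indeed to apply a second covariant derivative to the null string equation, commute derivatives, and use the spinor Ricci identity; the Einstein condition $C_{AB\dot C\dot D}=0$ kills the mixed-curvature term and one is left with $C_{ABCD}\,m^{B}m^{C}m^{D}=0$, which is exactly the repeated-root condition. For (ii) $\Rightarrow$ (i), you correctly identify the contracted Bianchi identity $\nabla^{A}{}_{\dot A}C_{ABCD}=0$ (which holds in Einstein spaces) as the engine; contracting it with $m^{B}m^{C}m^{D}$ and using $C_{ABCD}m^{B}m^{C}m^{D}=0$ forces the obstruction spinor $N_{\dot B}$ to vanish. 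Your instinct that the Bianchi identity, rather than a mere curvature commutator, is needed for this direction is right. The ``rescaling to absorb residual freedom'' you mention is not actually required: the null string equation $m^{A}\nabla_{B\dot B}m_{A}=m_{B}M_{\dot B}$ is already scale-covariant in $m_{A}$, so once $N_{\dot B}=0$ you are done.
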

If a 4D-space admits two SD $\mathcal{C}s$, say $\mathcal{C}_{m^{A}}$ and $\mathcal{C}_{n^{A}}$, such that $m_{A} n^{A} \ne 0$, then such congruences are called \textsl{complementary}. Complementary congruences have necessarily the same duality. Hence, we arrive at
\begin{Wniosek}
\label{wniosek_o_type_d_Ein}
Let $(\mathcal{M}, ds^{2})$ be a complex 4-dimensional Einstein space admitting two complementary self-dual (anti-self-dual) congruences of null strings generated by spinors $m_{A}$ and $n_{A}$ ($m_{\dot{A}}$ and $n_{\dot{A}}$). Then the self-dual (anti-self-dual) Weyl spinor is type-[D], $C_{ABCD} = m_{(A} m_{B} n_{C} n_{D)}$ ($C_{\dot{A}\dot{B}\dot{C}\dot{D}} = m_{(\dot{A}} m_{\dot{B}} n_{\dot{C}} n_{\dot{D})}$).
\end{Wniosek}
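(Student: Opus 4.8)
The plan is to reduce the statement to the complex Goldberg--Sachs theorem (Theorem \ref{Twierdzenie_Goldberga_Sachsa}) applied twice, once for each of the two complementary congruences. First I would invoke Theorem \ref{Twierdzenie_Goldberga_Sachsa} for the congruence $\mathcal{C}_{m^{A}}$: since $(\mathcal{M}, ds^{2})$ is Einstein and admits $\mathcal{C}_{m^{A}}$, the SD Weyl spinor is algebraically degenerate and $m_{A}$ is a multiple undotted Penrose spinor, so that in the decomposition $C_{ABCD} = a_{(A} b_{B} c_{C} d_{D)}$ the direction $[m_{A}]$ occurs with multiplicity at least two. Repeating the argument for $\mathcal{C}_{n^{A}}$ shows that $[n_{A}]$ likewise occurs with multiplicity at least two.

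Next I would count multiplicities. Because $m_{A} n^{A} \ne 0$, the Penrose directions $[m_{A}]$ and $[n_{A}]$ are distinct points on the projective line of principal spinors. A totally symmetric four-index spinor has exactly four principal directions counted with multiplicity, so two distinct directions each of multiplicity at least two must each have multiplicity exactly two and together exhaust the list. Hence $C_{ABCD}$ is proportional to $m_{(A} m_{B} n_{C} n_{D)}$, which is precisely the canonical form of a type-[D] SD Weyl spinor. The proportionality factor can then be absorbed into the generating spinors $m_{A}$, $n_{A}$ (which are defined only up to rescaling, and the condition $m_{A}n^{A}\ne 0$ is preserved), yielding the stated normalization $C_{ABCD} = m_{(A} m_{B} n_{C} n_{D)}$.

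The ASD case is completely dual: one replaces undotted indices and spinors by dotted ones, uses the anti-self-dual half of Theorem \ref{Twierdzenie_Goldberga_Sachsa}, and feeds it the pair $\mathcal{C}_{m^{\dot{A}}}$, $\mathcal{C}_{n^{\dot{A}}}$. I do not anticipate a serious obstacle; the only points needing a word of care are the degenerate possibility $C_{ABCD}=0$ (type [O]), which is either excluded by hypothesis or viewed as a limiting case of type-[D], and the fact --- recorded in the text immediately preceding the corollary --- that complementary congruences necessarily share the same duality, which is exactly what allows both hypotheses to constrain the same (self-dual) half of the Weyl spinor rather than the two halves separately.
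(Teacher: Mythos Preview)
Your proposal is correct and matches the paper's approach: the corollary is stated immediately after Theorem \ref{Twierdzenie_Goldberga_Sachsa} with only the word ``Hence'', so the paper treats it as an immediate consequence of applying the complex Goldberg--Sachs theorem to each of the two complementary congruences, exactly as you do. Your write-up simply makes explicit the multiplicity count that the paper leaves to the reader.
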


Properties of $\mathcal{C}s$ is the second of three criteria which we use for classification of pHE-spaces.

\subsection{Para-Hermite spaces}

According to \cite{Przanowski_Formanski_Chudecki} we have the following definition
\begin{Definicja}
\label{Definicja_para_Hermite}
A complex para-Hermite space is a triple $(\mathcal{M}, ds^{2}, K)$ where $\mathcal{M}$ is a 4-dimensional complex manifold, $ds^{2}$ is a nondegenerate holomorphic metric and $K: T \mathcal{M} \rightarrow T \mathcal{M}$ is an endomorphism such that
\begin{eqnarray}
\nonumber
(i) && K^{2} = id_{T \mathcal{M}}, \ \ \textrm{(K is paracomplex)}
\\ \nonumber
(ii) && \pm 1 \textrm{ eigenvalues of } K \textrm{ have rank 2}
\\ \nonumber
(iii) && ds^{2}(KX, KY) = - ds^{2}(X,Y) \ \textrm{for all } X,Y \in T \mathcal{M}, \ \ \textrm{(K is metric compatible)}
\\ \nonumber
(iv) && \textrm{the Nijenhuis tensor } N: T \mathcal{M} \otimes T \mathcal{M} \rightarrow T \mathcal{M},
\\ \nonumber
&& N(X,Y) := K[X, KY] + K[KX, Y] -[KX, KY] - [X,Y] 
\\ \nonumber
&& \textrm{ vanishes, i.e., } N(X,Y)=0 \textrm{ for all } X,Y \in T \mathcal{M}
\end{eqnarray}
\end{Definicja}
\begin{Definicja}
\label{Definicja_para_Kahler}
A complex para-Hermite space is called complex para-K{\"a}hler space if a para-K{\"a}hler 2-form $\rho$ defined as follows
\begin{equation}
\rho (X,Y) := ds^{2} (KX, Y)
\end{equation}
is closed, $d \rho =0$.
\end{Definicja}
Definitions \ref{Definicja_para_Hermite} and \ref{Definicja_para_Kahler} can be translated into the formalism of $\mathcal{C}s$. It was proved in \cite{Przanowski_Formanski_Chudecki} (see also references therein) that the following statements are equivalent
\begin{itemize}
\item $(\mathcal{M}, ds^{2}, K)$ is a complex pH-space (pK-space)
\item $(\mathcal{M}, ds^{2}, K)$ is equipped with two complementary expanding (nonexpanding) congruences of null strings 
\end{itemize}
For pHE-spaces and pKE-spaces Corollary \ref{wniosek_o_type_d_Ein} holds true. Thus, if we choose an orientation in such a manner that both $\mathcal{C}s$ are SD, then we conclude that complex pHE-spaces (pKE-spaces) are spaces of the types
\begin{equation}
\label{symbol_para_Hermite}
[\textrm{D}]^{ee} \otimes [\textrm{any}], \ \ \ ([\textrm{D}]^{nn} \otimes [\textrm{any}])
\end{equation}
where the superscript $e$ ($n$) means that $\mathcal{C}$ is expanding (nonexpanding). The double superscript $ee$ ($nn$) means that both SD $\mathcal{C}s$ are expanding (nonexpanding). Hence, the SD Weyl spinor in pHE-spaces is always of the type $[\textrm{D}]^{ee}$. Thus, by \textsl{algebraically general (special) pHE-spaces} we understand spaces for which the ASD Weyl spinor is algebraically general (special).

For further purposes we establish that "the first" SD $\mathcal{C}$ is generated by a spinor $m_{A}$ ($\mathcal{C}_{m^{A}}$, expansion $M^{\dot{A}}$) and "the second" SD $\mathcal{C}$ is generated by a spinor $n_{A}$ ($\mathcal{C}_{n^{A}}$, expansion $N^{\dot{A}}$), $m_{A} n^{A} \ne 0$. 

Similarities between the definitions of complex and real pH-spaces (compare \cite{Bor_Makhmali_Nurowski}) justify why in what follows we omit the adjective "complex".

\subsection{Congruences of null geodesics}

Let us assume an algebraic degeneracy of the ASD Weyl spinor. According to Theorem \ref{Twierdzenie_Goldberga_Sachsa} it is equivalent to an existence of an ASD $\mathcal{C}$. Let it be generated by a spinor $m_{\dot{A}}$. Hence, we denote it by $\mathcal{C}_{m^{\dot{A}}}$. Its expansion is given by a spinor $M^{A}$. We refer to this $\mathcal{C}$ as to "the third" congruence.

SD and ASD $\mathcal{C}s$ always intersect and such an intersection constitutes a congruence of null geodesics. Before we proceed further, we introduce the following abbreviations
\begin{eqnarray}
\nonumber
\mathcal{I} &-& \textrm{a congruence of null geodesics } (\mathcal{I} \textrm{ like } \mathcal{I}\textrm{ntersection})
\\ \nonumber
\mathcal{I}s &-& \textrm{congruences of null geodesics}
\\ \nonumber
\mathcal{I}(\mathcal{C}_{m^{A}},\mathcal{C}_{m^{\dot{A}}}) &-& \textrm{a congruence of null geodesics which is an intersection of } \mathcal{C}_{m^{A}} \textrm{ and } \mathcal{C}_{m^{\dot{A}}}
\end{eqnarray}

$\mathcal{C}_{m^{\dot{A}}}$ intersects with $\mathcal{C}_{m^{A}}$ and with $\mathcal{C}_{n^{A}}$. Hence, we have two different $\mathcal{I}s$ (see Figure \ref{Congruences}). We denote them by 
\begin{equation}
\mathcal{I}_{1} := \mathcal{I} (\mathcal{C}_{m^{A}}, \mathcal{C}_{m^{\dot{A}}}), \ \ 
\mathcal{I}_{3} := \mathcal{I} (\mathcal{C}_{n^{A}}, \mathcal{C}_{m^{\dot{A}}})
\end{equation}

\begin{figure}[ht]
\begin{center}
\includegraphics[scale=0.9]{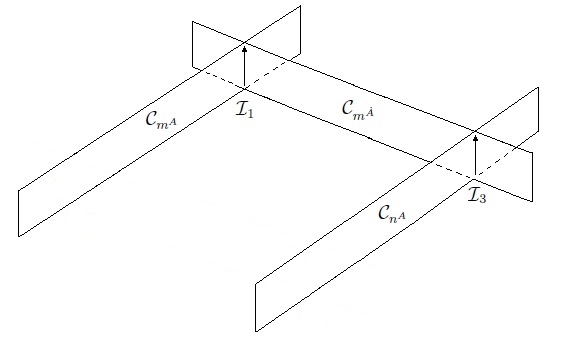}
\caption {Congruences of null strings and congruences of null geodesics in algebraically degenerate para-Hermite Einstein spaces.}
\label{Congruences}
\end{center}
\end{figure}

To classify $\mathcal{I}s$ we proceed analogously like in the general theory of relativity where congruences of null geodesics are classified according to their optical properties. Let us define \textsl{a complex expansion}\footnote{Note that the expansion of a congruence of null strings and the expansion of a congruence of null geodesics are  different concepts.} $\theta$ and \textsl{a complex twist} $\varrho$ of $\mathcal{I} (\mathcal{C}_{m^{A}}, \mathcal{C}_{m^{\dot{A}}})$ by the formulas
\begin{equation}
\label{definicje_expansji_i_twistu}
2 \theta := \nabla^{a} K_{a}, \ 2 \varrho^{2} := \nabla_{[a} K_{b]} \, \nabla^{a} K^{b}  
\end{equation}
where $K_{a} \sim m_{A} m_{\dot{A}}$ is a null vector field along the $\mathcal{I} (\mathcal{C}_{m^{A}}, \mathcal{C}_{m^{\dot{A}}})$. The following relations hold true (compare \cite{Chudecki_Ref_1})
\begin{eqnarray}
\label{I_properties_of_1and3}
\mathcal{I}_{1}: \ && \ \theta_{1} \sim m_{A} M^{A} + m_{\dot{A}} M^{\dot{A}}, \ \varrho_{1} \sim m_{A} M^{A} - m_{\dot{A}} M^{\dot{A}},
\\ \nonumber
\mathcal{I}_{3}: \ && \ \theta_{3} \sim n_{A} M^{A} + m_{\dot{A}} N^{\dot{A}}, \ \varrho_{3} \sim n_{A} M^{A} - m_{\dot{A}} N^{\dot{A}}
\end{eqnarray}
Properties of $\mathcal{I}s$ are the third of three criteria which we use for classification of pHE spaces. 

For algebraically degenerate pHE-spaces we introduce the following symbol
\begin{equation}
\label{symbol_para_Hermite_2}
\{ [\textrm{D}]^{ee} \otimes [\textrm{deg}]^{j}, [p(\mathcal{I}_{1}), p(\mathcal{I}_{3})] \}
\end{equation}
where $j= \{ n,e \}$; $p(\mathcal{I}_{i})$ stands for optical properties of $\mathcal{I}_{i}$, $p(\mathcal{I}_{i}) = \{ ++, +-, -+, -- \}$ and
\begin{eqnarray}
\label{definicja_wlasnosci_kongruencji_zerowych_geodezyjnych}
\textrm{if }\theta \ne 0, \varrho \ne 0: && p(\mathcal{I}) = [++]
\\ \nonumber
\textrm{if }\theta \ne 0, \varrho = 0: && p(\mathcal{I}) = [+-]
\\ \nonumber
\textrm{if }\theta = 0, \varrho \ne 0: && p(\mathcal{I}) = [-+]
\\ \nonumber
\textrm{if }\theta = 0, \varrho = 0: && p(\mathcal{I}) = [--]
\end{eqnarray}

The symbol (\ref{symbol_para_Hermite_2}) fails if the ASD Weyl spinor is type-[D] and there exists one more ASD $\mathcal{C}$. Let this congruence be generated by a spinor $n_{\dot{A}}$, $m_{\dot{A}} n^{\dot{A}} \ne 0$, with an expansion given by $N^{A}$. We denote this congruence by $\mathcal{C}_{n^{\dot{A}}}$ and we refer to it as to "the fourth" $\mathcal{C}$ (see Figure \ref{Congruences_4}). 
\begin{figure}[ht]
\begin{center}
\includegraphics[scale=0.9]{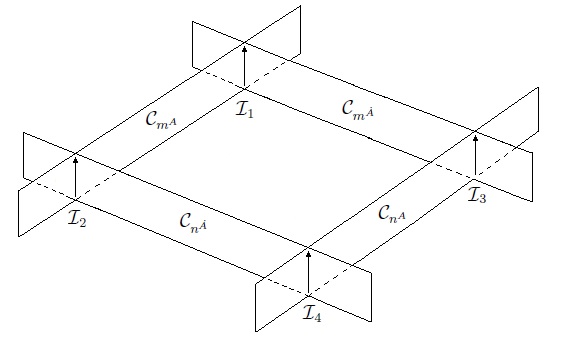}
\caption {Congruences of null strings and congruences of null geodesics in type-[D] para-Hermite Einstein spaces.}
\label{Congruences_4}
\end{center}
\end{figure}

Hence, we have two more $\mathcal{I}s$, say 
\begin{equation}
\mathcal{I}_{2} := \mathcal{I} (\mathcal{C}_{m^{A}}, \mathcal{C}_{n^{\dot{A}}}), \ \ 
\mathcal{I}_{4} := \mathcal{I} (\mathcal{C}_{n^{A}}, \mathcal{C}_{n^{\dot{A}}})
\end{equation}
with optical properties
\begin{eqnarray}
\mathcal{I}_{2}: \ && \ \theta_{2} \sim m_{A} N^{A} + n_{\dot{A}} M^{\dot{A}}, \ \varrho_{2} \sim m_{A} N^{A} - n_{\dot{A}} M^{\dot{A}},
\\ \nonumber
\mathcal{I}_{4}: \ && \ \theta_{4} \sim n_{A} N^{A} + n_{\dot{A}} N^{\dot{A}}, \ \varrho_{4} \sim n_{A} N^{A} - n_{\dot{A}} N^{\dot{A}}
\end{eqnarray}
Consequently, properties of type-[D] pHE-spaces can be gathered in the symbol
\begin{equation}
\label{symbol_para_Hermite_3}
\{ [\textrm{D}]^{ee} \otimes [\textrm{D}]^{jj}, [p(\mathcal{I}_{1}), p(\mathcal{I}_{2}), p(\mathcal{I}_{3}), p(\mathcal{I}_{4})] \}
\end{equation}
Note, that ASD Weyl spinor can be of the type $[\textrm{D}]^{ee}$ or $[\textrm{D}]^{nn}$. The "mixed" possibility $[\textrm{D}]^{en}$ in pHE-spaces cannot appear (see \cite{Przanowski_Formanski_Chudecki}, Section 3.3), unless the ASD Weyl spinor vanishes and a space is of the type $[\textrm{D}]^{ee} \otimes [\textrm{O}]$.

With all these subtleties, abbreviations and notations explained, we are ready to move on to the next step. In the next Section we recall the general structure of hyperheavenly spaces of the types $[\textrm{D}]^{ee} \otimes [\textrm{any}]$ and we specify it to the case $[\textrm{D}]^{ee} \otimes [\textrm{deg}]^{n}$.


\renewcommand{\arraystretch}{1.5}
\setlength\arraycolsep{2pt}
\setcounter{equation}{0}

\section{Algebraically degenerate para-Hermite Einstein spaces}
\label{section_HH_spaces}
\setcounter{equation}{0}

\subsection{Hyperheavenly spaces of the types $[\textrm{D}]^{ee} \otimes [\textrm{any}]$}

\subsubsection{The metric}
\label{subsubsekcja_metric}

Let us remind the definition of a \textsl{hyperheavenly space ($\mathcal{HH}$-space)} with cosmological constant $\Lambda$ \cite{Plebanski_Robinson_2}.
\begin{Definicja}
A hyperheavenly space ($\mathcal{HH}$-space) with cosmological constant $\Lambda$ is a pair $(\mathcal{M}, ds^{2})$ where $\mathcal{M}$ is a 4-dimensional complex analytic differential manifold and $ds^{2}$ is a holomorphic metric satisfying the vacuum Einstein equations with cosmological constant $\Lambda$ and such that the self-dual part of the Weyl tensor is algebraically degenerate.
\end{Definicja}
Thus, $\mathcal{HH}$-spaces are spaces of the types $[\textrm{deg}]^{e} \otimes [\textrm{any}]$ (which are called \textsl{expanding $\mathcal{HH}$-spaces}) or $[\textrm{deg}]^{n} \otimes [\textrm{any}]$ (\textsl{nonexpanding $\mathcal{HH}$-spaces}), see \cite{Plebanski_Robinson_2,Finley_Plebanski_intrinsic}. By virtue of (\ref{symbol_para_Hermite}) we find that pHE-spaces (pKE-spaces) are a special class of expanding (nonexpanding) $\mathcal{HH}$-spaces. A specification of expanding $\mathcal{HH}$-spaces to be of the types $[\textrm{D}]^{ee} \otimes [\textrm{any}]$ is straightforward so we omit all details (the Reader who is interested in this procedure should study Section 2 of the paper \cite{Chudecki_homothetic}). Eventually we find that a metric of any pHE-space can be brought to the form
\begin{equation}
\label{metryka_HH_ekspandujaca_D_any}
\frac{1}{2} ds^{2} = x^{-2} \left( dqdy-dpdx + \mathcal{A} \, dp^{2} - 2 \mathcal{Q} \, dpdq + \mathcal{B} \, dq^{2} \right)  = e^{1}e^{2} + e^{3}e^{4}
\end{equation}
where $(p,q,x,y)$ are local coordinates and $(e^{1},e^{2},e^{3},e^{4})$ is a null tetrad called \textsl{Pleba\'nski tetrad}:
\begin{equation}
\label{tetrada_Plebanskiego__HH_ekspandujaca}
e^{3} = x^{-2} dp, \ e^{1} = -x^{-2}  dq, \ e^{4} = -dx + \mathcal{A} \, dp - \mathcal{Q} \, dq,\ e^{2} = -dy + \mathcal{Q} \, dp - \mathcal{B} \, dq
\end{equation}
Functions $\mathcal{A} = \mathcal{A} (q,x,y)$, $\mathcal{Q} = \mathcal{Q} (q,x,y)$ and $\mathcal{B} = \mathcal{B} (q,x,y)$ are defined as follows
\begin{subequations}
\label{zwiazek_miedzy_Qab_i_W}
\begin{eqnarray}
\label{zwiazek_miedzy_w_i_A}
\mathcal{A} &:=& -x W_{yy} + \mu_{0} x^{3} + \frac{\Lambda}{6}
\\
\label{zwiazek_miedzy_w_i_Q}
\mathcal{Q} &:=& xW_{xy}-W_{y}
\\ 
\label{zwiazek_miedzy_w_i_B}
\mathcal{B} &:=& -xW_{xx} +2W_{x}
\end{eqnarray}
\end{subequations}
where $\mu_{0} \ne 0$ is a nonzero constant and $W=W(q,x,y)$ is a holomorphic function called \textsl{the key function} which satisfies \textsl{the expanding hyperheavenly equation}
\begin{eqnarray}
\label{HH_equation_ogolne}
W_{xx}W_{yy} - W_{xy}^{2} + \frac{2}{x} (W_{y}W_{xy} - W_{x}W_{yy}) +\frac{1}{x} W_{qy} &&
\\ \nonumber
 - \mu_{0} (x^{2} W_{xx} - 3x W_{x} + 3W) - \frac{\Lambda}{6x} W_{xx}&=&0
\end{eqnarray}
\begin{Uwaga} 
\normalfont
The $\mathcal{HH}$-equation (\ref{HH_equation_ogolne}) follows from the vacuum Einstein equations with cosmological constant $\Lambda$:
\begin{equation}
C_{ab}=0 \ \Longleftrightarrow \ C_{AB \dot{C}\dot{D}}=0, \ \mathcal{R}=- 4 \Lambda
\end{equation}
where $C_{ab}$ is the traceless Ricci tensor, $C_{AB \dot{C}\dot{D}}$ is a spinorial image of the traceless Ricci tensor and $\mathcal{R}$ is the curvature scalar. For further purposes we recall the exact form of the equations $C_{12 \dot{C}\dot{D}}=0$ and $\mathcal{R}=-4 \Lambda$:
\begin{subequations}
\label{Einstein_middle_triplet}
\begin{eqnarray}
\label{Einstein_middle_triplet_0}
\mathcal{R} = -4 \Lambda & \Longrightarrow & 2\Lambda = x^{2} (\mathcal{A}_{xx} + 2 \mathcal{Q}_{xy} + \mathcal{B}_{yy}) - 6x^{3} \partial_{x} (x^{-2} \mathcal{A} ) -6 x^{3} \partial_{y} (x^{-2} \mathcal{Q}) \ \ \ 
\\ 
\label{Einstein_middle_triplet_1}
C_{12 \dot{1}\dot{1}} =0 & \Longrightarrow &   x^{2} (\mathcal{B}_{xy}+\mathcal{Q}_{xx}) - 2x \mathcal{Q}_{x} = 0
\\
\label{Einstein_middle_triplet_2}
C_{12 \dot{1}\dot{2}} =0 & \Longrightarrow &   x^{2} (\mathcal{B}_{yy}-\mathcal{A}_{xx}) - 2x ( \mathcal{Q}_{y} - \mathcal{A}_{x})= 0
\\
\label{Einstein_middle_triplet_3}
C_{12 \dot{2}\dot{2}} =0 & \Longrightarrow &  x^{2} (\mathcal{Q}_{yy}+\mathcal{A}_{xy}) - 2x \mathcal{A}_{y} = 0
\end{eqnarray}
\end{subequations}
It can be quickly checked by a direct substitution that with (\ref{zwiazek_miedzy_Qab_i_W}) assumed Eqs. (\ref{Einstein_middle_triplet}) are identically satisfied. Hence, at the first glance there is no need to list (\ref{Einstein_middle_triplet}) here. The reason for this apparent redundancy will be explained in Section \ref{Spaces_Dee_x_degn}.
\end{Uwaga}

In Pleba\'nski tetrad the SD curvature coefficients $C^{(i)}$ take the form
\begin{equation}
C^{(1)}=C^{(2)}=C^{(4)}=C^{(5)}=0, \ C^{(3)} = -2 \mu_{0} x^{3}
\end{equation}
what justifies why the constant $\mu_{0}$ must be nonzero, otherwise a space becomes \textsl{a heavenly space}. The ASD conformal curvature coefficients $\dot{C}^{(i)}$ read
\begin{eqnarray}
\label{ASD_conformal_curvature}
\frac{1}{2x^{3}} \dot{C}^{(5)} &=&  \partial_{x}^{4} \left( W - \frac{\mu_{0}}{4} x^{2} y^{2} \right)
\\ \nonumber
\frac{1}{2x^{3}} \dot{C}^{(4)} &=&  \partial_{x}^{3}\partial_{y} \left( W - \frac{\mu_{0}}{4} x^{2} y^{2} \right)
\\ \nonumber
\frac{1}{2x^{3}} \dot{C}^{(3)} &=&  \partial_{x}^{2}\partial_{y}^{2} \left( W - \frac{\mu_{0}}{4} x^{2} y^{2} \right)
\\ \nonumber
\frac{1}{2x^{3}} \dot{C}^{(2)} &=&  \partial_{x}\partial_{y}^{3} \left( W - \frac{\mu_{0}}{4} x^{2} y^{2} \right)
\\ \nonumber
\frac{1}{2x^{3}} \dot{C}^{(1)} &=& \partial_{y}^{4} \left( W - \frac{\mu_{0}}{4} x^{2} y^{2} \right)
\end{eqnarray}

\begin{Uwaga}
\normalfont
At this point an important remark is needed. Readers who are familiar with the theory of expanding $\mathcal{HH}$-spaces  could be a little astonished after reading Section \ref{subsubsekcja_metric}. The vast majority of articles devoted to expanding $\mathcal{HH}$-spaces use two sets of variables. The first set are standard variables $(q^{\dot{A}}, p^{\dot{B}})$ written in a spinor-like convention, where $q^{\dot{A}}$ are coordinates which label SD null strings while $p^{\dot{B}}$ are coordinates on leafs of a SD congruence of null strings. The second set of variables $(w,t,\eta,\phi)$ are called \textsl{Plebański-Robinson-Finley} coordinates. The relation between  $(q^{\dot{A}}, p^{\dot{B}})$ and $(w,t,\eta,\phi)$ reads
\begin{equation}
\phi = J_{\dot{A}} p^{\dot{A}}, \ \eta = K^{\dot{A}} p_{\dot{A}}, \ w=J_{\dot{A}} q^{\dot{A}}, \ t= K^{\dot{A}} q_{\dot{A}}
\end{equation}
where $(J_{\dot{A}}, K_{\dot{B}})$, $ K^{\dot{A}} J_{\dot{A}} = \tau =\textrm{const} \ne 0$ is a basis of 1-index dotted spinors. The basis $(J_{\dot{A}}, K_{\dot{B}})$ has a deep geometrical meaning because it is adapted to the geometry of SD $\mathcal{C}$, namely, the spinor $J_{\dot{A}}$ is proportional to the expansion of SD $\mathcal{C}$.

Nevertheless, the basis $(J_{\dot{A}}, K_{\dot{B}})$ can be brought to the form $J_{\dot{1}}=1$, $J_{\dot{2}}=0$, $K_{\dot{1}}=0$, $K_{\dot{2}}=-1$, $\tau=1$ without any loss of generality. Hence, $q^{\dot{A}} = (w,t)$, $p^{\dot{A}} = (\phi, \eta)$. We take an advantage of this fact and we introduce the coordinates
\begin{equation}
q:= q^{\dot{1}} = w, \ p:= q^{\dot{2}} = t, \ x:= p^{\dot{1}} = \phi, \ y := p^{\dot{2}} = \eta
\end{equation}
The coordinates $(q,p,x,y)$ are more "compatible" with coordinates used in our previous papers \cite{Chudecki_Ref_1,Chudecki_Ref_2}. We call them \textsl{the hyperheavenly coordinates}.
\end{Uwaga}

\subsubsection{Gauge freedom}

The metric (\ref{metryka_HH_ekspandujaca_D_any}) remains invariant under the transformations 
\begin{equation}
\label{gauge}
q'=q'(q), \ p'=\lambda_{0}^{-\frac{1}{2}} p + h(q), \ x'=\lambda_{0}^{-\frac{1}{2}} x, \ y' = \frac{1}{f} (\lambda_{0}^{-1} y + \lambda_{0}^{- \frac{1}{2}} h_{q} x) + \sigma (q)
\end{equation}
where $\lambda_{0} \ne 0$ is a constant, $h=h(q)$, $\sigma=\sigma(q)$ and $f=f(q) := \dfrac{dq'}{dq}$ are arbitrary functions. Transformations (\ref{gauge}) imply the transformation formula for the key function $W$
\begin{eqnarray}
\label{gauge_for_key_function}
f^{2} \lambda_{0}^{\frac{3}{2}} W' &=& W + \frac{1}{2} \mu_{0} \lambda_{0}^{\frac{1}{2}}h_{q} \left( x^{3}y + \frac{1}{2} \lambda_{0}^{\frac{1}{2}} h_{q} x^{4} \right) - \frac{1}{3} L \, x^{3} +\frac{1}{2} \frac{f_{q}}{f} \, xy
\\ \nonumber
&&   - \frac{1}{2}  \lambda_{0}^{\frac{1}{2}} f \frac{\partial}{\partial q} \left( \frac{h_{q}}{f} \right)   x^{2}
   -\frac{\Lambda}{6} \lambda_{0}^{\frac{1}{2}} h_{q} \,  y - \left( \frac{1}{2} \lambda_{0} f \sigma_{q} + \frac{\Lambda}{12} \lambda_{0} h_{q}^{2}  \right)   x - M
\end{eqnarray}
where $L=L(q)$ and $M=M(q)$ are arbitrary gauge functions restricted by the condition
\begin{equation}
\label{zwiazek_miedzy_L_i_M}
3\mu_{0} M - f^{\frac{1}{2}} \partial_{q}^{2} (f^{-\frac{1}{2}}) + \frac{\Lambda}{3} L = 0
\end{equation}
Hence, there are four arbitrary gauge functions available at our disposal, namely $f$, $\sigma$, $h$ and $L$ (or $M$, if desired) and one constant $\lambda_{0}$. The transformation formula for $\mu_{0}$ reads
\begin{equation}
\label{transformacja_na_m}
\mu_{0}' = \lambda_{0}^{\frac{3}{2}} \mu_{0}
\end{equation}
We take an advantage of (\ref{transformacja_na_m}) and in the rest of the paper we put $\mu_{0}=1$.
\begin{Uwaga}
\normalfont
The choice $\mu_{0} = 1$ leaves us with the gauge (\ref{gauge}) restricted to the condition $\lambda_{0}=1$.
\end{Uwaga}
\begin{Uwaga}
\normalfont
Note, that $\mu_{0}$ can be, in fact, brought to an arbitrary constant value, not necessarily to $1$.
\end{Uwaga}

\subsubsection{Symmetries}
\label{Symmetries_general_approach}

Homothetic\footnote{We use the following terminology: if $\chi_{0} \ne 0$ then a corresponding vector is called \textsl{proper homothetic vector}. If $\chi_{0} =0$ then a vector is called \textsl{Killing vector}.} symmetries in expanding $\mathcal{HH}$-spaces were analyzed in \cite{Chudecki_homothetic}. It was proved that the set of ten equations $\nabla_{(a} K_{b)} = \chi_{0} g_{ab}$ can be reduced to a single equation called \textsl{the master equation}. Any homothetic vector $K$ admitted by expanding $\mathcal{HH}$-spaces of the types $[\textrm{D}]^{ee} \otimes [\textrm{any}]$ can be brought to the form
\begin{equation}
K = \widetilde{a} \, \frac{\partial}{\partial q} +  \widetilde{c} \, \frac{\partial}{\partial p}  + ( \widetilde{c}_{q} x   - \widetilde{a}_{q} y   -\widetilde{\epsilon}  ) \, \frac{\partial}{\partial y}
+\frac{2}{3} \chi_{0} \left( 2p \, \frac{\partial}{\partial p}  -  x \, \frac{\partial}{\partial x}  +y \, \frac{\partial}{\partial y}     \right)
\end{equation}
where $\widetilde{a}=\widetilde{a}(q)$, $\widetilde{\epsilon} = \widetilde{\epsilon} (q)$, $\widetilde{c}=\widetilde{c} (q)$ are functions of $q$. The master equation reads
\begin{eqnarray}
\label{master_equation}
KW &=& -2 \widetilde{a}_{q} W + \frac{1}{2} \widetilde{c}_{q} y \left( \mu_{0} x^{3} - \frac{\Lambda}{3} \right) 
\\ \nonumber
&& + \widetilde{\alpha} x^{3} + \frac{1}{2} (\widetilde{a}_{qq} \, xy -\widetilde{c}_{qq} \, x^{2} ) + \frac{1}{2} \widetilde{\epsilon}_{q} \, x + \frac{1}{6 \mu_{0}} (\widetilde{a}_{qqq} - 2 \Lambda \widetilde{\alpha})
\end{eqnarray}
where $\widetilde{\alpha} = \widetilde{\alpha} (q)$ is an arbitrary function. It is also well-known that in Einstein spaces nonzero cosmological constant and proper homothetic vectors are mutually exclusive. Hence
\begin{equation}
\label{symmetry_integrability_condition}
\Lambda \chi_{0} =0
\end{equation}
Under (\ref{gauge}) functions $\widetilde{a}$, $\widetilde{c}$, $\widetilde{\epsilon}$ and $\widetilde{\alpha}$ transform as follows
\begin{subequations}
\label{symmetry_transformation}
\begin{eqnarray}
\label{symmetry_transformation_a}
\widetilde{a}' &=& f \widetilde{a}
\\ 
\label{symmetry_transformation_c}
\widetilde{c}' &=&  \widetilde{c} + h_{q} \widetilde{a} - \frac{4}{3}\chi_{0} h
\\ 
\label{symmetry_transformation_epsilon}
\widetilde{\epsilon}' &=&  f^{-1} \widetilde{\epsilon} - \sigma \left(  \widetilde{a}_{q} - \frac{2}{3} \chi_{0} + \widetilde{a} \partial_{q} (\ln (\sigma f)) \right)
\\ 
\label{symmetry_transformation_alpha}
f^{2} \widetilde{\alpha}' &=& \widetilde{\alpha} - \frac{1}{2} \mu_{0}  \sigma f (\partial_{q}  -h_{q} \partial_{p} ) \left(  \frac{4}{3} \chi_{0} p +\widetilde{c}  + h_{q} \widetilde{a} \right)
\\ \nonumber
&& -\frac{1}{2} \mu_{0}  \widetilde{\epsilon} h_{q} - \frac{1}{3} \widetilde{a} L_{q} + \frac{2}{3} L (\chi_{0} - \widetilde{a}_{q})
\end{eqnarray}
\end{subequations}
\begin{Uwaga}
\label{remark_o_pierwszym_Killingu}
\normalfont
Because $W=W(q,x,y)$, $\mathcal{HH}$-spaces of the types $[\textrm{D}]^{ee} \otimes [\textrm{any}]$ are automatically equipped with a Killing vector
\begin{equation}
\label{pierwszy_Killing}
K_{1} = \frac{\partial}{\partial p}
\end{equation}
Thus, any pHE-space admits a 1D-algebra of infinitesimal symmetries $A_{1}$.
\end{Uwaga}

\subsubsection{The congruences $\mathcal{C}_{m^{A}}$ and $\mathcal{C}_{n^{A}}$}

It is quite easy to find the spinors which generate "the first" and "the second" $\mathcal{C}s$ and their expansions. In Pleba\'nski tetrad the following formulas hold true
\begin{subequations}
\label{ekspansja_CmA_and_CnA}
\begin{eqnarray}
\label{ekspansja_CmA}
\mathcal{C}_{m^{A}}: && m_{A}=[0,m], m \ne 0, M_{\dot{A}} = -\sqrt{2} \, \frac{m}{x} \, [1,0]
\\ 
\label{ekspansja_CnA}
\mathcal{C}_{n^{A}}: && n_{A}=[n,0], n \ne 0, N_{\dot{A}} = \sqrt{2} \, nx \, [-\mathcal{Q}, \mathcal{A}]
\end{eqnarray}
\end{subequations}
$\mathcal{C}_{m^{A}}$ is always expanding because $M_{\dot{1}} \ne 0$. It is straightforward to check that $N_{\dot{A}}=0 \ \Longrightarrow \ \mu_{0}=0$ what is a contradiction. Thus, $\mathcal{Q}$ and $\mathcal{A}$ cannot vanish simultaneously. Consequently, $\mathcal{C}_{n^{A}}$ is also always expanding.

\subsection{Hyperheavenly spaces of the types $[\textrm{D}]^{ee} \otimes [\textrm{deg}]^{e}$}

\label{ASD_congruence_section}

The next step towards our main goal is to impose an algebraic degeneracy of the ASD Weyl spinor. According to Theorem \ref{Twierdzenie_Goldberga_Sachsa} it is equivalent to an existence of $\mathcal{C}_{m^{\dot{A}}}$. The spinor $m_{\dot{A}}$ which generates $\mathcal{C}_{m^{\dot{A}}}$ is necessarily nonzero so it can be brought to the form $m_{\dot{A}} \sim [z,1]$, $z=z(q,p,x,y)$, without any loss of generality. ASD null string equations (\ref{ASD_null_strings_equations}) written explicitly read\footnote{Compare the formulas (2.24) and (2.25) of \cite{Chudecki_Ref_2} and remember, that in this paper $\phi=x$.}
\begin{subequations}
\label{kongruencja_mdotA_rownania}
\begin{eqnarray}
\label{kongruencja_mdotA_rownania_1}
&& z_{x} - zz_{y}=0
\\ 
\label{kongruencja_mdotA_rownania_2}
&& z_{q} - zz_{p} - z_{y} \mathcal{Z} + z \frac{\partial \mathcal{Z}}{\partial y} - \frac{\partial \mathcal{Z}}{\partial x} =0 , \ \mathcal{Z} := \mathcal{B} + 2z \mathcal{Q} + z^{2} \mathcal{A}
\end{eqnarray}
\end{subequations}
and
\begin{subequations}
\label{ekspansja_pierwszej_ASD_struny}
\begin{eqnarray}
\label{ekspansja_pierwszej_ASD_struny_1}
\frac{x}{\sqrt{2}} M_{1} &=& -x z_{y}  - 1
\\ 
\label{ekspansja_pierwszej_ASD_struny_2}
\frac{1}{\sqrt{2}x}  M_{2} &=&  - x z_{p}   + x z \frac{\partial}{\partial y} (\mathcal{Q} +z \mathcal{A}) - x \frac{\partial}{\partial x} (\mathcal{Q} +z \mathcal{A}) + (1 - x z_{y}) (\mathcal{Q} +z \mathcal{A})   \ \ \ \ \ \ \ 
\end{eqnarray}
\end{subequations}
Under the transformations (\ref{gauge}) the function $z$ transforms as follows
\begin{equation}
\label{transformacja_na_z_ina_w}
f z' = z - h_{q}
\end{equation}

The key function $W$ is a function of three variables only, $W=W(q,x,y)$. On the other hand, the function $z=z(q,p,x,y)$ is a function of four variables. Eqs. (\ref{kongruencja_mdotA_rownania}) contain both $W$ and $z$ and a natural question arises: if the key function depends on $(q,x,y)$ only (hence, so does the metric), what is a $p$-dependence of $z$?

\begin{Lemat} 
\label{lemat_o_niezaleznosci_z_od_p}
If $C_{\dot{A}\dot{B}\dot{C}\dot{D}} \ne 0$, then $z=z(q,x,y)$.
\end{Lemat}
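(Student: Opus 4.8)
The plan is to combine the complex Goldberg--Sachs theorem with the explicit form of the ASD Weyl spinor. First I would note that, by Theorem \ref{Twierdzenie_Goldberga_Sachsa}, the spinor $m_{\dot{A}}$ generating $\mathcal{C}_{m^{\dot{A}}}$ is a multiple dotted Penrose spinor of $C_{\dot{A}\dot{B}\dot{C}\dot{D}}$; let $k$ denote its multiplicity, so $k\in\{2,3,4\}$ (it is $\geq 2$ because $m_{\dot{A}}$ is multiple, and $\leq 4$ because $C_{\dot{A}\dot{B}\dot{C}\dot{D}}\neq 0$ rules out type $[\textrm{O}]$). Next I would observe that, from (\ref{ASD_conformal_curvature}) together with $W=W(q,x,y)$, every component $\dot{C}^{(i)}$ of the ASD Weyl spinor in the Pleba\'nski dotted dyad is a function of $(q,x,y)$ alone and carries no $p$-dependence.

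Then I would turn ``$m_{\dot{A}}=[z,1]$ is a $k$-fold Penrose spinor'' into an algebraic equation with $p$-independent coefficients. Writing the quartic form $C_{\dot{A}\dot{B}\dot{C}\dot{D}}\pi^{\dot{A}}\pi^{\dot{B}}\pi^{\dot{C}}\pi^{\dot{D}}$ with $\pi^{\dot{A}}=(1,t)$ as a polynomial $P(t;q,x,y)$ whose coefficients are the $\dot{C}^{(i)}$ up to nonzero numerical and $x$-dependent factors, the spinor $m_{\dot{A}}$ is a Penrose spinor of multiplicity $k$ exactly when $t=-z$ is a root of $P(\,\cdot\,;q,x,y)$ of multiplicity $k$, i.e. $\partial_{t}^{j}P|_{t=-z}=0$ for $j=0,\dots,k-1$ while $\partial_{t}^{k}P|_{t=-z}\neq 0$. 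Setting $G(\zeta,q,x,y):=\partial_{t}^{k-1}P(t;q,x,y)\big|_{t=-\zeta}$, which again has no explicit $p$-dependence, our $z$ satisfies $G(z,q,x,y)=0$ with $\partial_{\zeta}G|_{\zeta=z}=-\,\partial_{t}^{k}P|_{t=-z}\neq 0$. By the holomorphic implicit function theorem, $z$ is locally a function of $(q,x,y)$ only on the open set where the multiplicity attains its minimal value $k_{0}$; since $z$ is holomorphic on a connected manifold, the identity theorem then forces $z_{p}\equiv 0$, i.e. $z=z(q,x,y)$.

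The genuine content is bookkeeping rather than conceptual: one must check that $\partial_{t}^{k}P|_{t=-z}\neq 0$ for a root of exact multiplicity $k$ (immediate from $P=a\prod_{j}(t-t_{j})^{m_{j}}$ with $a\neq 0$, which holds because $C_{\dot{A}\dot{B}\dot{C}\dot{D}}\neq 0$), and that the locus on which the multiplicity equals $k_{0}$ is open and nonempty, so that the identity theorem applies; the hypothesis $C_{\dot{A}\dot{B}\dot{C}\dot{D}}\neq 0$ is essential, since for the type $[\textrm{D}]^{ee}\otimes[\textrm{O}]$ the congruence $\mathcal{C}_{m^{\dot{A}}}$ is arbitrary and $z$ may depend on $p$. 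As an independent check one may instead differentiate the ASD null string equations (\ref{kongruencja_mdotA_rownania_1})--(\ref{kongruencja_mdotA_rownania_2}) with respect to $p$ --- legitimate since $\mathcal{A},\mathcal{Q},\mathcal{B}$ do not involve $p$ --- and verify directly that $z_{p}$ is forced to vanish; this is the route used for pKE-spaces via the formulas (2.24)--(2.25) of \cite{Chudecki_Ref_2}.
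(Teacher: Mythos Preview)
Your argument is correct and takes a genuinely different route from the paper. The paper proceeds by contradiction via a direct computation: it solves (\ref{kongruencja_mdotA_rownania_1}) implicitly as $y=-xz+\Sigma(q,p,z)$, passes to the coordinate system $(q,p,x,z)$, integrates (\ref{kongruencja_mdotA_rownania_2}) to obtain $\tilde{\mathcal{B}}+2z\tilde{\mathcal{Q}}+z^{2}\tilde{\mathcal{A}}=\Omega(q,p,z)(x-\Sigma_{z})+z\Sigma_{p}-\Sigma_{q}$, and then differentiates twice in $x$ and repeatedly in $p$ under the assumption $z_{p}\neq 0$. This produces a chain of algebraic relations on $\mathcal{A},\mathcal{Q},\mathcal{B}$ which, together with (\ref{Einstein_middle_triplet}) and (\ref{zwiazek_miedzy_Qab_i_W}), forces the key function into the form $W=\tfrac{1}{4}\mu_{0}x^{2}y^{2}+\mathcal{P}$ with $\mathcal{P}$ a cubic polynomial in $x,y$; by (\ref{ASD_conformal_curvature}) this gives $C_{\dot{A}\dot{B}\dot{C}\dot{D}}=0$, the desired contradiction.

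Your proof is instead algebraic and uses Goldberg--Sachs as the essential input: since $m_{\dot{A}}$ is a multiple dotted Penrose spinor and the coefficients $\dot{C}^{(i)}$ of the ASD curvature quartic are $p$-independent (because $W=W(q,x,y)$), the value $-z$ is a root of a polynomial with $p$-independent coefficients, and on the open set of minimal multiplicity the implicit function theorem yields $z_{p}=0$, which then propagates by the identity theorem. This is shorter and more conceptual; it explains \emph{why} $z$ cannot see $p$ rather than grinding out the contradiction. The paper's computational approach, on the other hand, is self-contained at the level of the null string equations without invoking Theorem~\ref{Twierdzenie_Goldberga_Sachsa}, and as a byproduct identifies explicitly the degenerate key functions for which $C_{\dot{A}\dot{B}\dot{C}\dot{D}}=0$, information that your method does not extract.
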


\begin{proof}
Consider a case with $z_{x} \ne 0 \ \Longleftrightarrow \ z_{y} \ne 0$ first. A general solution of Eq. (\ref{kongruencja_mdotA_rownania_1}) is given in an implicit form
\begin{equation}
\label{solution_for_z}
y=-xz + \Sigma(q,p,z)
\end{equation}
where $\Sigma=\Sigma(q,p,z)$ is an arbitrary function. Note that the following relations hold true
\begin{equation}
\label{relacje_miedzy_ziwspol}
z_{x} = \frac{z}{\Sigma_{z}-x}, \ z_{y}  = \frac{1}{\Sigma_{z}-x}, \ z_{q} = -\frac{\Sigma_{q}}{\Sigma_{z}-x}, \ 
z_{p}  = -\frac{\Sigma_{p}}{\Sigma_{z}-x}
\end{equation}
Eq. (\ref{solution_for_z}) suggests that instead of the hyperheavenly coordinates $(q,p,x,y)$ one should treat $z$ as a new variable and consider an alternative coordinate system $(q,p,x,z)$. Denote
\begin{equation}
\mathcal{A} (q,x,y)= \tilde{\mathcal{A}} (q,x,y(q,p,x,z)) = \tilde{\mathcal{A}} (q,p,x,z)    
\end{equation}
Note that $\mathcal{A}$ does not depend on $p$ but its counterpart $\tilde{\mathcal{A}}$ does (as well as $\mathcal{B}$ and $\mathcal{Q}$). A general solution of Eq. (\ref{kongruencja_mdotA_rownania_2}) can be easily found in $(q,p,x,z)$ coordinate system and it reads
\begin{equation}
\label{rozwiazanie_na_Z}
\tilde{\mathcal{B}} (q,p,x,z) + 2z \tilde{\mathcal{Q}} (q,p,x,z) + z^{2} \tilde{\mathcal{A}} (q,p,x,z)  = \Omega (q,p,z) \, (x-\Sigma_{z}) + z \Sigma_{p} - \Sigma_{q} 
\end{equation}
where $\Omega= \Omega (q,p,z) $ is an arbitrary function. Differentiating Eq. (\ref{rozwiazanie_na_Z}) with respect to $x$ twice one gets (remember, that formula (\ref{rozwiazanie_na_Z}) is written in the coordinate system $(q,p,x,z)$) 
\begin{equation}
\label{rowwnanie_pomocnicze_0}
\tilde{\mathcal{B}}_{xx} + 2z \tilde{\mathcal{Q}}_{xx} + z^{2} \tilde{\mathcal{A}}_{xx}  = 0
\end{equation}
Now we transform Eq. (\ref{rowwnanie_pomocnicze_0}) to the hyperheavenly coordinate system $(q,p,x,y)$. Note that 
\begin{equation}
\tilde{\mathcal{A}}_{x} = \mathcal{A}_{x}-z \mathcal{A}_{y}, \ \tilde{\mathcal{A}}_{xx} = \mathcal{A}_{xx} - 2z \mathcal{A}_{xy} + z^{2} \mathcal{A}_{yy}, ...
\end{equation}
Hence
\begin{equation}
\label{rowwnanie_pomocnicze_1}
\mathcal{B}_{xx} + 2z ( \mathcal{Q}_{xx} - \mathcal{B}_{xy}) + z^{2} (\mathcal{B}_{yy} - 4 \mathcal{Q}_{xy} + \mathcal{A}_{xx}) + 2z^{3} (\mathcal{Q}_{yy} - \mathcal{A}_{xy}) + z^{4} \mathcal{A}_{yy} = 0
\end{equation}
Eq. (\ref{rowwnanie_pomocnicze_1}) is written in the hyperheavenly coordinates $(q,p,x,y)$; $\mathcal{A}$, $\mathcal{Q}$ and $\mathcal{B}$ do not depend on $p$ but $z$ does. Let us differentiate Eq. (\ref{rowwnanie_pomocnicze_1}) with respect to $p$
\begin{equation}
\label{rowwnanie_pomocnicze_2}
 2z_{p} ( \mathcal{Q}_{xx} - \mathcal{B}_{xy}) + 2zz_{p} (\mathcal{B}_{yy} - 4 \mathcal{Q}_{xy} + \mathcal{A}_{xx}) + 6z^{2}z_{p} (\mathcal{Q}_{yy} - \mathcal{A}_{xy}) + 4z^{3}z_{p} \mathcal{A}_{yy} = 0
\end{equation}
Assume $z_{p} \ne 0$ and differentiate (\ref{rowwnanie_pomocnicze_2}) once again with respect to $p$. We get
\begin{equation}
\label{rowwnanie_pomocnicze_3}
  2z_{p} (\mathcal{B}_{yy} - 4 \mathcal{Q}_{xy} + \mathcal{A}_{xx}) + 12zz_{p} (\mathcal{Q}_{yy} - \mathcal{A}_{xy}) + 12z^{2}z_{p} \mathcal{A}_{yy} = 0
\end{equation}
Repeating this procedure twice we find that
\begin{equation}
\label{rowwnanie_pomocnicze_4}
   12z_{p} (\mathcal{Q}_{yy} - \mathcal{A}_{xy}) + 24zz_{p} \mathcal{A}_{yy} = 0
\end{equation}
and
\begin{equation}
\label{rowwnanie_pomocnicze_5}
    24z_{p} \mathcal{A}_{yy} = 0
\end{equation}
Hence, from (\ref{rowwnanie_pomocnicze_1})-(\ref{rowwnanie_pomocnicze_5}) one gets
\begin{equation}
\label{rownania_wynikajace_z_rozniczkowan}
\mathcal{A}_{yy}=\mathcal{Q}_{yy} - \mathcal{A}_{xy}=\mathcal{B}_{yy} - 4 \mathcal{Q}_{xy} + \mathcal{A}_{xx}=\mathcal{Q}_{xx} - \mathcal{B}_{xy}=\mathcal{B}_{xx}=0
\end{equation}
Eqs. (\ref{rownania_wynikajace_z_rozniczkowan}), (\ref{Einstein_middle_triplet}) and (\ref{zwiazek_miedzy_Qab_i_W}) lead to the following form of the key function
\begin{equation}
\label{prosta_w_funkcja}
W = \frac{1}{4} \mu_{0} x^{2}y^{2} + \mathcal{P}
\end{equation}
where $\mathcal{P}$ is a third order polynomial in $x$ and $y$ with coefficients depending on $q$ only. However, such a key function implies $C_{\dot{A}\dot{B}\dot{C}\dot{D}}=0$ (compare (\ref{ASD_conformal_curvature})). Thus we arrive at a contradiction. Hence, $z_{p}=0$.

A simpler case $z_{x} = 0 \ \Longleftrightarrow \ z_{y} = 0$ can be treated similarly. We arrive at the formula (\ref{prosta_w_funkcja}) with $\mu_{0}=0$ and $\mathcal{P}$ being a second order polynomial in $x$ and $y$ with coefficients depending on $q$ only. Such a key function implies $C_{\dot{A}\dot{B}\dot{C}\dot{D}}=0$ and $C_{ABCD}=0$ what is "a double" contradiction. Hence, in this case we also find that $z_{p}=0$.
\end{proof}
\begin{Wniosek} The condition $z_{p}=0$ implies $\Sigma_{p}=0$ and $\Omega_{p}=0$.
\end{Wniosek}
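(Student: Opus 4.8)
The plan is to read off both conclusions directly from the formulas already established in the proof of Lemma~\ref{lemat_o_niezaleznosci_z_od_p}, working in the generic branch $z_{x} \ne 0 \ \Longleftrightarrow \ z_{y} \ne 0$ (in the degenerate branch $z_{x}=0$ neither the function $\Sigma$ nor the function $\Omega$ appears, so in that case there is nothing to prove). First I would invoke the last of the relations~(\ref{relacje_miedzy_ziwspol}), namely $z_{p} = -\Sigma_{p}/(\Sigma_{z}-x)$. In the branch under consideration $z_{y} = 1/(\Sigma_{z}-x) \ne 0$, so $\Sigma_{z}-x$ is a finite, nowhere vanishing function; hence $z_{p}=0$ forces $\Sigma_{p}=0$, i.e. $\Sigma = \Sigma(q,z)$.

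Next I would feed $\Sigma_{p}=0$ back into the implicit change of variables~(\ref{solution_for_z}). Since $y = -xz + \Sigma(q,z)$ now contains no $p$, and since $z=z(q,x,y)$ by Lemma~\ref{lemat_o_niezaleznosci_z_od_p}, the expressions $\tilde{\mathcal{A}}$, $\tilde{\mathcal{Q}}$, $\tilde{\mathcal{B}}$ obtained from $\mathcal{A}$, $\mathcal{Q}$, $\mathcal{B}$ by passing to the coordinate system $(q,p,x,z)$ are in fact independent of $p$. Consequently the left-hand side of~(\ref{rozwiazanie_na_Z}) does not depend on $p$, while its right-hand side reduces to $\Omega(q,p,z)\,(x-\Sigma_{z}) - \Sigma_{q}(q,z)$. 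Differentiating~(\ref{rozwiazanie_na_Z}) with respect to $p$ and using that the left-hand side, together with $\Sigma_{z}$ and $\Sigma_{q}$, is $p$-free, I obtain $\Omega_{p}\,(x-\Sigma_{z})=0$, and since $x-\Sigma_{z}\ne 0$ this yields $\Omega_{p}=0$.

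The only point requiring any care is the non-vanishing of $\Sigma_{z}-x$, which is precisely what the case hypothesis $z_{y}\ne 0$ guarantees; everything else is substitution. I do not expect a genuine obstacle here — the corollary is essentially a bookkeeping statement, allowing us henceforth to write $\Sigma=\Sigma(q,z)$ and $\Omega=\Omega(q,z)$ in all subsequent formulas.
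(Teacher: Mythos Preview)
Your argument is correct and follows essentially the same route as the paper: $\Sigma_{p}=0$ is read off from the relation $z_{p}=-\Sigma_{p}/(\Sigma_{z}-x)$ in~(\ref{relacje_miedzy_ziwspol}), and $\Omega_{p}=0$ is obtained by differentiating~(\ref{rozwiazanie_na_Z}) with respect to $p$ after noting that its left-hand side is $p$-independent. The only cosmetic difference is that the paper performs the second step in the $(q,p,x,y)$ coordinates (where $\mathcal{A}$, $\mathcal{Q}$, $\mathcal{B}$ are manifestly $p$-free and one uses $z_{p}=0$), while you stay in $(q,p,x,z)$ and argue that the tilded quantities become $p$-free once $\Sigma_{p}=0$; the content is the same.
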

\begin{proof}
$\Sigma_{p}=0$ follows from (\ref{relacje_miedzy_ziwspol}). To see that $\Omega_{p}=0$ we write (\ref{rozwiazanie_na_Z}) in the coordinate system $(q,p,x,y)$
\begin{equation}
\label{rozwiazanie_na_Z_hyp_coordinates}
\mathcal{B} (q,x,y) + 2z \mathcal{Q} (q,x,y) + z^{2} \mathcal{A} (q,x,y)  = \Omega (q,p,z) (x-\Sigma_{z})  - \Sigma_{q} 
\end{equation}
Differentiating (\ref{rozwiazanie_na_Z_hyp_coordinates}) with respect to $p$ and remembering that $z_{p}=0$ one gets $\Omega_{p}=0$.
\end{proof}
Let us summarize the results of this Section. We proved that if $z_{x} \ne 0 \ \Longleftrightarrow \ z_{y} \ne 0$ general solutions of Eqs. (\ref{kongruencja_mdotA_rownania}) read
\begin{subequations}
\begin{eqnarray}
\label{solution_for_z_precise}
&& y=-xz + \Sigma(q,z)
\\ 
\label{solution_for_Z_precise}
&& \mathcal{B}  + 2z \mathcal{Q}  + z^{2} \mathcal{A}  = \Omega (q,z) (x-\Sigma_{z})  - \Sigma_{q} 
\end{eqnarray}
\end{subequations}
where $\Sigma$ and $\Omega$ are arbitrary functions of $q$ and $z$ only. Eqs. (\ref{solution_for_z_precise}) and (\ref{solution_for_Z_precise}) are a nice starting point towards types $[\textrm{D}]^{ee} \otimes [\textrm{deg}]^{e}$ spaces. However, such a goal seems a bit too ambitious. Thus, in the next Section we specify the results for slightly less generic spaces of the types $[\textrm{D}]^{ee} \otimes [\textrm{deg}]^{n}$. Before we do this, we note that by virtue of (\ref{ekspansja_CmA_and_CnA}) the formulas (\ref{I_properties_of_1and3}) take the form
\begin{eqnarray}
\label{wlasnosci_przeciec}
&& \theta_{1} \sim xz_{y} +2, \ \varrho_{1} \sim z_{y}
\\ \nonumber
&& \theta_{3} \sim -n M_{2} - zN_{\dot{2}} + N_{\dot{1}}, \ \varrho_{3} \sim -n M_{2} + zN_{\dot{2}} - N_{\dot{1}}
\end{eqnarray}

\subsection{Hyperheavenly spaces of the types $[\textrm{D}]^{ee} \otimes [\textrm{deg}]^{n}$}
\label{Spaces_Dee_x_degn}

\subsubsection{The metric, field equations, properties of $\mathcal{I}$s}

Assume that $\mathcal{C}_{m^{\dot{A}}}$ is nonexpanding. Hence, we have to solve a set of four equations: Eqs. (\ref{kongruencja_mdotA_rownania}) and Eqs. (\ref{ekspansja_pierwszej_ASD_struny}) written for $M_{A}=0$.
\begin{Lemat} 
\label{lemat_o_postaci_z}
If the ASD congruence of null strings $\mathcal{C}_{m^{\dot{A}}}$ is nonexpanding then 
\begin{equation}
\label{solution_for_z_prostrzy}
z=- \frac{y}{x}
\end{equation}
\end{Lemat}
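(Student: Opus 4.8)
The plan is to combine the nonexpansion conditions $M_A=0$ coming from (\ref{ekspansja_pierwszej_ASD_struny_1})--(\ref{ekspansja_pierwszej_ASD_struny_2}) with the already-solved ASD null string equations. By Lemma \ref{lemat_o_niezaleznosci_z_od_p} we know $z=z(q,x,y)$, so no $p$-dependence has to be tracked. First I would exploit the vanishing of $M_{1}$: equation (\ref{ekspansja_pierwszej_ASD_struny_1}) with $M_1=0$ gives immediately $x z_y = -1$, i.e. $z_y = -1/x$. Feeding this into equation (\ref{kongruencja_mdotA_rownania_1}), $z_x = z z_y = -z/x$, so both first derivatives of $z$ are pinned down. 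Integrating $z_y=-1/x$ in $y$ yields $z = -y/x + g(q,x)$ for some function $g$, and then $z_x = y/x^2 + g_x$ must equal $-z/x = y/x^2 - g/x$, which forces $g_x = -g/x$, hence $g = c(q)/x$. Thus $z = (-y + c(q))/x$.

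The remaining task is to show $c(q)$ can be gauged (or is forced) to zero. Here I would use the transformation law (\ref{transformacja_na_z_ina_w}), $f z' = z - h_q$; choosing the gauge function $h(q)$ with $h_q = c(q)/x$... — but wait, $h$ depends on $q$ only, so this shift is not directly available. Instead the cleaner route is to observe that in the solution (\ref{solution_for_z_precise}) of the previous subsection we have $y = -xz + \Sigma(q,z)$, so $\Sigma(q,z) = c(q)$ is independent of $z$, i.e. $\Sigma_z = 0$. Then I would return to the second ASD equation in its integrated form (\ref{solution_for_Z_precise}), $\mathcal{B} + 2z\mathcal{Q} + z^2 \mathcal{A} = \Omega(q,z)(x-\Sigma_z) - \Sigma_q = \Omega(q,z)\,x - c_q(q)$, and impose the second nonexpansion condition $M_2=0$ from (\ref{ekspansja_pierwszej_ASD_struny_2}). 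Substituting $z_y=-1/x$, $z_x=-z/x$ there, the bracket $(1-xz_y)=2$, and the equation becomes a constraint linking $\mathcal{Q}+z\mathcal{A}$, its $x$- and $y$-derivatives, and $z$; combined with the defining relations (\ref{zwiazek_miedzy_Qab_i_W}) for $\mathcal{A},\mathcal{Q},\mathcal{B}$ in terms of the key function $W$, this should force $c(q)$ to be absorbable by the residual gauge freedom (\ref{gauge}), after which normalizing gives $z=-y/x$.

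The step I expect to be the main obstacle is precisely this last one: disentangling whether $c(q)\ne 0$ is genuinely excluded or merely pure gauge, because it requires carefully tracking how the residual gauge transformations (\ref{gauge}) (in particular $\sigma(q)$, which shifts $y$ by a function of $q$, and $f$) act on $z=(-y+c(q))/x$ together with their simultaneous effect on $W$ via (\ref{gauge_for_key_function}), so as not to spoil the $[\textrm{D}]^{ee}$ structure or the normalization $\mu_0=1$. I would handle it by checking that the shift $y\mapsto y + \sigma(q)\,x$-type pieces hidden in (\ref{gauge}) — more precisely the combination appearing in the $y'$-transformation — can be used to set $c(q)=0$, and then verify directly by substitution that $z=-y/x$ indeed solves all four equations (\ref{kongruencja_mdotA_rownania_1}), (\ref{kongruencja_mdotA_rownania_2}), (\ref{ekspansja_pierwszej_ASD_struny_1}), (\ref{ekspansja_pierwszej_ASD_struny_2}) with $M_A=0$; the first, third and fourth are then trivial, and the second reduces to an identity once the middle-triplet Einstein equations (\ref{Einstein_middle_triplet}) are invoked.
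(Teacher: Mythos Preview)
Your derivation of $z=(-y+c(q))/x$ from $M_{1}=0$ together with (\ref{kongruencja_mdotA_rownania_1}) is correct and matches the paper exactly (the paper calls your $c(q)$ by $j(q)$). The divergence comes only at the gauge step, where you make the problem much harder than it is. There is no need to invoke $M_{2}=0$, the integrated form (\ref{solution_for_Z_precise}), or the Einstein equations (\ref{Einstein_middle_triplet}). The paper simply reads off from (\ref{gauge}) with $\lambda_{0}=1$ that $y'=\tfrac{1}{f}(y+h_{q}x)+\sigma(q)$, combines this with (\ref{transformacja_na_z_ina_w}), and finds the transformation law
\[
j'=\frac{j}{f}+\sigma,
\]
so the still-unused gauge function $\sigma(q)$ removes $j(q)$ in one stroke. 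Note in particular that the $\sigma$-piece in the $y'$-transformation is \emph{not} of the form $\sigma(q)\,x$ as you guessed but a pure additive shift of $y$, which is exactly what is needed to cancel the numerator $c(q)$ in $z=(-y+c)/x$. Your detour through $\Sigma$, $\Omega$ and $M_{2}=0$ is unnecessary; these quantities only enter the analysis \emph{after} the lemma, when one determines the shape of $W$.
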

\begin{proof}
From (\ref{kongruencja_mdotA_rownania_1}) and $M_{1}=0$ (\ref{ekspansja_pierwszej_ASD_struny_1}) one finds that necessarily $z_{x} \ne 0 \ \Longleftrightarrow \ z_{y} \ne 0$ what implies $z=\dfrac{j(q)-y}{x}$ where $j=j(q)$ is an arbitrary function. The transformation formula for $j$ reads (compare (\ref{transformacja_na_z_ina_w}))
\begin{equation}
j' = \frac{j}{ f} + \sigma
\end{equation}
Hence, $j$ can be gauged away without any loss of generality. Thus, (\ref{solution_for_z_prostrzy}) is proved.
\end{proof}
\begin{Uwaga}
\normalfont
The solution (\ref{solution_for_z_prostrzy}) leaves us with the gauge (\ref{gauge}) restricted to the condition $\sigma=0$.
\end{Uwaga}
Using standard methods for solving partial differential equations we find that solutions of Eqs. (\ref{kongruencja_mdotA_rownania_2}) and $M_{2}=0$ (\ref{ekspansja_pierwszej_ASD_struny_2}) read
\begin{subequations}
\label{Key_function_step_1}
\begin{eqnarray}
&& \mathcal{B}  + 2z \mathcal{Q}  + z^{2} \mathcal{A}   = y \, \widetilde{E}(q,z)
\\
&& \mathcal{Q} + z \mathcal{A} = y^{2} \, \widetilde{D}(q,z)
\end{eqnarray}
\end{subequations}
where $\widetilde{E}=\widetilde{E}(q,z)$ and $\widetilde{D}=\widetilde{D}(q,z)$ are arbitrary functions. The next step is to find the key function $W$. To achieve this goal one has to cleverly juggle Eqs. (\ref{Einstein_middle_triplet}) and (\ref{zwiazek_miedzy_Qab_i_W}).
\begin{Lemat}
The key function $W$ which satisfies (\ref{Key_function_step_1}), (\ref{Einstein_middle_triplet}) and (\ref{zwiazek_miedzy_Qab_i_W}) has the form
\begin{equation}
\label{Key_function_step_3}
W = \frac{\mu_{0}}{4} x^{2} y^{2} - \frac{\Lambda}{12} \frac{y^{2}}{x} - \frac{1}{2} x^{2} \, E - x^{3} \, D + g
\end{equation}
where $E=E(q,z)$, $D=D(q,z)$ and $g(q)$ are arbitrary functions and $z=-\dfrac{y}{x}$.
\end{Lemat}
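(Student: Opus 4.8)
The plan is to use the substitution $z=-y/x$ of Lemma~\ref{lemat_o_postaci_z} to trade the variable $y$ for $z$, so that the right-hand sides of (\ref{Key_function_step_1}) depend only on $q$ and $z$, and then to turn (\ref{Key_function_step_1}) into a pair of linear ordinary differential equations in $x$ whose integration produces (\ref{Key_function_step_3}).

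First I would record the elementary identities $z_{x}=-z/x$, $z_{y}=-1/x$ and observe that, for any $W=W(q,x,y)$, the combinations $W_{xx}-2zW_{xy}+z^{2}W_{yy}$, $W_{x}-zW_{y}$ and $W_{xy}-zW_{yy}$ are exactly the first and second $x$-derivatives of $W$ and of $W_{y}$ taken along the lines $y=-xz$. Setting $\mathbf{W}(q,x,z):=W(q,x,-xz)$, these identities turn the left-hand sides of (\ref{Key_function_step_1}) --- once $\mathcal{A},\mathcal{Q},\mathcal{B}$ are replaced through (\ref{zwiazek_miedzy_Qab_i_W}) --- into expressions built only from $\mathbf{W}$, its $x$- and $z$-derivatives, and explicit $\mu_{0}$- and $\Lambda$-terms. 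The first equation of (\ref{Key_function_step_1}) then becomes the Euler-type ODE $-x\mathbf{W}_{xx}+2\mathbf{W}_{x}=-xz\widetilde{E}-\mu_{0}z^{2}x^{3}-\tfrac{\Lambda}{6}z^{2}$, whose homogeneous solutions are $1$ and $x^{3}$; two integrations in $x$ give $\mathbf{W}=-\tfrac12 z\widetilde{E}\,x^{2}+\tfrac{\mu_{0}}{4}z^{2}x^{4}-\tfrac{\Lambda}{12}z^{2}x+\tfrac13 a(q,z)\,x^{3}+b(q,z)$.

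Next I would insert this $\mathbf{W}$ into the second equation of (\ref{Key_function_step_1}); equivalently --- and this is what the apparently redundant list (\ref{Einstein_middle_triplet}) flagged earlier is for --- one may instead determine $\mathcal{A},\mathcal{Q},\mathcal{B}$ explicitly from the algebraic pair (\ref{Key_function_step_1}) together with the differential relations (\ref{Einstein_middle_triplet}) and reconstruct $W$ afterwards from (\ref{zwiazek_miedzy_Qab_i_W}). Either way, matching powers of $x$ forces $b_{z}=0$, so $b=b(q)=:g(q)$, and $a_{z}=-3z^{2}\widetilde{D}$. Relabelling $E(q,z):=z\widetilde{E}(q,z)$ and $D(q,z):=-\tfrac13 a(q,z)$ (so that $D_{z}=z^{2}\widetilde{D}$) and substituting back $z=-y/x$ --- so that $z^{2}x^{4}=x^{2}y^{2}$, $z^{2}x=y^{2}/x$ and $-\tfrac12 z\widetilde{E}\,x^{2}=-\tfrac12 x^{2}E$ --- we obtain exactly (\ref{Key_function_step_3}).

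The step I expect to be the main obstacle is bookkeeping rather than anything conceptual: one must keep track of the fact that the two integrations of the $x$-ODE introduce two functions of $q$ and $z$, and verify that it is precisely the remaining constraint --- the second equation of (\ref{Key_function_step_1}), or, in the alternative route, one of the equations (\ref{Einstein_middle_triplet}) --- that collapses one of these functions to a function of $q$ alone and ties the other to $\widetilde{D}$ through a single $z$-integration. One must likewise carry out the final relabelling of $\widetilde{E},\widetilde{D}$ as $E,D$ consistently, since (\ref{Key_function_step_3}) is written with the untilded functions.
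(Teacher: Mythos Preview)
Your proposal is correct but follows a genuinely different route from the paper. The paper first solves (\ref{Key_function_step_1}) algebraically for $\mathcal{Q}$ and $\mathcal{B}$ in terms of $\mathcal{A}$, substitutes these into the Einstein relations (\ref{Einstein_middle_triplet}) to obtain a decoupled system for $\mathcal{A}$ alone, solves it explicitly (formula (\ref{forma_A_1})), and only then integrates $\mathcal{A}=-xW_{yy}+\mu_{0}x^{3}+\Lambda/6$ to recover $W$. Your route is more direct: you substitute (\ref{zwiazek_miedzy_Qab_i_W}) into (\ref{Key_function_step_1}) at the outset, pass to the variables $(q,x,z)$, and recognise the first equation as an explicitly solvable Euler ODE for $\mathbf{W}$ in $x$; the second equation of (\ref{Key_function_step_1}) then pins down the two integration ``constants'' $a(q,z)$ and $b(q,z)$. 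This bypasses (\ref{Einstein_middle_triplet}) entirely --- legitimately, since (as the paper itself remarks) those relations are identities once (\ref{zwiazek_miedzy_Qab_i_W}) is assumed. Your argument is shorter and makes the structure of the answer transparent; the paper's detour through $\mathcal{A}$ has the minor practical side benefit of producing the explicit metric coefficients $\mathcal{A},\mathcal{Q},\mathcal{B}$ en route, which are needed anyway for (\ref{postac_ABQ}). The relabelling $E:=z\widetilde{E}$, $D_{z}:=z^{2}\widetilde{D}$ is the same in both approaches.
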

\begin{proof}
From (\ref{Key_function_step_1}) we find
\begin{equation}
\label{Key_function_step_1_1}
\mathcal{Q}  = y^{2} \, \widetilde{D}(q,z)+\frac{y}{x} \mathcal{A}, \ \mathcal{B} = y \widetilde{E} + \frac{2y^{3}}{x} \widetilde{D} + \frac{y^{2}}{x^{2}} \mathcal{A}
\end{equation}
Using (\ref{Key_function_step_1_1}) to replace $\mathcal{Q}$ and $\mathcal{B}$ in (\ref{Einstein_middle_triplet}) we get a system of equations for $\mathcal{A}$ which is quite simple to solve. The result yields
\begin{equation}
\label{forma_A_1}
\mathcal{A} = y^{2}  \widetilde{D}_{z} - 2xy  \widetilde{D} + \frac{1}{2}x \, (z\widetilde{E}_{zz} + 2 \widetilde{E}_{z}) + \frac{\Lambda}{3} + a(q) \, x^{3}
\end{equation}
where $a=a(q)$ is an arbitrary function. Inserting (\ref{forma_A_1}) into (\ref{zwiazek_miedzy_Qab_i_W}) after elementary but tedious calculations we get the key function $W$ in the form
\begin{equation}
W =  \frac{\mu_{0}}{4} x^{2} y^{2} - \frac{\Lambda}{12} \frac{y^{2}}{x} + \frac{1}{2} xy \, \widetilde{E} - x^{3}  D + g(q), \ \ D_{z} := z^{2} \widetilde{D}
\end{equation}
If we define $E := z \widetilde{E}$ we arrive at (\ref{Key_function_step_3}).
\end{proof}
The transformation formulas for $D$ and $E$ read
\begin{subequations}
\begin{eqnarray}
\label{transformacja_na_D}
f^{2} D' &=& D +\frac{1}{3} L
\\ 
\label{transformacja_na_E}
f^{2}  E' &=& E + \frac{f_{q}}{f} \, z +  f \frac{\partial}{\partial q} \left( \frac{h_{q}}{f} \right)
\end{eqnarray}
\end{subequations}

From (\ref{wlasnosci_przeciec}) one finds the properties of $\mathcal{I}_{1}$ and $\mathcal{I}_{3}$. $\mathcal{I}_{1}$ is always twisting and expanding. For $\mathcal{I}_{3}$ we find that both $\theta_{3}$ and $\varrho_{3}$ are proportional to $\mathcal{Q}  + z \mathcal{A} \sim D_{z}$. Hence, we have two branches
\begin{subequations}
\begin{eqnarray}
\label{warunek_na_podtyp_pp_pp}
\textrm{Types} \ \  \{ [\textrm{D}]^{ee} \otimes [\textrm{deg}]^{n} , [++,++]   \} \ \ \textrm{for} \ \ D_{z} \ne 0
\\ 
\label{warunek_na_podtyp_pp_mm}
\textrm{Types} \ \  \{ [\textrm{D}]^{ee} \otimes [\textrm{deg}]^{n} , [++,--]   \} \ \ \textrm{for} \ \ D_{z} = 0
\end{eqnarray}
\end{subequations}

Feeding the $\mathcal{HH}$-equation (\ref{HH_equation_ogolne}) with (\ref{Key_function_step_3}), after long and tedious calculations we find the following equations for $D$ and $E$
\begin{subequations}
\label{HH_resztki}
\begin{eqnarray}
\label{HH_resztki_1}
 EE_{zzz} - E_{zzq} - 2 \Lambda D_{z} &=& 0
\\ 
\label{HH_resztki_2}
D_{zq} - E D_{zz} +  D_{z} E_{z}   &=& 0
\end{eqnarray}
\end{subequations}
The hyperheavenly metric (\ref{metryka_HH_ekspandujaca_D_any}) written in the coordinates $(q,p,x,z)$ has the form
\begin{equation}
\label{metryka_HH_ekspandujaca_D_any_in_qpxz}
\frac{1}{2} ds^{2} = x^{-2} \left\{ -dq d(xz) - dp dx + \mathcal{A} \, dp^{2} - 2 \mathcal{Q} \, dpdq + \mathcal{B} \, dq^{2} \right\}
\end{equation}
where
\begin{eqnarray}
\label{postac_ABQ}
\mathcal{A} &=& \frac{1}{2} \mu_{0} x^{3} + \frac{\Lambda}{3} + \frac{x}{2} E_{zz} + x^{2} D_{zz}
\\ \nonumber
\mathcal{Q} &=& - \left( \frac{1}{2} \mu_{0} x^{3} + \frac{\Lambda}{3} \right) z - \frac{xz}{2} E_{zz} + x^{2} (D_{z} - zD_{zz})
\\ \nonumber
\mathcal{B} &=& \left( \frac{1}{2} \mu_{0} x^{3} + \frac{\Lambda}{3} \right) z^{2} - \frac{x}{2} (2E - z^{2} E_{zz}) + x^{2} (z^{2} D_{zz} - 2z D_{z} )
\end{eqnarray}
Note, that despite the fact that the key function depends on $D$, the metric (\ref{metryka_HH_ekspandujaca_D_any_in_qpxz}), field equations (\ref{HH_resztki}) and properties of $\mathcal{I}_{1}$ and $\mathcal{I}_{3}$ depend on $D_{z}$ only. Thus, we need solutions of (\ref{HH_resztki}) for $E$ and $D_{z}$.

\subsubsection{The ASD Weyl spinor}

For further purposes it is reasonable to discuss criteria for the Petrov-Penrose types of the ASD Weyl spinor. Inserting (\ref{Key_function_step_3}) into (\ref{ASD_conformal_curvature}) one finds
\begin{eqnarray}
\label{ASD_conformal_curvature_1}
\dot{C}^{(1)} &=& -2x^{2} D_{zzzz} - x E_{zzzz}
\\ \nonumber
\dot{C}^{(2)} &=& -x E_{zzz} - \frac{y}{x} \, \dot{C}^{(1)}
\\ \nonumber
\dot{C}^{(3)} &=& -\frac{2 \Lambda}{3} + 2y E_{zzz} + \frac{y^{2}}{x^{2}} \, \dot{C}^{(1)}
\\ \nonumber
\dot{C}^{(4)} &=& \frac{2 \Lambda y}{x} - \frac{3 y^{2}}{x} E_{zzz} - \frac{y^{3}}{x^{3}} \, \dot{C}^{(1)}
\\ \nonumber
\dot{C}^{(5)} &=& - \frac{4 \Lambda y^{2}}{x^{2}} + \frac{4 y^{3}}{x^{2}} E_{zzz} + \frac{y^{4}}{x^{4}} \, \dot{C}^{(1)}
\end{eqnarray}
The ASD Weyl spinor $C_{\dot{A}\dot{B}\dot{C}\dot{D}}$ is related to the coefficients $\dot{C}^{(i)}$ by the formula
\begin{eqnarray}
\label{spinorowa_postac_ASD_Weyl}
2 C_{\dot{A}\dot{B}\dot{C}\dot{D}} &=& \dot{C}^{(1)} \, k_{\dot{A}} k_{\dot{B}} k_{\dot{C}} k_{\dot{D}} + 4 \dot{C}^{(2)} \, l_{(\dot{A}} k_{\dot{B}} k_{\dot{C}} k_{\dot{D})} + 6 \dot{C}^{(3)} \, l_{(\dot{A}} l_{\dot{B}} k_{\dot{C}} k_{\dot{D})} 
\\ \nonumber
 && + 4 \dot{C}^{(4)} \, l_{(\dot{A}} l_{\dot{B}} l_{\dot{C}} k_{\dot{D})} + \dot{C}^{(5)} \, l_{\dot{A}} l_{\dot{B}} l_{\dot{C}} l_{\dot{D}} 
 \end{eqnarray}
where $(l_{\dot{A}}, k_{\dot{B}})$ is a basis of 1-index dotted spinors normalized to 1
\begin{equation}
k^{\dot{A}} l_{\dot{A}} = 1
\end{equation}
Inserting (\ref{ASD_conformal_curvature_1}) into (\ref{spinorowa_postac_ASD_Weyl}) one gets
\begin{equation}
\label{spinorowa_postac_ASD_Weyl_special}
 2 C_{\dot{A}\dot{B}\dot{C}\dot{D}} =  \big( -  4\Lambda \, l_{(\dot{A}}l_{\dot{B}} - 4xE_{zzz} \, l_{(\dot{A}}p_{\dot{B}}
+ \dot{C}^{(1)} \, p_{(\dot{A}}p_{\dot{B}}  \big) \, p_{\dot{C}}p_{\dot{D})}, \ \ p_{\dot{A}} := k_{\dot{A}} - \frac{y}{x} \, l_{\dot{A}} \ \ \ \ \ 
\end{equation}
Obviously, $p_{\dot{A}}$ is a double Penrose spinor. It generates the ASD $\mathcal{C}$, hence, it is proportional to $m_{\dot{A}}$ (see Section \ref{ASD_congruence_section}). Choosing a spinorial basis of 1-index dotted spinors in such a manner that $l_{\dot{A}} = [i,0]$, $k_{\dot{A}} = [0, i]$, one finds $p_{\dot{A}} = i [z,1]$. 

Consider a case with $\Lambda \ne 0$ first. Eq. (\ref{spinorowa_postac_ASD_Weyl_special}) can be rewritten as follows
\begin{equation}
\label{ASD_curvature_for _Lambda_ne0}
2 C_{\dot{A}\dot{B}\dot{C}\dot{D}} = 4 \Lambda \, n^{+}_{(\dot{A}} n^{-}_{\dot{B}} p_{\dot{C}} p_{\dot{D})}
\end{equation}
where
\begin{equation}
\label{second_multiple_dotted_PS}
 n^{\pm}_{\dot{A}} = \delta^{\dot{1}}_{\dot{A}} - i \, \frac{x E_{zzz} \pm \sqrt{x^{2} E_{zzz}^{2} + \Lambda \dot{C}^{(1)}}}{2 \Lambda} \, p_{\dot{A}}
\end{equation}
As long as $x^{2} E_{zzz}^{2} + \Lambda \dot{C}^{(1)} \ne 0$ holds true, the ASD Weyl spinor is type-[II]. Otherwise, it is type-[D]. 

If $\Lambda =0$ holds true, then Eq. (\ref{spinorowa_postac_ASD_Weyl_special}) takes the form
\begin{equation}
2 C_{\dot{A}\dot{B}\dot{C}\dot{D}} =  \big( - 4xE_{zzz} \, l_{(\dot{A}}
+ \dot{C}^{(1)} \, p_{(\dot{A}}  \big) \, p_{\dot{B}} p_{\dot{C}}p_{\dot{D})}
\end{equation}
Thus, for $E_{zzz} \ne 0$ we deal with the type [III] and for $E_{zzz} = 0$ we arrive at the type [N]. Criteria are gathered in the Table \ref{Kryteria_krzywizy_ASD}.
\begin{table}[ht]
\begin{center}
\begin{tabular}{|c|c|}   \hline
Petrov-Penrose type &  Criteria     \\  \hline
$[\textrm{D}]^{ee} \otimes [\textrm{II}]^{n}$ & $\Lambda \ne 0$, ($E_{zzzz} \ne 0$ or $2 \Lambda  D_{zzzz} \ne E_{zzz}^{2}$)  \\ \hline
$[\textrm{D}]^{ee} \otimes [\textrm{D}]^{nn}$ & $\Lambda \ne 0$, $E_{zzzz} = 0$, $2 \Lambda  D_{zzzz} = E_{zzz}^{2}$  \\ \hline
$[\textrm{D}]^{ee} \otimes [\textrm{III}]^{n}$ & $\Lambda = 0$, $E_{zzz} \ne 0$   \\ \hline
$[\textrm{D}]^{ee} \otimes [\textrm{N}]^{n}$ & $\Lambda = 0$, $E_{zzz} = 0$, $D_{zzzz} \ne 0$  \\ \hline
$[\textrm{D}]^{ee} \otimes [\textrm{O}]^{n}$ & $\Lambda = 0$, $E_{zzz} = 0$ , $D_{zzzz} = 0$  \\ \hline
\end{tabular}
\caption{Criteria for the Petrov-Penrose types.}
\label{Kryteria_krzywizy_ASD}
\end{center}
\end{table}

\subsubsection{Symmetries}

Feeding (\ref{master_equation}) with (\ref{Key_function_step_3}) one gets the system of equations
\begin{subequations}
\label{rozpisane_rownanie_master}
\begin{eqnarray}
\label{rozpisane_rownanie_master_1}
&& - \widetilde{a} D_{q} + \left( \widetilde{a}_{q} z - \frac{4}{3}\chi_{0} z + \widetilde{c}_{q} \right) D_{z} + ( 2 \chi_{0} - 2 \widetilde{a}_{q} ) D + \frac{1}{2} \mu_{0} \widetilde{\epsilon} z - \widetilde{\alpha} =0 
\\
\label{rozpisane_rownanie_master_2}
&& -\frac{1}{2} \widetilde{a} E_{q} + \left( - \frac{2}{3} \chi_{0} z + \frac{1}{2} \widetilde{a}_{q} z + \frac{1}{2} \widetilde{c}_{q} \right) E_{z} + \left( \frac{2}{3} \chi_{0} - \widetilde{a}_{q} \right) E - \widetilde{\epsilon} D_{z}
\\ \nonumber
&& \ \ \ \ \ \ \ \ \  + \frac{1}{2} (\widetilde{a}_{qq} z + \widetilde{c}_{qq} ) = 0
\\ 
\label{rozpisane_rownanie_master_3}
&& \widetilde{\epsilon} E_{z} + \widetilde{\epsilon}_{q} =0
\\
\label{rozpisane_rownanie_master_4}
&& \Lambda  \widetilde{\epsilon}  =0
\\
\label{rozpisane_rownanie_master_5}
&&  \widetilde{a}_{qqq} - 2 \Lambda \widetilde{\alpha}  =0
\end{eqnarray}
\end{subequations}


\section{Types $[\textrm{D}]^{ee} \otimes [\textrm{II}]^{n}$}
\label{section_Dee_x_IIn}
\setcounter{equation}{0}

\subsection{Type $\{ [\textrm{D}]^{ee} \otimes [\textrm{II}]^{n}, [++,++] \}$}

In this Section we deal with a case for which
\begin{equation}
\label{warunki_Typ_II_pp_pp}
D_{z} \ne 0, \ \Lambda \ne 0, \ E_{zzzz} \ne 0 \ \textrm{or} \ E_{zzz}^{2} - 2\Lambda D_{zzzz} \ne 0
\end{equation}
(Compare (\ref{warunek_na_podtyp_pp_pp}) and the Table \ref{Kryteria_krzywizy_ASD}).

\subsubsection{General case}

\begin{Twierdzenie}
\label{Twierdzenie_typ_II_pp_pp}
Let $(\mathcal{M}, ds^{2})$ be an Einstein complex space of the type $\{ [\textrm{D}]^{ee} \otimes [\textrm{II}]^{n}, [++,++] \}$. Then there exists a local coordinate system $(q,p,x,w)$ such that the metric takes the form 
\begin{eqnarray}
\label{metryka_TypII_pp_pp_ostateczna}
\frac{1}{2} ds^{2} &=& x^{-2} \left\{ -dpdx -Z \, dqdx - xZ_{w} \, dq dw \right.
\\ \nonumber
&& \ \ \ \ \ \
+ \left( \frac{1}{2} \mu_{0} x^{3} + \frac{\Lambda}{3} + \frac{1}{2} x \, M_{w} + x^{2} \, \frac{Z_{ww}}{Z_{w}} \right) dp^{2} 
\\ \nonumber
&& \ \ \ \ \ \ +2 \left[ \left( \frac{1}{2} \mu_{0} x^{3} + \frac{\Lambda}{3} \right) Z + \frac{1}{2} xZ \, M_{w} - x^{2} \left( Z_{w} -  \frac{ZZ_{ww}}{Z_{w}} \right) \right] dp dq
\\ \nonumber
&& \ \ \ \ \ \ \ 
\left.   + \left[  \left( \frac{1}{2} \mu_{0} x^{3} + \frac{\Lambda}{3} \right) Z^{2} + \frac{1}{2} xZ^{2} M_{w} + x^{2} \left(  \frac{Z^{2} Z_{ww}}{Z_{w}} - 2Z Z_{w} \right)  \right] dq^{2}  \right \}
\end{eqnarray}
where $\mu_{0}=1$, $\Lambda \ne 0$ is the cosmological constant, $Z (q,w) := -\dfrac{1}{2 \Lambda} M_{q}$ and 
\begin{equation}
\label{Typ_II_pp_pp_jedyne_rownanie}
12 \Lambda M_{ww} = M_{w}^{3} + a(w) \, M_{w} + b(w)
\end{equation}
$M=M(q,w)$, $a=a(w)$ and $b=b(w)$ are holomorphic functions such that $M_{qw} \ne 0$ and
\begin{equation}
\label{Typ_II_pp_pp_jedyne_warunki}
a_{w} M_{w} + b_{w} \ne 0 \ \ \textrm{or} \ \ 12\Lambda a_{ww} - aa_{w} + 3a_{w} M_{w}^{2} + 6b_{w} M_{w} \ne 0
\end{equation}
\end{Twierdzenie}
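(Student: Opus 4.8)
The plan is to work from the normal form established in Section \ref{Spaces_Dee_x_degn}: a space of this class has metric (\ref{metryka_HH_ekspandujaca_D_any_in_qpxz}) with coefficients (\ref{postac_ABQ}), so it is completely determined by two functions $E=E(q,z)$ and $D=D(q,z)$ satisfying the reduced field equations (\ref{HH_resztki}), with $D_{z}\neq 0$ here by (\ref{warunki_Typ_II_pp_pp}); the residual gauge freedom is $q\mapsto q'(q)$ (with $f=dq'/dq$), $h(q)$, $L(q)$, acting through (\ref{transformacja_na_D})--(\ref{transformacja_na_E}). The key observation is that both equations in (\ref{HH_resztki}) are transport equations along one and the same vector field $\partial_{q}-E\,\partial_{z}$.

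First I would straighten that field. Introduce a new fourth coordinate $w$ which is a first integral of $\partial_{q}-E\,\partial_{z}$ (with $w_{z}\neq 0$), keep $q,p,x$, and write the inverse substitution as $z=Z(q,w)$; then $Z_{q}=-E$ and $(\partial_{q}-E\,\partial_{z})F=\partial_{q}F$ in the new coordinates for every $F$. Equation (\ref{HH_resztki_2}) now reads $\partial_{q}\ln D_{z}=-E_{z}=\partial_{q}\ln Z_{w}$, hence $D_{z}=c(w)\,Z_{w}$, and the leftover reparametrization freedom $w\mapsto\widetilde{w}(w)$ is used to reach $D_{z}=Z_{w}$. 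Then I would introduce the key function $M=M(q,w)$ by $M_{q}:=-2\Lambda\,Z$ (so $M$ is defined up to an additive function of $w$), which gives $E=-Z_{q}=\tfrac{1}{2\Lambda}M_{qq}$ and $D_{z}=Z_{w}=-\tfrac{1}{2\Lambda}M_{qw}$; admissibility of $z=Z(q,w)$ as a change of variables is exactly the condition $M_{qw}\neq 0$, i.e. $D_{z}\neq 0$.

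With this dictionary, (\ref{HH_resztki_1}) reads $\partial_{q}E_{zz}=-2\Lambda D_{z}=M_{qw}$, which after one integration in $q$ and absorption of the arising $w$-function into $M$ gives $E_{zz}=M_{w}$. On the other hand, computing $E_{zz}$ from $E=\tfrac{1}{2\Lambda}M_{qq}$ and $z=Z(q,w)$ (using $\partial_{z}=Z_{w}^{-1}\partial_{w}$) gives
\begin{equation}
\nonumber
E_{zz}=2\Lambda\,\frac{M_{qqww}M_{qw}-M_{qqw}M_{qww}}{M_{qw}^{3}}.
\end{equation}
Equating the two expressions yields $M_{w}M_{qw}^{3}=2\Lambda\,(M_{qqww}M_{qw}-M_{qqw}M_{qww})$; dividing by $M_{qw}^{2}$, recognising total $q$-derivatives on both sides and integrating twice in $q$ (the two integration functions being $a(w)$ and $b(w)$) produces exactly the Abel equation of the first kind (\ref{Typ_II_pp_pp_jedyne_rownanie}). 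Finally, substituting $E_{zz}=M_{w}$, $D_{z}=Z_{w}$, $D_{zz}=Z_{ww}/Z_{w}$ and $z=Z$ into (\ref{metryka_HH_ekspandujaca_D_any_in_qpxz})--(\ref{postac_ABQ}), and using $-dq\,d(xz)=-Z\,dq\,dx-xZ_{w}\,dq\,dw-xZ_{q}\,dq^{2}$ with the last term absorbed into the $dq^{2}$-coefficient, one arrives at (\ref{metryka_TypII_pp_pp_ostateczna}).

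It remains to translate the type conditions. From $E_{zz}=M_{w}$ and (\ref{Typ_II_pp_pp_jedyne_rownanie}) one gets $E_{zzz}=-(M_{w}^{3}+a\,M_{w}+b)/(6M_{qw})$, and, eliminating higher $w$-derivatives of $M$ repeatedly by the Abel equation, $E_{zzzz}$ and $E_{zzz}^{2}-2\Lambda D_{zzzz}$ are expressed through $M$, $a$, $b$; the requirement that they not vanish simultaneously (the type-[II] criterion of Table \ref{Kryteria_krzywizy_ASD}) becomes precisely (\ref{Typ_II_pp_pp_jedyne_warunki}). I expect the main obstacle to be the bookkeeping of the middle steps: confirming that the characteristic straightening, the $w$-reparametrization and the three gauge functions can be imposed simultaneously, and that the full content of (\ref{HH_resztki}) integrates to the stated Abel equation and nothing more; the metric substitution and the translation of the Petrov-Penrose criteria are long but routine.
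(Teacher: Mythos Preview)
Your proposal is correct and follows essentially the same route as the paper: both arguments change variables so that the characteristic field $\partial_{q}-E\,\partial_{z}$ is straightened, obtain $E=-Z_{q}$ and $D_{z}=Z_{w}$, introduce $M$ via $Z=-\tfrac{1}{2\Lambda}M_{q}$ and $E_{zz}=M_{w}$, and then integrate the consistency relation twice in $q$ to reach the Abel equation (\ref{Typ_II_pp_pp_jedyne_rownanie}). The only cosmetic difference is that the paper first packages (\ref{HH_resztki_2}) as the closedness condition $\partial_{q}(1/D_{z})=\partial_{z}(E/D_{z})$ and takes $w=\Omega$ to be the resulting potential (so $D_{z}=Z_{w}$ comes for free), whereas you first pick an arbitrary first integral and then spend the $w$-reparametrization to normalise $c(w)=1$; the remaining steps, including the metric substitution and the translation of the type-[II] criteria, are identical.
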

\begin{proof}
Because $D_{z} \ne 0$ holds, Eq. (\ref{HH_resztki_2}) can be rearranged into the form
\begin{equation}
\frac{\partial}{\partial q} \left( \frac{1}{D_{z}} \right) = \frac{\partial}{\partial z} \left( \frac{E}{D_{z}} \right)
\end{equation}
Hence, there exists a function $\Omega = \Omega (q,z)$ such that
\begin{equation}
\label{definicja_Omega}
D_{z} = \frac{1}{\Omega_{z}}, \ E = \frac{\Omega_{q}}{\Omega_{z}}
\end{equation}
The second field equation (\ref{HH_resztki_1}) takes the form
\begin{equation}
\label{drugie_rownanie_2}
\Omega_{q} S_{z} - \Omega_{z} S_{q} = 2 \Lambda, \ S:= E_{zz}= \partial^{2}_{z} \left(  \frac{\Omega_{q}}{\Omega_{z}} \right)
\end{equation}
Multiplying (\ref{drugie_rownanie_2}) by $dz \wedge dq$ we get
\begin{equation}
\label{drugie_rownanie_3}
dS \wedge d \Omega = 2 \Lambda \, dz \wedge dq
\end{equation}
Now we introduce new coordinates $(q', w)$ such that $q'=q$ and $w = \Omega (q,z)$. Thus, $z=Z(q',w)$. Note that
\begin{eqnarray}
\label{coordinates_qw_typ_II}
1 = \frac{dZ}{dz} = \frac{\partial Z}{\partial q'} \frac{\partial q'}{\partial z} + \frac{\partial Z}{\partial w} \frac{\partial \Omega}{\partial z} = Z_{w} \Omega_{z} \ &\Longleftrightarrow & \ \Omega_{z} = \frac{1}{Z_{w}}
\\ \nonumber
0 = \frac{dZ}{dq} = \frac{\partial Z}{\partial q'} \frac{\partial q'}{\partial q} + \frac{\partial Z}{\partial w} \frac{\partial \Omega}{\partial q} = Z_{q'}+ Z_{w} \Omega_{q} \ &\Longleftrightarrow & \ \Omega_{q} = -\frac{Z_{q'}}{Z_{w}}
\end{eqnarray}
what implies that $D_{z}$ and $E$ can be rewritten as 
\begin{equation}
\label{wzory_na_DzinaE}
D_{z} = Z_{w} , \ E = -Z_{q'}
\end{equation}
From (\ref{drugie_rownanie_3}) we find
\begin{equation}
S_{q'} = - 2 \Lambda \, Z_{w}
\end{equation}
Hence, there exists a function $M=M(q', w)$ such that
\begin{subequations}
\begin{eqnarray}
\label{rozwiazanie_10}
&& S = M_{w}
\\
\label{definicja_z}
&& Z =- \frac{M_{q'}}{ 2 \Lambda}
\end{eqnarray}
\end{subequations}
The only equation which remains to be solved is a consistency condition between the definition of $S$ (\ref{drugie_rownanie_2}) and (\ref{rozwiazanie_10}). First we note that for an arbitrary function $F=F(q,z) = \widehat{F} (q(q'), Z(q',w)) = \widehat{F} (q',w)$ transformation rules for derivatives yield
\begin{eqnarray}
\label{transformacja_pochodnych_ffunkci}
\frac{\partial F}{\partial q} &=& \widehat{F}_{q'} + \Omega_{q} \widehat{F}_{w} = \widehat{F}_{q'} -\frac{Z_{q'}}{Z_{w}} \, \widehat{F}_{w}
\\ \nonumber
\frac{\partial F}{\partial z} &=& \Omega_{z} \widehat{F}_{w} = \frac{1}{Z_{w}} \, \widehat{F}_{w}
\end{eqnarray}
Thus,
\begin{equation}
M_{w} = \frac{1}{Z_{w}} \frac{\partial}{\partial w} \left(  \frac{1}{Z_{w}} \frac{\partial }{\partial w} (-Z_{q'}) \right)
\end{equation}
Using (\ref{definicja_z}) one gets
\begin{equation}
\partial_{q'} (M_{w}^{2}) = \partial_{q'} (4 \Lambda  \partial_{w} \ln Z_{w})
\end{equation}
Integrating this equation with respect to $q'$ and using (\ref{definicja_z}) one finds
\begin{equation}
M_{w}^{2} M_{q'w} = 4 \Lambda M_{q'ww} - \frac{1}{3} a(w) \, M_{q'w}
\end{equation}
Integrating this equation once again with respect to $q'$ and dropping primes we arrive at (\ref{Typ_II_pp_pp_jedyne_rownanie}). To write down the metric we need $D_{z}$, $E$, $D_{zz}$ and $E_{zz}$. $D_{z}$, $E$ and $E_{zz}=S$ we already have (compare (\ref{wzory_na_DzinaE}) and (\ref{rozwiazanie_10})). It is easy to see that $D_{zz} = Z_{ww} / Z_{w}$. Thus, (\ref{metryka_TypII_pp_pp_ostateczna}) is proved. After some calculations in which Eq. (\ref{Typ_II_pp_pp_jedyne_rownanie}) is intensively used, one arrives at type-[II] conditions (\ref{Typ_II_pp_pp_jedyne_warunki}). Finally, the condition $D_{z} \ne 0$ implies $M_{q'w} \ne 0$.
\end{proof}

Note, that in this case the field equations have been reduced to a single differential equation (\ref{Typ_II_pp_pp_jedyne_rownanie}) in which we recognize \textsl{an Abel equation of the first kind} for $M_{w}$. Many explicit solutions of this equation are gathered in \cite{Polyanin}. Also, an interesting approach to solutions of the Abel equation was presented in \cite{Panayotou}.

As an example we consider a case with $b(w) = 0$. If we present $a(w)$ in the form
\begin{equation}
a(w) =: 6 \Lambda \frac{H_{ww}}{H_{w}}
\end{equation}
we find a solution of (\ref{Typ_II_pp_pp_jedyne_rownanie}) 
\begin{equation}
\label{Przyklad_rozwiazania_rownania_Abela}
M(q,w) = \int \left(  \frac{6 \Lambda H_{w}}{Q - H} \right)^{\frac{1}{2}} dw, \ \ \Longrightarrow \ \ Z(q,w) = \frac{Q_{q}}{4 \Lambda} \int \left(  \frac{6 \Lambda H_{w}}{(Q - H)^{3}} \right)^{\frac{1}{2}} dw
\end{equation}
where $Q=Q(q)$ and $H=H(w)$ are arbitrary functions such that $Q_{q} \ne 0$ and $H_{w} \ne 0$.

\subsubsection{Additional symmetries}
\label{Typ_II_pp_pp_additional_symmetries}

The metric (\ref{metryka_TypII_pp_pp_ostateczna}) is equipped with the Killing vector $K_{1} = \partial_{p}$ (\ref{pierwszy_Killing}) and it does not admit any other symmetry. Indeed, we formulate the following Lemma:
\begin{Lemat}
\label{Lemat_typ_II_brak_innych_symterii}
A complex Einstein space of the type $\{ [\textrm{D}]^{ee} \otimes [\textrm{II}]^{n}, [++,++] \}$ does not admit a symmetry algebra of dimension $\geqslant 2$.
\end{Lemat}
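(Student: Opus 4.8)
The plan is to show that the master equation system \eqref{rozpisane_rownanie_master} admits, for spaces of the type $\{ [\textrm{D}]^{ee} \otimes [\textrm{II}]^{n}, [++,++] \}$, only the trivial solution corresponding to $K_1 = \partial_p$. First I would note that since $\Lambda \neq 0$ in this case, equations \eqref{rozpisane_rownanie_master_4} and \eqref{symmetry_integrability_condition} immediately force $\widetilde{\epsilon} = 0$ and $\chi_0 = 0$ (no proper homothetic vectors). Then \eqref{rozpisane_rownanie_master_5} gives $\widetilde{\alpha} = \frac{1}{2\Lambda}\widetilde{a}_{qqq}$, and \eqref{rozpisane_rownanie_master_3} is satisfied identically. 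So the problem reduces to the two equations \eqref{rozpisane_rownanie_master_1} and \eqref{rozpisane_rownanie_master_2} for the remaining functions $\widetilde{a}(q)$ and $\widetilde{c}(q)$, which read
\begin{subequations}
\begin{eqnarray}
&& -\widetilde{a} D_q + ( \widetilde{a}_q z + \widetilde{c}_q ) D_z - 2 \widetilde{a}_q D - \frac{1}{2\Lambda} \widetilde{a}_{qqq} = 0
\\
&& -\frac{1}{2} \widetilde{a} E_q + \left( \frac{1}{2} \widetilde{a}_q z + \frac{1}{2} \widetilde{c}_q \right) E_z - \widetilde{a}_q E + \frac{1}{2} ( \widetilde{a}_{qq} z + \widetilde{c}_{qq} ) = 0
\end{eqnarray}
\end{subequations}
The task is to prove these imply $\widetilde{a} = 0$ and $\widetilde{c} = \mathrm{const}$ (which by \eqref{symmetry_transformation_c} can be normalized so that the vector is a multiple of $K_1$).

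Next I would translate everything into the coordinates $(q',w)$ built in the proof of Theorem \ref{Twierdzenie_typ_II_pp_pp}, using \eqref{wzory_na_DzinaE}, \eqref{transformacja_pochodnych_ffunkci}, $D_z = Z_w$, $E = -Z_{q'}$, $E_{zz} = M_w$, $Z = -\frac{1}{2\Lambda} M_{q'}$, and the Abel equation \eqref{Typ_II_pp_pp_jedyne_rownanie}. The idea is that a nontrivial Killing vector would correspond to a symmetry of the Abel equation, i.e. a flow in the $(q',w)$-plane preserving the structure \eqref{Typ_II_pp_pp_jedyne_rownanie} together with the definitions of $M$ and $Z$. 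Since $\widetilde{a}$, $\widetilde{c}$ depend on $q$ only, differentiating the transformed equations with respect to $w$ repeatedly (the same "juggling" strategy used in the proof of Lemma \ref{lemat_o_niezaleznosci_z_od_p}) should strip off the $M$-dependence and produce a chain of constraints on $\widetilde{a}, \widetilde{c}$. The genericity conditions \eqref{Typ_II_pp_pp_jedyne_warunki} — precisely the type-[II] (as opposed to type-[D]) conditions — are what block the degenerate escape routes: whenever the derivation would stall, it stalls only on a locus where the ASD Weyl spinor degenerates, contradicting the hypothesis. Concretely, I expect that after two or three $w$-differentiations one obtains $\widetilde{a}_q = 0$ (so $\widetilde{a} = \mathrm{const}$), then that the $\widetilde{a}\, D_q$ and $\widetilde{a}\, E_q$ terms combined with \eqref{Typ_II_pp_pp_jedyne_warunki} force $\widetilde{a} = 0$, and finally $\widetilde{c}_{qq} = 0$ together with the remaining equation kills the $\widetilde{c}_q\, z$ term unless $\widetilde{c}_q$ is constant, which back-substitution then forces to vanish.

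The main obstacle I anticipate is the bookkeeping in re-expressing the $q$-derivatives $D_q$, $E_q$ (which are derivatives at fixed $z$) in terms of $q'$- and $w$-derivatives at fixed $w$ via \eqref{transformacja_pochodnych_ffunkci}, and then feeding in the Abel equation and its $w$-derivatives to close the system — this is exactly the kind of "elementary but tedious" manipulation the paper has been doing throughout. A secondary subtlety is making sure the argument handles both alternatives in the disjunction \eqref{Typ_II_pp_pp_jedyne_warunki}: one branch ($a_w M_w + b_w \neq 0$, i.e. $E_{zzzz} \neq 0$) and the other ($12\Lambda a_{ww} - a a_w + 3 a_w M_w^2 + 6 b_w M_w \neq 0$, i.e. $2\Lambda D_{zzzz} \neq E_{zzz}^2$). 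In each branch the relevant non-degenerate coefficient must be identified among the terms surviving repeated $w$-differentiation, and used as the pivot that forces $\widetilde{a}$ and $\widetilde{c}_q$ to vanish. Once $\widetilde{a} = 0$, $\widetilde{\epsilon} = 0$, $\chi_0 = 0$, $\widetilde{\alpha} = 0$ and $\widetilde{c} = \mathrm{const}$ are established, the admitted vector is exactly $K_1 = \partial_p$ up to scale, so the symmetry algebra is $1$-dimensional, contradicting $\dim \geqslant 2$.
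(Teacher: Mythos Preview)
Your outline is plausible but takes a much longer road than the paper. The paper never passes to the $(q',w)$ coordinates or the Abel equation at all; it works directly with \eqref{rozpisane_rownanie_master} and the raw field equations \eqref{HH_resztki} in $(q,z)$. After deducing $\chi_{0}=\widetilde{\epsilon}=0$ exactly as you do, the paper splits into two cases. If $\widetilde{a}=0$, a single $z$-differentiation of \eqref{rozpisane_rownanie_master_2} gives $\widetilde{c}_{q}E_{zz}=0$; but $E_{zz}=0$ fed into \eqref{HH_resztki_1} would force $\Lambda D_{z}=0$, contradicting $[++,++]$, so $\widetilde{c}$ is constant and $K_{2}\sim K_{1}$. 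If $\widetilde{a}\neq 0$, the paper exploits the gauge freedom \eqref{symmetry_transformation_a}--\eqref{symmetry_transformation_c} to set $\widetilde{a}=1$, $\widetilde{c}=0$; then \eqref{rozpisane_rownanie_master_1}--\eqref{rozpisane_rownanie_master_2} collapse to $D_{qz}=0$, $E_{q}=0$, so $D_{z}=D_{z}(z)$, $E=E(z)$. Now \eqref{HH_resztki_2} gives $E=E_{0}D_{z}$ for a constant $E_{0}$, and \eqref{HH_resztki_1} gives $E_{0}E_{zzz}=2\Lambda$, whence $E_{zzzz}=0$ and $E_{zzz}^{2}-2\Lambda D_{zzzz}=D_{zzzz}(E_{0}E_{zzz}-2\Lambda)=0$: the space is type~$[\textrm{D}]$, not~$[\textrm{II}]$.

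What your approach would buy, if carried through, is a gauge-invariant argument; what the paper's approach buys is brevity and transparency --- no coordinate change, no repeated differentiation, and the type-$[\textrm{II}]$ condition \eqref{warunki_Typ_II_pp_pp} is hit in one line rather than having to chase both branches of the disjunction \eqref{Typ_II_pp_pp_jedyne_warunki}. The step you are missing is precisely the use of the residual gauge to normalize $\widetilde{a}$ and $\widetilde{c}$ before analyzing the master equations.
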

\begin{proof}
Assume, that a second symmetry exists and is defined by a homothetic vector $K_{2}$. From (\ref{symmetry_integrability_condition}) and (\ref{rozpisane_rownanie_master_4}) one finds $\chi_{0}=0= \widetilde{\epsilon}$, so proper homothetic vector is not admitted. Assume now that $\widetilde{a}=0$. From (\ref{rozpisane_rownanie_master_2}) we get $\widetilde{c}_{q} E_{zz}=0$. The condition $E_{zz}=0$ implies $\Lambda D_{z}=0$ (compare (\ref{HH_resztki_1})) what is a contradiction. Thus, $\widetilde{c} = \textrm{const}$. However, in such a case $K_{2} \sim K_{1}$ what is a contradiction as well. Hence, $\widetilde{a} \ne 0$. 

If $\widetilde{a} \ne 0$, without any loss of generality one puts $\widetilde{a}=1$, $\widetilde{c}=0$ (compare (\ref{symmetry_transformation_a}) and (\ref{symmetry_transformation_c})). Eqs. (\ref{rozpisane_rownanie_master_1}) and (\ref{rozpisane_rownanie_master_2}) simplify considerably and they read $D_{qz} = 0 \ \Longrightarrow \ D_{z} = D_{z} (z)$ and $E_{q}=0 \ \Longrightarrow \ E=E(z)$. The field equation (\ref{HH_resztki_2}) implies $E=E_{0} D_{z}$ where $E_{0}$ is a constant. From the field equation (\ref{HH_resztki_1}) one finds $E_{0} E_{zzz} = 2 \Lambda$. Hence, $E_{0} \ne 0$ and $E_{zzzz}=0$ hold. Moreover, the factor $E_{zzz}^{2} - 2 \Lambda D_{zzzz} \equiv D_{zzzz} (E_{0} E_{zzz} - 2 \Lambda)=0$. Consequently, a space is not type-[II] anymore what is a contradiction.
\end{proof}

\subsection{Type $\{ [\textrm{D}]^{ee} \otimes [\textrm{II}]^{n}, [++,--] \}$}
\label{Sekcja_typ_II_pp_mm}

In this Section we assume that
\begin{equation}
\label{warunki_Typ_II_pp_mm}
D_{z} = 0, \ \Lambda \ne 0,  \ E_{zzz} \ne 0
\end{equation}

\subsubsection{General case}

\begin{Twierdzenie}
\label{Twierdzenie_typ_II_pp_mm}
Let $(\mathcal{M}, ds^{2})$ be an Einstein complex space of the type $\{ [\textrm{D}]^{ee} \otimes [\textrm{II}]^{n}, [++,--] \}$. Then there exists a local coordinate system $(q,p,x,w)$ such that the metric takes the form 
\begin{eqnarray}
\label{metryka_TypII_pp_mm_ostateczna}
\frac{1}{2} ds^{2} &=& x^{-2} \bigg\{ -dpdx -Z \, dqdx - xZ_{w} \, dq dw 
\\ \nonumber
&& \ \ \ \ \ \
+   \left( \frac{\mu_{0}}{2}  x^{3} + \frac{\Lambda}{3} + \frac{wx}{2}   \right) \left( dp^{2} + 2Z \, dp dq + Z^{2} \, dq^{2} \right)  \bigg\}
\end{eqnarray}
where $\mu_{0}=1$, $\Lambda \ne 0$ is the cosmological constant, $Z$ is defined as follows
\begin{equation}
\label{Typ_II_pp_mm_def_Z}
Z(q,w) := -2 \int \frac{F_{w} \, dw}{w (q+F)^{2}}
\end{equation}
and $F=F(w)$ is an arbitrary holomorphic function such that $F_{w} \ne 0$.
\end{Twierdzenie}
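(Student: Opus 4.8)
The plan is to exploit the assumption $D_{z}=0$ to collapse the field equations to a single PDE for $E$, then linearise it by adopting $E_{zz}$ as a new coordinate, and finally recognise the surviving consistency condition as the Liouville equation.

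First I would observe that $D_{z}=0$ (hence $D_{zz}=D_{zzzz}=0$) makes Eq.~(\ref{HH_resztki_2}) an identity and reduces Eq.~(\ref{HH_resztki_1}) to $E_{zzq}=E\,E_{zzz}$. Setting $S:=E_{zz}$ this becomes $S_{q}=E\,S_{z}$, and the standing hypothesis $E_{zzz}\ne 0$ guarantees $S_{z}\ne 0$. I would then repeat the coordinate change from the proof of Theorem~\ref{Twierdzenie_typ_II_pp_pp}: introduce $(q',w)$ with $q'=q$, $w=S(q,z)$, inverted as $z=Z(q',w)$; exactly as in (\ref{coordinates_qw_typ_II}) one gets $S_{z}=1/Z_{w}$, $S_{q}=-Z_{q'}/Z_{w}$, so the field equation collapses to $E=-Z_{q'}$. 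All that remains is the \emph{definition} of $w$, namely $E_{zz}=w$; using $\partial_{z}=Z_{w}^{-1}\partial_{w}$ on functions of $(q',w)$ together with $E=-Z_{q'}$ one computes $E_{zz}=-Z_{w}^{-1}\,\partial_{w}\partial_{q'}\ln Z_{w}$, so this requirement reads $\partial_{q'}\partial_{w}\ln Z_{w}=-w\,Z_{w}$.

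The hard part will be solving this last equation. My approach would be to substitute $\Psi:=\ln(w\,Z_{w})$; since $\partial_{w}\ln Z_{w}=\Psi_{w}-w^{-1}$ and $w^{-1}$ is $q'$-independent, the equation becomes $\Psi_{q'w}=-e^{\Psi}$, the Liouville equation, whose general holomorphic solution is $e^{\Psi}=-2\,g'(q')\,h'(w)\,(g(q')+h(w))^{-2}$ with $g,h$ arbitrary. Reparametrising $q'$ by $g$ --- which by (\ref{gauge}) and (\ref{transformacja_na_z_ina_w}) rescales $z$, hence $Z_{w}$, by $1/g'$ while leaving $w=E_{zz}$ invariant --- cancels the Jacobian factor $g'$; writing $F:=h$ and dropping primes leaves $w\,Z_{w}=-2F_{w}(q+F)^{-2}$, i.e. $Z_{w}=-2F_{w}\,[w(q+F)^{2}]^{-1}$. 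Integrating this in $w$ (the free additive function of $q$ being gauged away by the remaining $p$-translation/$h$-freedom in (\ref{gauge})) yields precisely (\ref{Typ_II_pp_mm_def_Z}), and nondegeneracy of the metric forces $Z_{w}\ne 0$, which is exactly the stated condition $F_{w}\ne 0$.

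It then remains to read off the metric and verify the Petrov--Penrose type. With $D_{z}=D_{zz}=0$ and $E_{zz}=w$, (\ref{postac_ABQ}) gives $\mathcal{A}=\tfrac12\mu_{0}x^{3}+\tfrac{\Lambda}{3}+\tfrac12 wx$, $\mathcal{Q}=-z\mathcal{A}$, $\mathcal{B}=z^{2}\mathcal{A}-xE$; feeding these into (\ref{metryka_HH_ekspandujaca_D_any_in_qpxz}), expanding $d(xz)=z\,dx+xZ_{q'}\,dq+xZ_{w}\,dw$ and using $Z_{q'}=-E$, the contribution $-xZ_{q'}\,dq^{2}=+xE\,dq^{2}$ cancels the $-xE$ in $\mathcal{B}$; with $z=Z$ and the earlier gauge choice $\mu_{0}=1$ this is precisely (\ref{metryka_TypII_pp_mm_ostateczna}). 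Finally, $E_{zzz}=Z_{w}^{-1}\ne 0$ and $D_{zzzz}=0$ imply $2\Lambda D_{zzzz}\ne E_{zzz}^{2}$, so by Table~\ref{Kryteria_krzywizy_ASD} the ASD Weyl spinor is genuinely type-$[\textrm{II}]$ (and not $[\textrm{D}]^{nn}$), while $D_{z}=0$ yields the $[++,--]$ optical character of $\mathcal{I}_{3}$, as claimed.
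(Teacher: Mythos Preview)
Your proof is correct and follows essentially the same route as the paper: reduce to $S_{q}=ES_{z}$ with $S=E_{zz}$, pass to $(q',w)$ with $w=S$, obtain $E=-Z_{q'}$, and recognise the consistency condition as a Liouville equation for $wZ_{w}$. The only cosmetic differences are that the paper writes the Liouville equation in the form $H=-\partial_{w}(H_{q'}/H)$ with $H:=wZ_{w}$ rather than your $\Psi_{q'w}=-e^{\Psi}$, and absorbs the $q'$-reparametrisation by observing $Q_{q'}\,dq'=dQ$ directly in the metric rather than invoking the gauge (\ref{gauge}); your version is in fact slightly more careful in justifying the removal of the additive function of $q$ after integrating $Z_{w}$ and in explicitly checking the type-$[\textrm{II}]$ criteria at the end.
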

\begin{proof}
With $D_{z}=0$ the field equation (\ref{HH_resztki_1}) gives
\begin{equation}
\label{HH_resztki_1_2}
E \, dE_{zz} \wedge dq - dz \wedge d E_{zz} = 0
\end{equation}
We introduce a coordinate transformation $(q,z) \longrightarrow (q',w)$ such that $q'=q$ and $w = E_{zz} (q,z)$. Thus, $z=Z(q',w)$. From (\ref{HH_resztki_1_2}) one finds
\begin{equation}
\label{wzor_na_EE}
E = -Z_{q'}
\end{equation}
Analogously like in the proof of Theorem \ref{Twierdzenie_typ_II_pp_pp} (compare formulas (\ref{coordinates_qw_typ_II}) and (\ref{transformacja_pochodnych_ffunkci})) we find that a consistency condition between equation $w = E_{zz}$ and (\ref{wzor_na_EE}) yields
\begin{equation}
\label{rownanie_Liouvilla_typ_II}
w Z_{w} = - \frac{\partial}{\partial w} \left( \frac{Z_{q'w}}{Z_{w}} \right)
\end{equation}
Substituting $H := w Z_{w}$ one rewrites (\ref{rownanie_Liouvilla_typ_II}) in the form
\begin{equation}
\label{Liouville_equation}
H = - \partial_{w} \left(  \frac{H_{q'}}{H} \right)
\end{equation}
Eq. (\ref{Liouville_equation}) is a famous Liouville equation with a general solution
\begin{equation}
\label{Liouville_equation_1}
H = -\frac{2 F_{w} Q_{q'}}{(F+Q)^{2}}, \ F=F(w), \ Q=Q(q')
\end{equation}
Hence
\begin{equation}
Z  = -2 Q_{q'} \int \frac{F_{w} dw}{w (F+Q)^{2}}
\end{equation}
At this point we have everything to write down the metric. In the metric the function $Q_{q'}$ is always combined with the factor $dq'$. Because $Q_{q'} dq' = dQ$, we treat $Q$ as a new variable. Denoting $Q$ by $q$ we arrive at the metric (\ref{metryka_TypII_pp_mm_ostateczna}).
\end{proof}

\subsubsection{Additional symmetries}

\begin{Lemat}
A complex Einstein space of the type $\{ [\textrm{D}]^{ee} \otimes [\textrm{II}]^{n}, [++,--] \}$ does not admit a symmetry algebra of dimension $\geqslant 2$.
\end{Lemat}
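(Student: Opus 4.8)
The plan is to imitate, step by step, the argument used for Lemma~\ref{Lemat_typ_II_brak_innych_symterii}, now applied to the symmetry system (\ref{rozpisane_rownanie_master}) restricted to the present subclass, where $D_{z}=0$, $\Lambda\neq 0$ and $E_{zzz}\neq 0$ (cf.~(\ref{warunki_Typ_II_pp_mm})). First I would assume that, besides $K_{1}=\partial_{p}$ (cf.~(\ref{pierwszy_Killing})), a second independent homothetic vector $K_{2}$ exists. Since $\Lambda\neq 0$, the integrability condition (\ref{symmetry_integrability_condition}) together with Eq.~(\ref{rozpisane_rownanie_master_4}) forces $\chi_{0}=0$ and $\widetilde{\epsilon}=0$, so $K_{2}$ is a Killing vector and Eq.~(\ref{rozpisane_rownanie_master_3}) is satisfied identically. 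With $D_{z}=0$ the function $D=D(q)$ drops out of the metric and of the field equations (\ref{HH_resztki}) (it is pure gauge by (\ref{transformacja_na_D})), so the remaining master equations (\ref{rozpisane_rownanie_master_1}), (\ref{rozpisane_rownanie_master_2}), (\ref{rozpisane_rownanie_master_5}) collapse to a short system in $\widetilde{a}$, $\widetilde{c}$, $\widetilde{\alpha}$ and the single unknown $E$.

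Next I would run the usual dichotomy. If $\widetilde{a}=0$, then (\ref{rozpisane_rownanie_master_2}) reduces to $\widetilde{c}_{q}E_{z}+\widetilde{c}_{qq}=0$; were $\widetilde{c}_{q}\neq 0$, this would make $E_{z}$ a function of $q$ alone, hence $E_{zz}=0$ and $E_{zzz}=0$, contradicting the type-[II] hypothesis. Therefore $\widetilde{c}=\mathrm{const}$ and $K_{2}\sim K_{1}$, contradicting the assumed $2$-dimensionality of the algebra. If $\widetilde{a}\neq 0$, I would normalize $\widetilde{a}=1$, $\widetilde{c}=0$ using (\ref{symmetry_transformation_a}) and (\ref{symmetry_transformation_c}) (legitimate since $\chi_{0}=0$); then (\ref{rozpisane_rownanie_master_1}) and (\ref{rozpisane_rownanie_master_5}) merely pin down $\widetilde{\alpha}=0$, while (\ref{rozpisane_rownanie_master_2}) reduces to $E_{q}=0$, i.e.\ $E=E(z)$.

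The decisive step --- the only place where the specific structure of this subclass enters --- is to feed $E=E(z)$ back into the field equation (\ref{HH_resztki_1}) with $D_{z}=0$: it becomes $EE_{zzz}=0$. By analyticity, $E\equiv 0$ or $E_{zzz}\equiv 0$, each contradicting $E_{zzz}\neq 0$. Hence no second symmetry exists. I do not expect any genuine obstacle: the calculation is short and parallels Lemma~\ref{Lemat_typ_II_brak_innych_symterii} almost verbatim; the only point deserving a line of care is that the normalization $\widetilde{a}=1$, $\widetilde{c}=0$ is compatible with the gauge already fixed in Theorem~\ref{Twierdzenie_typ_II_pp_mm}, which it is, since the dimension of the symmetry algebra is gauge-invariant and $E_{zzz}\neq 0$ holds in every admissible gauge.
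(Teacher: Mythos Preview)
Your proof is correct and follows essentially the same route as the paper, which merely defers to Lemma~\ref{Lemat_typ_II_brak_innych_symterii}. You have carried out exactly the adaptation the paper leaves implicit, in particular replacing the step that used $D_{z}\neq 0$ by the direct implication $E_{zz}=0\Rightarrow E_{zzz}=0$, and then using the field equation (\ref{HH_resztki_1}) with $D_{z}=0$ to obtain $EE_{zzz}=0$ and the final contradiction.
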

\begin{proof}
We skip the proof due to its similarity to that of Lemma \ref{Lemat_typ_II_brak_innych_symterii}.
\end{proof}


\section{Types $[\textrm{D}]^{ee} \otimes [\textrm{D}]^{nn}$}
\label{section_Dee_x_Dnn}
\setcounter{equation}{0}

In this Section we consider a case for which
\begin{equation}
\label{warunki_Typ_D_pp_pp}
 \Lambda \ne 0, \ E_{zzzz} = 0, \ E_{zzz}^{2} - 2\Lambda D_{zzzz} = 0
\end{equation}

\begin{Lemat}
The key function $W$ which satisfies type-[D] conditions (\ref{warunki_Typ_D_pp_pp}) and field equations (\ref{HH_resztki}) has the form
\begin{eqnarray}
\label{W_dla_typu_D}
W &=& - \frac{3 a^{2}}{4 \Lambda} \frac{y^{4}}{x} + \left( \frac{ab}{\Lambda} + \frac{a}{2x} \right) y^{3} + \left( \frac{\mu_{0}}{4} x^{2} - \frac{\Lambda}{12x} - \frac{b}{2} - \frac{3ac-3 a_{q}}{2 \Lambda} x  \right) y^{2} 
\\ \nonumber
&& + \left( \frac{3 a d- b_{q}}{\Lambda} x^{2} + \frac{c}{2} x \right) y  - \frac{d}{2} x^{2} + g
\end{eqnarray}
where $\mu_{0}=1$, $\Lambda \ne 0$ is the cosmological constant, $a$, $b$, $c$, $d$ and $g$ are functions of $q$ only, which satisfy the following equations
\begin{eqnarray}
\label{rownania_pola_typ_D_ostateczne}
&& \partial_{q} (3ac - 3 a_{q} - b^{2}) = 0
\\ \nonumber
&& 6d a_{q} + 3a d_{q} - b_{qq} - c b_{q} = 0
\end{eqnarray}
\end{Lemat}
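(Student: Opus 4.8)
The plan is to specialize the general key function (\ref{Key_function_step_3}) valid for $[\textrm{D}]^{ee}\otimes[\textrm{deg}]^{n}$ spaces by imposing the three type-[D] conditions (\ref{warunki_Typ_D_pp_pp}) together with the field equations (\ref{HH_resztki}). First, the condition $E_{zzzz}=0$ integrates immediately in $z$ to
\[
E = a(q)\,z^{3}+b(q)\,z^{2}+c(q)\,z+d(q),
\]
which defines the functions $a,b,c,d$; note that $E_{zzz}=6a$ is then independent of $z$.

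Next I would read off $D$ from the first field equation. Since $\Lambda\ne0$, (\ref{HH_resztki_1}) gives $2\Lambda D_{z}=E E_{zzz}-E_{zzq}$, and substituting the cubic $E$ yields $D_{z}$ explicitly as a cubic polynomial in $z$ whose coefficients are polynomial in $a,b,c,d,a_{q},b_{q}$. Integrating once in $z$ determines $D$ up to an additive $q$-function $D_{0}(q)$, which contributes only a term $-D_{0}x^{3}$ to $W$ and can be removed by the gauge freedom (\ref{transformacja_na_D}); so I take $D_{0}=0$. It is worth remarking at this point that the third condition $E_{zzz}^{2}=2\Lambda D_{zzzz}$ is then not an extra restriction: applying $\partial_{z}^{3}$ to $2\Lambda D_{z}=6aE-E_{zzq}$ and using $E_{zzzz}=0$ gives $2\Lambda D_{zzzz}=36a^{2}=E_{zzz}^{2}$ automatically.

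The real content is the second field equation (\ref{HH_resztki_2}), $D_{zq}-E D_{zz}+D_{z}E_{z}=0$. Substituting the polynomial expressions for $E$, $E_{z}$, $D_{z}$, $D_{zz}$, $D_{zq}$ makes the left-hand side a polynomial in $z$ of degree five which must vanish identically. I expect the coefficients of $z^{5}$ and $z^{4}$ to cancel trivially, those of $z^{3}$ and $z^{2}$ to cancel once the explicit form of $D_{z}$ from (\ref{HH_resztki_1}) is used, and the coefficients of $z^{1}$ and $z^{0}$ to reproduce, up to the overall factor $\Lambda^{-1}$, precisely the two equations in (\ref{rownania_pola_typ_D_ostateczne}); in particular the $z^{1}$ coefficient should collapse to $\Lambda^{-1}\partial_{q}(3ac-3a_{q}-b^{2})$ and the $z^{0}$ coefficient to $\Lambda^{-1}(6da_{q}+3ad_{q}-b_{qq}-cb_{q})$.

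Finally I would substitute $z=-y/x$ together with the polynomial forms of $E$ and $D$ into (\ref{Key_function_step_3}) and collect the powers of $y$ (with $x$-dependent coefficients), which reproduces (\ref{W_dla_typu_D}), the free function $g=g(q)$ being inherited directly from (\ref{Key_function_step_3}) and unconstrained by (\ref{HH_resztki}). The one genuinely laborious step is the expansion of the products in (\ref{HH_resztki_2}) and the careful bookkeeping of its six $z$-coefficients; everything else is direct integration, gauge fixing and substitution.
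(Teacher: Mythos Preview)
Your proposal is correct and follows essentially the same route as the paper. The only difference is a minor reordering: the paper first uses the type-[D] condition $2\Lambda D_{zzzz}=E_{zzz}^{2}$ to conclude that $D$ is a quartic in $z$ and then fixes its coefficients from (\ref{HH_resztki_1}), whereas you solve (\ref{HH_resztki_1}) directly for $D_{z}$ (which is legitimate since $\Lambda\ne0$) and observe that $2\Lambda D_{zzzz}=E_{zzz}^{2}$ then holds automatically. Your ordering has the small advantage of making this redundancy explicit, and your use of the gauge freedom (\ref{transformacja_na_D}) to eliminate the integration constant $D_{0}(q)$ is exactly what the paper does implicitly when it writes $D$ without a constant term.
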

\begin{proof}
Condition $E_{zzzz}=0$ yields 
\begin{equation}
\label{postac_E_typ_D}
E=az^{3} + bz^{2} + cz +d
\end{equation} 
where $a$, $b$, $c$ and $d$ are arbitrary functions of $q$ only. Eq. $2\Lambda \, D_{zzzz} = E_{zzz}^{2}$ implies that $D$ is a fourth-order polynomial in $z$. Inserting these forms of $E$ and $D_{z}$ into Eq. (\ref{HH_resztki_1}) one finds 
\begin{equation} 
\label{postac_D_typ_D}
D = \frac{3a^{2}}{4 \Lambda} \, z^{4} + \frac{ab}{\Lambda} \, z^{3} + \frac{3ac - 3a_{q}}{2 \Lambda} \, z^{2} + \frac{3ad - b_{q}}{\Lambda} \, z 
\end{equation}
The second field equation (\ref{HH_resztki_2}) gives (\ref{rownania_pola_typ_D_ostateczne}). Putting (\ref{postac_E_typ_D}) and (\ref{postac_D_typ_D}) into (\ref{Key_function_step_3}) one gets (\ref{W_dla_typu_D}).
\end{proof}

Transformation formulas for $a$, $b$, $c$ and $d$ are crucial for a further analysis. Using (\ref{transformacja_na_E}), (\ref{transformacja_na_z_ina_w}) and remembering that the gauge (\ref{gauge}) is restricted ($\sigma =0$) one finds
\begin{eqnarray}
\label{transformacje_na_funkcje_w_typie_D}
a' &=&  f a
\\ \nonumber
 b' &=& b + \frac{3 h_{q}}{f} a'
\\ \nonumber
f c' &=& c - \frac{3h_{q}^{2}}{f} a' + 2h_{q} b' + \frac{f_{q}}{f}
\\ \nonumber
 f^{2} d' &=& d + \frac{h_{q}^{3}}{f} a' -  h_{q}^{2} b' +  f h_{q} c' +  f \frac{d}{dq} \left(  \frac{h_{q}}{f} \right)
\end{eqnarray}
Thus, we have two arbitrary gauge functions $f=f(q)$ and $h=h(q)$ at our disposal. 

The function $a(q)$ has a deep geometrical meaning. The ASD Weyl spinor is of the type [D] so "the fourth" congruence $\mathcal{C}_{n^{\dot{A}}}$ exists. Under (\ref{postac_E_typ_D}) and (\ref{postac_D_typ_D}), Eqs. (\ref{ASD_curvature_for _Lambda_ne0}) and (\ref{second_multiple_dotted_PS}) take the form
\begin{equation}
\label{forma_n_dot_A}
2 C_{\dot{A}\dot{B}\dot{C}\dot{D}} = 4 \Lambda \, n_{(\dot{A}} n_{\dot{B}} p_{\dot{C}} p_{\dot{D})}, \ n_{\dot{A}} = \left[ 1 - \frac{3 a y}{\Lambda}, \frac{3a x}{\Lambda} \right]
\end{equation}
It is straightforward but tedious to check by substituting (\ref{forma_n_dot_A}) into (\ref{ASD_null_strings_equations}) that $\mathcal{C}_{n^{\dot{A}}}$ is nonexpanding. Thus, we deal here with the type $[\textrm{D}]^{ee} \otimes [\textrm{D}]^{nn}$. Properties of $\mathcal{I}_{3}$ depend on $D_{z}$ (compare (\ref{warunek_na_podtyp_pp_pp}) and (\ref{warunek_na_podtyp_pp_mm})). Hence
\begin{eqnarray}
\label{wlasnosci_typ_D_I3}
p(\mathcal{I}_{3}) = [++] \ &\textrm{for}& \ a \ne 0
\\ \nonumber
p(\mathcal{I}_{3}) = [--] \ &\textrm{for}& \ a = 0
\end{eqnarray}
For $\mathcal{I}_{2}$ we find $\theta_{2}, \varrho_{2} \sim n_{\dot{A}} M^{\dot{A}} \sim a$. For $\mathcal{I}_{4}$ we have $\theta_{4}, \varrho_{4} \sim  n_{\dot{A}} N^{\dot{A}}$ which is nonzero as long as $\mu_{0} \ne 0$. Thus
\begin{eqnarray}
\label{wlasnosci_typ_D_I2}
p(\mathcal{I}_{2}) = [++] \ &\textrm{for}& \ a \ne 0
\\ \nonumber
p(\mathcal{I}_{2}) = [--] \ &\textrm{for}& \ a = 0
\\ \nonumber
p(\mathcal{I}_{4}) = [++] \ &\textrm{for}& \ \textrm{arbitrary } a
\end{eqnarray}
Gathering, we deal here with the following types
\begin{eqnarray}
\nonumber
\{ [\textrm{D}]^{ee} \otimes [\textrm{D}]^{nn},[++,++,++,++] \} &\textrm{ for }& a \ne 0
\\ \nonumber
\{ [\textrm{D}]^{ee} \otimes [\textrm{D}]^{nn},[++,--,--,++] \} &\textrm{ for }& a = 0
\end{eqnarray}

Now we are ready to formulate two theorems.
\begin{Twierdzenie}
Let $(\mathcal{M}, ds^{2})$ be an Einstein complex space of the type $\{ [\textrm{D}]^{ee} \otimes [\textrm{D}]^{nn}, [++,++,++,++]\}$. Then there exists a local coordinate system $(q,p,x,y)$ such that the metric takes the form 
\begin{eqnarray}
\label{metryka_TypD_ostateczna_pppppppp}
\frac{1}{2} ds^{2} &=& x^{-2} \bigg\{  dq dy - dp dx + \left(  \frac{9}{\Lambda} y^{2} - 3 y + \frac{\mu_{0}}{2} x^{3} + \frac{3 c_{0}}{\Lambda} x^{2} +  \frac{\Lambda}{3}    \right)   dp^{2}
\\ \nonumber
&& \ \ \ \ \ \ - 2 \left(  \frac{6}{\Lambda} \frac{y^{3}}{x} -3 \frac{y^{2}}{x}  +  \left( \frac{\mu_{0}}{2} x^{2} + \frac{\Lambda}{3x}   \right) y  + \frac{3 d_{0}}{\Lambda} x^{2}  \right)   dpdq
\\ \nonumber
&& \ \ \ \ \ \ \left. + \left( \frac{3 }{\Lambda} \frac{y^{4}}{x^{2}}- 2  \frac{y^{3}}{x^{2}}   + \left( \frac{\mu_{0}}{2}x + \frac{\Lambda}{3x^{2}} -\frac{3c_{0}}{\Lambda} \right) y^{2} + \left( \frac{6d_{0}}{\Lambda} x + c_{0} \right) y  -d_{0}x  \right)    dq^{2}       \right \}
\end{eqnarray}
where $\mu_{0} = 1$, $\Lambda \ne 0$ is the cosmological constant, $c_{0}$ and $d_{0}$ are constants. The metric (\ref{metryka_TypD_ostateczna_pppppppp}) is equipped with a 2D symmetry algebra $2A_{1}$ with generators
\begin{equation}
\label{symmetry_algebra_type_D_2D}
K_{1} = \partial_{p}, \ K_{2} = \partial_{q}, \ [K_{1}, K_{2}]=0
\end{equation}
\end{Twierdzenie}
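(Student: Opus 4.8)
The starting point is the key function $W$ of the form (\ref{W_dla_typu_D}), whose five coefficient functions $a,b,c,d,g$ of $q$ satisfy the field equations (\ref{rownania_pola_typ_D_ostateczne}); by (\ref{wlasnosci_typ_D_I3})--(\ref{wlasnosci_typ_D_I2}) the type $\{[\textrm{D}]^{ee}\otimes[\textrm{D}]^{nn},[++,++,++,++]\}$ is precisely the branch $a\ne 0$. The plan is to exhaust the residual gauge freedom (\ref{transformacje_na_funkcje_w_typie_D}) and then read off both the metric and its symmetries.

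First I would spend the gauge function $f=dq'/dq$. Since $a'=fa$ with $a\ne 0$, the equation $fa=\textrm{const}$ can be integrated locally for $q'$, so $a$ is brought to a nonzero constant; matching the coefficient of $y^{2}$ in $\mathcal{A}=-xW_{yy}+\mu_{0}x^{3}+\Lambda/6$ (equivalently the $dp^{2}$-coefficient $9y^{2}/\Lambda$ of the target metric (\ref{metryka_TypD_ostateczna_pppppppp})) fixes the normalisation $a=1$. Keeping $a=1$ then forces $f=1$, but the second gauge function $h=h(q)$ is still available, and since $b'=b+3h_{q}$ the choice $h_{q}=-\tfrac13 b$ gives $b=0$. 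With $a=1$, $b=0$ one has $a_{q}=b_{q}=0$, so the field equations (\ref{rownania_pola_typ_D_ostateczne}) collapse to $\partial_{q}(3c)=0$ and $3d_{q}=0$; hence $c=c_{0}$ and $d=d_{0}$ are constants. The last gauge function $g=g(q)$ does not enter the metric at all --- the metric depends on $W$ only through $W_{x}$, $W_{y}$, $W_{xx}$, $W_{xy}$, $W_{yy}$ --- so it can be dropped, and $W$ is then completely determined up to the two constants $c_{0}$, $d_{0}$.

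Next comes a purely computational step: substitute this $W$ into (\ref{zwiazek_miedzy_Qab_i_W}) to get $\mathcal{A}$, $\mathcal{Q}$, $\mathcal{B}$, and then into (\ref{metryka_HH_ekspandujaca_D_any}). I would not carry out the arithmetic here; the outcome is exactly the metric (\ref{metryka_TypD_ostateczna_pppppppp}) with $\mu_{0}=1$. For the symmetry statement, note that in the coordinates of (\ref{metryka_TypD_ostateczna_pppppppp}) every metric coefficient depends on $x$ and $y$ only, so both $K_{1}=\partial_{p}$ and $K_{2}=\partial_{q}$ are Killing vectors; being coordinate fields they commute, $[K_{1},K_{2}]=0$, and the space carries the abelian algebra $2A_{1}$. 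Equivalently, this can be checked from the master system (\ref{rozpisane_rownanie_master}): since $a=1$, $b=0$, $c=c_{0}$, $d=d_{0}$ give $D_{q}=E_{q}=0$, the choice $\widetilde{a}=1$, $\widetilde{c}=\widetilde{\epsilon}=\chi_{0}=\widetilde{\alpha}=0$ solves it, which is precisely the assertion that $\partial_{q}$ generates a symmetry.

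The only delicate point --- more bookkeeping than genuine obstacle --- is the gauge analysis: one must check that after using $f$ to normalise $a$ the function $h$ is still free, that the normalisation $a=1$ is consistent on both the key-function and the metric sides, and that (\ref{rownania_pola_typ_D_ostateczne}) really forces $c$ and $d$ to be \emph{constant} rather than merely constrained. Everything else is the long but mechanical passage from $W$ to (\ref{metryka_TypD_ostateczna_pppppppp}).
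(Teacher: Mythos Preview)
Your argument for the metric is essentially the paper's: use $f$ to set $a=1$, then $h$ to set $b=0$, then (\ref{rownania_pola_typ_D_ostateczne}) forces $c=c_{0}$, $d=d_{0}$ constant, and substitute into (\ref{zwiazek_miedzy_Qab_i_W})--(\ref{metryka_HH_ekspandujaca_D_any}). That part is fine.

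The gap is in the symmetry statement. You only verify that $\partial_{p}$ and $\partial_{q}$ \emph{are} Killing (because the metric coefficients depend on $x,y$ alone), and then declare the algebra to be $2A_{1}$. But the theorem asserts that the full symmetry algebra is \emph{exactly} $2A_{1}$, i.e.\ there are no further Killing or proper homothetic vectors. Your master-equation check only exhibits one particular solution $(\widetilde{a},\widetilde{c},\widetilde{\epsilon},\chi_{0},\widetilde{\alpha})=(1,0,0,0,0)$; it does not show that every solution is a constant-coefficient combination of $\partial_{p}$ and $\partial_{q}$. The paper closes this by running through (\ref{rozpisane_rownanie_master}) with $a=1$, $b=0$, $c=c_{0}$, $d=d_{0}$: since $\Lambda\ne 0$ one has $\chi_{0}=0$ and $\widetilde{\epsilon}=0$; then with $D_{z}\ne 0$ and $E_{zz}\ne 0$ the equations force $\widetilde{a}_{q}=\widetilde{c}_{q}=0$ and $\widetilde{\alpha}=0$, so the most general symmetry is $\widetilde{a}_{0}\partial_{q}+\widetilde{c}_{0}\partial_{p}$. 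You should add this short computation; otherwise the ``2D'' in the conclusion is unjustified.
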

\begin{proof}
With $a \ne 0$ assumed, one puts $a=1$ and $b=0$ without any loss of generality (compare (\ref{transformacje_na_funkcje_w_typie_D})). From Eqs. (\ref{rownania_pola_typ_D_ostateczne}) we find $c=c_{0} = \textrm{const}$ and $d=d_{0} = \textrm{const}$. Inserting (\ref{W_dla_typu_D}) with such a choice of $a$, $b$, $c$ and $d$ into (\ref{zwiazek_miedzy_Qab_i_W}) and (\ref{metryka_HH_ekspandujaca_D_any}) one arrives at (\ref{metryka_TypD_ostateczna_pppppppp}). 

Because $\Lambda \ne 0$ then necessarily $\chi_{0}=0$. Thus, a symmetry algebra consists of Killing vectors only. From (\ref{rozpisane_rownanie_master}) one gets the following relations
\begin{equation}
 \widetilde{a}_{q}=0, \  \widetilde{c}_{q}=0,  \ \widetilde{\alpha} = \widetilde{\epsilon}=0
\end{equation}
Hence, any Killing vector admitted by the metric (\ref{metryka_TypD_ostateczna_pppppppp}) has the form 
\begin{equation}
K = \widetilde{a}_{0} \partial_{p} + \widetilde{c}_{0} \partial_{q} 
\end{equation}
where $\widetilde{a}_{0}$ and $\widetilde{c}_{0}$ are constants. Immediately we arrive at (\ref{symmetry_algebra_type_D_2D}). 
\end{proof}

\begin{Twierdzenie}
Let $(\mathcal{M}, ds^{2})$ be an Einstein complex space of the type $\{ [\textrm{D}]^{ee} \otimes [\textrm{D}]^{nn}, [++,--,--,++]\}$. Then there exists a local coordinate system $(q,p,x,y)$ such that the metric takes the form 
\begin{eqnarray}
\label{metryka_TypD_ostateczna_ppmmmmpp}
\frac{1}{2} ds^{2} &=& x^{-2} \bigg\{  dq dy - dp dx + \left(  \frac{\mu_{0}}{2} x^{3} + \frac{\Lambda}{3} + b_{0} x \right)   dp^{2}
\\ \nonumber
&& \ \ \ \ \ \ - 2 \left(   \frac{\mu_{0}}{2} x^{3} + \frac{\Lambda}{3} + b_{0} x \right) \frac{y}{x} \,  dpdq
 + \left( \frac{\mu_{0}}{2} x^{3} + \frac{\Lambda}{3} \right) \frac{y^{2}}{x^{2}} \,  dq^{2}       \bigg\}
\end{eqnarray}
where $\mu_{0} = 1$, $\Lambda \ne 0$ is the cosmological constant and $b_{0}$ is an arbitrary constant. The metric (\ref{metryka_TypD_ostateczna_ppmmmmpp}) is equipped with a 4D symmetry algebra $A_{3,8} \oplus A_{1}$ (for $b_{0} \ne 0$) or $A_{4,8}$ (for $b_{0} =0$) with generators
\begin{eqnarray}
\label{symmetry_algebra_type_D_4D}
&& K_{1} = \partial_{p}, \ K_{2} = \partial_{q}, \ K_{3} = q \partial_{q} - y \partial_{y}, \ K_{4} = q \partial_{p} + x \partial_{y} + b_{0} (2qy \partial_{y} - q^{2} \partial_{q}), 
\\ \nonumber
&& [K_{1}, K_{2}]=0, \ [K_{1}, K_{3}]=0, \ [K_{1}, K_{4}]=0, 
\\ \nonumber
&& [K_{2}, K_{3}] = K_{2}, \ [K_{2}, K_{4}] = K_{1} - 2 b_{0} K_{3}, \ [K_{3}, K_{4}]= K_{4}
\end{eqnarray}
\end{Twierdzenie}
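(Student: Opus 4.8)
The plan is to carry the preceding Lemma one step further by imposing the defining condition $a=0$ of this subtype (see (\ref{wlasnosci_typ_D_I3})) on the key function (\ref{W_dla_typu_D}) and on the field equations (\ref{rownania_pola_typ_D_ostateczne}). With $a=0$ the first equation of (\ref{rownania_pola_typ_D_ostateczne}) becomes $\partial_{q}(b^{2})=0$, so $b\equiv b_{0}$ is a constant and the second equation of (\ref{rownania_pola_typ_D_ostateczne}) is then satisfied identically. Next I would normalise the gauge: for $a=0$ the rules (\ref{transformacje_na_funkcje_w_typie_D}) collapse to $a'=0$, $b'=b$, $fc'=c+2h_{q}b+f_{q}/f$ and $f^{2}d'=d-h_{q}^{2}b+fh_{q}c'+f\,\partial_{q}(h_{q}/f)$, so $b_{0}$ is gauge invariant while $f$ and $h$ are still free. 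Requiring $c'=0$ fixes $f_{q}/f=-(c+2h_{q}b_{0})$, and then $d'=0$ turns, after the substitution $v:=h_{q}$, into the Riccati equation $v_{q}=-d-cv-b_{0}v^{2}$, which has a local holomorphic solution; $f$ is recovered by one quadrature. Hence one may assume $c=d=0$, and the residual function $g=g(q)$ in (\ref{W_dla_typu_D}) plays no role, since only $W_{yy},W_{xy},W_{xx},W_{x},W_{y}$ enter (\ref{zwiazek_miedzy_Qab_i_W}).

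Once $a=c=d=0$ and $b=b_{0}$, the key function collapses to $W=\left(\tfrac{\mu_{0}}{4}x^{2}-\tfrac{\Lambda}{12x}-\tfrac{b_{0}}{2}\right)y^{2}+g$. Substituting this into (\ref{zwiazek_miedzy_Qab_i_W}) yields $\mathcal{A}=\tfrac{\mu_{0}}{2}x^{3}+\tfrac{\Lambda}{3}+b_{0}x$, $\mathcal{Q}=\mathcal{A}\,y/x$ and $\mathcal{B}=\left(\tfrac{\mu_{0}}{2}x^{3}+\tfrac{\Lambda}{3}\right)y^{2}/x^{2}$, and inserting these into the general metric (\ref{metryka_HH_ekspandujaca_D_any}) reproduces exactly (\ref{metryka_TypD_ostateczna_ppmmmmpp}). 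The $[\textrm{D}]^{nn}$ labelling is automatic: here $D=0$, so $D_{z}=0$ as in (\ref{warunek_na_podtyp_pp_mm}), and $E=b_{0}z^{2}$ gives $E_{zzzz}=0$, $E_{zzz}^{2}-2\Lambda D_{zzzz}=0$, i.e.\ (\ref{warunki_Typ_D_pp_pp}).

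For the symmetry algebra I would specialise the master system (\ref{rozpisane_rownanie_master}) to $D=0$, $E=b_{0}z^{2}$. Since $\Lambda\neq0$, (\ref{rozpisane_rownanie_master_4}) forces $\widetilde{\epsilon}=0$ and (\ref{symmetry_integrability_condition}) forces $\chi_{0}=0$; then (\ref{rozpisane_rownanie_master_1}) gives $\widetilde{\alpha}=0$, (\ref{rozpisane_rownanie_master_5}) gives $\widetilde{a}_{qqq}=0$, (\ref{rozpisane_rownanie_master_3}) becomes void, and (\ref{rozpisane_rownanie_master_2}) reduces to $\widetilde{c}_{qq}=0$ together with $\widetilde{a}_{qq}=-2b_{0}\widetilde{c}_{q}$. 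Solving, $\widetilde{c}=c_{0}+c_{1}q$ and $\widetilde{a}=a_{0}+a_{1}q-b_{0}c_{1}q^{2}$; feeding this into the general form of a homothetic vector from Section~\ref{Symmetries_general_approach} and collecting the four constants $a_{0},c_{0},a_{1},c_{1}$ gives precisely $K_{1},\dots,K_{4}$ of (\ref{symmetry_algebra_type_D_4D}). A direct bracket computation confirms the stated structure relations; matching them to the standard classification of four-dimensional Lie algebras then identifies the algebra. For $b_{0}\neq0$ the change of basis $\left\{K_{1},\,K_{2}/b_{0},\,-2K_{3}+K_{1}/b_{0},\,K_{4}\right\}$ exhibits an $\mathfrak{sl}(2)$ triple with $K_{1}$ decoupled, i.e.\ $A_{3,8}\oplus A_{1}$; for $b_{0}=0$ the algebra is indecomposable and coincides with $A_{4,8}$.

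The substitution work is routine. The two delicate points are (i) the simultaneous gauge normalisation $c=d=0$, which rests on local solvability of the Riccati equation for $v=h_{q}$ and should be argued rather than taken for granted, and (ii) the Lie-algebra identification, in particular exhibiting the explicit change of basis that splits off the $A_{1}$ summand precisely when $b_{0}\neq0$ and checking indecomposability when $b_{0}=0$. I expect (ii) to be the only genuine obstacle; once the gauge is fixed, the rest is bookkeeping.
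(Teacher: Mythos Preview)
Your proposal is correct and follows essentially the same route as the paper: set $a=0$, use the field equations (\ref{rownania_pola_typ_D_ostateczne}) to force $b=b_{0}$ constant, gauge away $c$ and $d$, substitute into (\ref{zwiazek_miedzy_Qab_i_W}) and (\ref{metryka_HH_ekspandujaca_D_any}), and then solve the master system (\ref{rozpisane_rownanie_master}) exactly as you describe. The paper simply asserts that $c$ and $d$ can be gauged away citing (\ref{transformacje_na_funkcje_w_typie_D}), whereas you spell out the Riccati equation for $v=h_{q}$; and for the algebra identification the paper uses the basis $e_{0}=K_{1}$, $e_{1}=K_{2}$, $e_{2}=K_{3}-\tfrac{1}{2b_{0}}K_{1}$, $e_{3}=\tfrac{1}{b_{0}}K_{4}$ (respectively $e_{1}=-K_{1}$, $e_{2}=-K_{4}$, $e_{3}=-K_{2}$, $e_{4}=-K_{3}$ when $b_{0}=0$), which is equivalent to yours.
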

\begin{proof}
If $a=0$ the function $b$ cannot be gauged away but $c$ and $d$ can (compare (\ref{transformacje_na_funkcje_w_typie_D})). Eqs. (\ref{rownania_pola_typ_D_ostateczne}) imply $b=b_{0} = \textrm{const}$. Inserting $a=c=d=0$ and $b=b_{0}$ into (\ref{W_dla_typu_D}) and then into (\ref{zwiazek_miedzy_Qab_i_W}) and (\ref{metryka_HH_ekspandujaca_D_any}) one proves (\ref{metryka_TypD_ostateczna_ppmmmmpp}). 

From (\ref{rozpisane_rownanie_master}) we find
\begin{equation}
 \widetilde{a}_{qqq}=0, \ \widetilde{\alpha} = \widetilde{\epsilon}=0, \ \widetilde{c}_{qq}=0, \ 2 b_{0} \widetilde{c}_{q} + \widetilde{a}_{qq}=0
\end{equation}
Consequently:
\begin{equation}
\widetilde{a} = -b_{0} m_{0} q^{2} + s_{0} q + r_{0}, \ \widetilde{c}  = m_{0} q + n_{0}
\end{equation}
where $m_{0}$, $s_{0}$, $r_{0}$ and $n_{0}$ are constants. Hence, any Killing vector admitted by the metric (\ref{metryka_TypD_ostateczna_ppmmmmpp}) has the form
\begin{equation}
K = n_{0} \partial_{p} + r_{0} \partial_{q} + s_{0} (q \partial_{q}-y \partial_{y}) + m_{0} (q \partial_{p} + x \partial_{y}) + m_{0}b_{0} (2  qy \partial_{y}- q^{2} \partial_{q}   )
\end{equation}
Thus, we arrive at (\ref{symmetry_algebra_type_D_4D}). If $b_{0} \ne 0$ then by substitution
\begin{equation}
e_{0} := K_{1}, \ e_{1} := K_{2}, \ e_{2} := K_{3} - \frac{1}{2b_{0}} K_{1}, \ e_{3} := \frac{1}{b_{0}} K_{4}
\end{equation}
we find that $[e_{0}, e_{i}]=0$ for $i=1,2,3$. Nonzero commutation rules read
\begin{equation}
[e_1, e_2] = e_1, \ [e_1, e_3] = -2 e_2, \ [e_2, e_3]=e_3
\end{equation}
From the Table I of \cite{Patera} we find that these commutation rules correspond to an algebra $A_{3,8} \oplus A_{1}$. If $b_{0}=0$ then by substitution
\begin{equation}
e_{1}:=-K_1, \ e_{2}:=-K_4, \ e_3:=-K_2, \ e_4 := -K_3
\end{equation}
we arrive at the commutation rules for an algebra $A_{4,8}$.
\end{proof}


\section{Types $[\textrm{D}]^{ee} \otimes [\textrm{III}]^{n}$}
\label{section_Dee_x_IIIn}
\setcounter{equation}{0}

\subsection{Type $\{ [\textrm{D}]^{ee} \otimes [\textrm{III}]^{n}, [++,++] \}$}

In this Section we deal with a case for which
\begin{equation}
\label{warunki_Typ_III_pp_pp}
D_{z} \ne 0, \ \Lambda = 0, \ E_{zzz} \ne 0 
\end{equation}

\subsubsection{General case}

\begin{Twierdzenie}
Let $(\mathcal{M}, ds^{2})$ be an Einstein complex space of the type $\{ [\textrm{D}]^{ee} \otimes [\textrm{III}]^{n}, [++,++] \}$. Then there exists a local coordinate system $(q,p,x,w)$ such that the metric takes the form 
\begin{eqnarray}
\label{metryka_TypIII_pp_pp_ostateczna}
\frac{1}{2} ds^{2} &=& x^{-2} \bigg\{ -dpdx -Z \, dqdx - xZ_{w} \, dq dw 
\\ \nonumber
&& \ \ \ \ \ \
+ \left( \frac{1}{2} \mu_{0} x^{3}  + \frac{1}{2} x \, S + x^{2} \, \frac{Z_{ww}}{Z_{w}} \right) dp^{2} 
\\ \nonumber
&& \ \ \ \ \ \ +2 \left[  \frac{1}{2} \mu_{0} x^{3}  Z + \frac{1}{2} xZ \, S - x^{2} \left( Z_{w} -  \frac{ZZ_{ww}}{Z_{w}} \right) \right] dp dq
\\ \nonumber
&& \ \ \ \ \ \ \ 
   + \left[  \frac{1}{2} \mu_{0} x^{3}  Z^{2} + \frac{1}{2} xZ^{2} S + x^{2} \left(  \frac{Z^{2} Z_{ww}}{Z_{w}} - 2Z Z_{w} \right)  \right] dq^{2}  \bigg\}
\end{eqnarray}
where $\mu_{0}=1$, $S=S(w)$ and $F=F(w)$ are arbitrary holomorphic functions such that $S_{w} \ne 0$ and $F_{w} \ne 0$ and
\begin{equation}
\label{Typ_III_pp_pp_def_Z}
Z(q,w) := -2 \int \frac{F_{w} \, dw}{S (q+F)^{2}}
\end{equation}
\end{Twierdzenie}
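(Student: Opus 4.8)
The plan is to follow the route already used in the proofs of Theorems \ref{Twierdzenie_typ_II_pp_pp} and \ref{Twierdzenie_typ_II_pp_mm}, adapted to the present hypotheses (\ref{warunki_Typ_III_pp_pp}): $D_{z} \ne 0$, $\Lambda = 0$, $E_{zzz} \ne 0$. Since $D_{z} \ne 0$, equation (\ref{HH_resztki_2}) can be rewritten as $\partial_{q}(D_{z}^{-1}) = \partial_{z}(E/D_{z})$, so there is a potential $\Omega = \Omega(q,z)$ with $D_{z} = 1/\Omega_{z}$ and $E = \Omega_{q}/\Omega_{z}$, exactly as in (\ref{definicja_Omega}). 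Substituting this into the remaining field equation (\ref{HH_resztki_1}) with $\Lambda = 0$ and setting $S := E_{zz}$ turns it into $\Omega_{q} S_{z} - \Omega_{z} S_{q} = 0$, i.e. $dS \wedge d\Omega = 0$. Hence $S$ is functionally dependent on $\Omega$, so $S = S(\Omega)$. This is precisely where the vanishing of $\Lambda$ is used; for $\Lambda \ne 0$ one instead obtains $dS \wedge d\Omega = 2\Lambda\, dz \wedge dq$, as in Theorem \ref{Twierdzenie_typ_II_pp_pp}.

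Next I would change coordinates $(q,z) \to (q',w)$ by $q' = q$, $w = \Omega(q,z)$, so $z = Z(q',w)$. The chain-rule identities derived just as in the proof of Theorem \ref{Twierdzenie_typ_II_pp_pp} give $\Omega_{z} = 1/Z_{w}$ and $\Omega_{q} = -Z_{q'}/Z_{w}$, hence $D_{z} = Z_{w}$, $E = -Z_{q'}$, and $S = S(w)$ becomes a function of $w$ alone. The only remaining content of the system is the consistency condition between $E = -Z_{q'}$ and $E_{zz} = S(w)$; applying the derivative-transformation rules (\ref{transformacja_pochodnych_ffunkci}) twice yields $\tfrac{1}{Z_{w}} \partial_{w}\!\big(-Z_{q'w}/Z_{w}\big) = S(w)$, that is, $\partial_{w}\!\big(-Z_{q'w}/Z_{w}\big) = S(w)\, Z_{w}$.

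The key step is to introduce $H := S(w)\, Z_{w}$. Since $S$ is independent of $q'$ one has $H_{q'}/H = Z_{q'w}/Z_{w}$, and the consistency condition collapses to the Liouville equation $H = -\partial_{w}(H_{q'}/H)$, which is exactly (\ref{Liouville_equation}) (the case $S(w) = w$ recovers the situation of Theorem \ref{Twierdzenie_typ_II_pp_mm}). Its general solution (\ref{Liouville_equation_1}) is $H = -2 F_{w} Q_{q'}/(F+Q)^{2}$ with $F = F(w)$, $Q = Q(q')$, so $Z_{w} = H/S = -2 F_{w} Q_{q'}/\big(S (F+Q)^{2}\big)$. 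Integrating with respect to $w$ gives $Z = -2 Q_{q'} \int F_{w}\, dw /\big(S (F+Q)^{2}\big)$, the $q'$-dependent integration constant being absorbed by residual freedom exactly as in the proof of Theorem \ref{Twierdzenie_typ_II_pp_mm}. Since $Q_{q'}\, dq' = dQ$, one treats $Q$ as a new coordinate and renames it $q$, arriving at (\ref{Typ_III_pp_pp_def_Z}).

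Finally I would assemble the metric by feeding these data into (\ref{postac_ABQ}) with $\Lambda = 0$, using $D_{z} = Z_{w}$, $E = -Z_{q}$, $E_{zz} = S$ and $D_{zz} = \partial_{z} Z_{w} = Z_{ww}/Z_{w}$; inserting the resulting $\mathcal{A}, \mathcal{Q}, \mathcal{B}$ into (\ref{metryka_HH_ekspandujaca_D_any_in_qpxz}) and expanding $-dq\, d(xz) = -Z\, dq\, dx - x Z_{w}\, dq\, dw - x Z_{q}\, dq^{2}$, the $-x Z_{q}\, dq^{2}$ piece cancels the $+x Z_{q}$ term hidden in $\mathcal{B}$, and one reads off (\ref{metryka_TypIII_pp_pp_ostateczna}). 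The Petrov-Penrose conditions then follow from Table \ref{Kryteria_krzywizy_ASD}: $D_{z} = Z_{w} \ne 0$ is automatic for a genuine coordinate change, $E_{zzz} = S_{w}/Z_{w}$ so $E_{zzz} \ne 0 \Leftrightarrow S_{w} \ne 0$, and nondegeneracy of the $q \leftrightarrow Q$ substitution forces $F_{w} \ne 0$ (equivalently $Q_{q'} \ne 0$) — exactly the stated hypotheses on $S$ and $F$. I expect the only real obstacle to be spotting the substitution $H = S(w) Z_{w}$ that linearizes the consistency equation into the Liouville equation; everything else is routine bookkeeping parallel to Theorems \ref{Twierdzenie_typ_II_pp_pp} and \ref{Twierdzenie_typ_II_pp_mm}, the minor delicate points being the cancellation of the $x Z_{q}\, dq^{2}$ term and the absorption of the $q'$-dependent integration constant.
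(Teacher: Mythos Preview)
Your proof is correct and follows essentially the same route as the paper's: introduce the potential $\Omega$, observe that $\Lambda=0$ gives $dS\wedge d\Omega=0$ so $S=S(w)$, change to $(q',w)$, reduce the consistency condition to Liouville via $H:=S(w)Z_{w}$, and finish with the $Q\to q$ relabelling. The paper merely states that equation (\ref{typ_III_rownanie}) ``can be simply rearranged to the Liouville-like form'' without displaying the substitution, so your explicit identification of $H=S(w)Z_{w}$ (and your check that $E_{zzz}=S_{w}/Z_{w}$ and your tracking of the $xZ_{q}\,dq^{2}$ cancellation) is more detailed than the original, but the strategy is identical.
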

\begin{proof}
The proof is similar to that of Theorems \ref{Twierdzenie_typ_II_pp_pp} and \ref{Twierdzenie_typ_II_pp_mm} but with subtle differences. Because $D_{z} \ne 0$ holds, from Eq. (\ref{HH_resztki_2}) we find 
\begin{equation}
\label{definicja_Omega_2}
D_{z} = \frac{1}{\Omega_{z}}, \ E = \frac{\Omega_{q}}{\Omega_{z}}
\end{equation}
where $\Omega = \Omega (q,z)$. From Eq. (\ref{HH_resztki_1}) one gets
\begin{equation}
\label{drugie_rownanie_2_typIII}
dS \wedge d \Omega = 0, \  S:= E_{zz}= \partial^{2}_{z} \left(  \frac{\Omega_{q}}{\Omega_{z}} \right)
\end{equation}
As before, we introduce new coordinates $(q', w)$ such that $q'=q$ and $w = \Omega (q,z)$ what implies that $z=Z(q',w)$. From (\ref{drugie_rownanie_2_typIII}) we find that $S=S(w)$. Moreover (compare (\ref{coordinates_qw_typ_II})), $D_{z} = Z_{w}$ and $E = -Z_{q'}$.

We are left with a consistency condition $E_{zz} = S$. It yields the following equation (compare (\ref{transformacja_pochodnych_ffunkci}))
\begin{equation}
\label{typ_III_rownanie}
S = \frac{1}{Z_{w}} \frac{\partial}{\partial w} \left(  -\frac{Z_{q'w}}{Z_{w}}  \right)
\end{equation}
Eq. (\ref{typ_III_rownanie}) can be simply rearrange to the Liouville-like form. Thus, a solution for $Z$ reads (compare (\ref{Liouville_equation}) and (\ref{Liouville_equation_1}))
\begin{equation}
\label{solution_for_Z_typ_III}
Z = -2 Q_{q'} \int \frac{F_{w} \, dw}{S(F+Q)^{2}}
\end{equation}
With this form of $Z$ we have everything to write down the metric. Note, that $D_{zz} = Z_{ww} / Z_{w}$. Treating $Q$ as a new variable and denoting it by $q$ we arrive at (\ref{metryka_TypIII_pp_pp_ostateczna}). Finally, conditions $D_{z} \ne 0$ and $E_{zzz} \ne 0$ implies $F_{w} \ne 0$ and $S_{w} \ne 0$.
\end{proof}

\subsubsection{Solution with a 2D symmetry algebra}
\label{Typ_DxIII_pppp_dwie_symetrie}

The metric (\ref{metryka_TypIII_pp_pp_ostateczna}) is equipped with the Killing vector $K_{1} = \partial_{p}$. In this Section we find a form of the metric (\ref{metryka_TypIII_pp_pp_ostateczna}) admitting a 2D symmetry algebra. We begin by proving that the second symmetry must be defined by a proper homothetic vector.
\begin{Lemat}
\label{Lemat_typ_III_brak_symterii_Killinga}
A complex Einstein space of the type $\{ [\textrm{D}]^{ee} \otimes [\textrm{III}]^{n}, [++,++] \}$ does not admit any  Killing vector other than $K_{1} = \partial_{p}$. 
\end{Lemat}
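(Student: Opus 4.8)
The plan is to read off every Killing vector directly from the master system (\ref{rozpisane_rownanie_master}), which already encodes all homothetic vectors of spaces of the types $[\textrm{D}]^{ee}\otimes[\textrm{deg}]^{n}$. A homothetic vector is Killing precisely when $\chi_{0}=0$, and for type $[\textrm{III}]$ one has $\Lambda=0$; so the first step is to put $\chi_{0}=0$ and $\Lambda=0$ in (\ref{rozpisane_rownanie_master}). Then (\ref{rozpisane_rownanie_master_4}) is vacuous, (\ref{rozpisane_rownanie_master_5}) becomes just $\widetilde{a}_{qqq}=0$, and one is left with the three equations (\ref{rozpisane_rownanie_master_1})--(\ref{rozpisane_rownanie_master_3}) for the gauge functions $\widetilde{a},\widetilde{c},\widetilde{\epsilon},\widetilde{\alpha}$, together with $D_{z}$ and $E$. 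The whole argument then hinges on the defining type-$[\textrm{III}]$ condition $E_{zzz}\neq 0$ (Table \ref{Kryteria_krzywizy_ASD}) and on the field equation (\ref{HH_resztki_1}), which for $\Lambda=0$ reduces to $E_{zzq}=EE_{zzz}$.

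The key steps, in order, are as follows. First, dispose of $\widetilde{\epsilon}$: if $\widetilde{\epsilon}\neq 0$ then (\ref{rozpisane_rownanie_master_3}) gives $E_{z}=-\widetilde{\epsilon}_{q}/\widetilde{\epsilon}$, a function of $q$ only, whence $E_{zz}=0$ and $E_{zzz}=0$, contradicting $E_{zzz}\neq 0$; hence $\widetilde{\epsilon}=0$. Second, differentiate (\ref{rozpisane_rownanie_master_2}) twice with respect to $z$; the $E_{q}$-term produces $E_{zzq}$, which one replaces by $EE_{zzz}$ via (\ref{HH_resztki_1}), and after cancelling the common factor $E_{zzz}\neq 0$ the equation collapses to the purely algebraic relation $\widetilde{a}E=\widetilde{a}_{q}z+\widetilde{c}_{q}$. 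Third, observe that $\widetilde{a}\neq 0$ would make $E$ affine in $z$, so $E_{zzz}=0$ — a contradiction; therefore $\widetilde{a}=0$, and the same relation then forces $\widetilde{c}_{q}=0$, i.e. $\widetilde{c}=\widetilde{c}_{0}=\textrm{const}$. Finally, feeding $\widetilde{a}=\widetilde{\epsilon}=0$, $\widetilde{c}_{q}=0$ back into (\ref{rozpisane_rownanie_master_1}) yields $\widetilde{\alpha}=0$ (and (\ref{rozpisane_rownanie_master_2}) is then satisfied identically), so the Killing vector is $K=\widetilde{c}_{0}\,\partial_{p}=\widetilde{c}_{0}K_{1}$, which proves the Lemma.

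I do not anticipate a real obstacle: the computation is short once one notices that (\ref{HH_resztki_1}) is exactly what is needed to eliminate the otherwise awkward $E_{zzq}$ term after differentiating (\ref{rozpisane_rownanie_master_2}). The only point requiring a little care is the repeated appeal to $E_{zzz}\neq 0$: since $E_{zzz}$ is a nonvanishing holomorphic function it is nonzero on a dense open set, so the identities derived hold there and hence everywhere by analyticity, which legitimises dividing by $E_{zzz}$ and concluding that $E$ cannot be affine in $z$. A slightly more computational alternative would be to keep the coordinate description $D_{z}=Z_{w}$, $E=-Z_{q'}$ used in the proof of the corresponding Theorem and rewrite the master equations as conditions on $Z$, but working directly with (\ref{rozpisane_rownanie_master}) is cleaner and shorter.
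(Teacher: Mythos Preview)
Your proof is correct. The paper's own proof proceeds slightly differently: after disposing of $\widetilde{\epsilon}$ (same as you), it splits into cases. For $\widetilde{a}=0$ it reads off $\widetilde{c}_{q}=0$ directly from (\ref{rozpisane_rownanie_master_2}) (since otherwise $E_{z}$ would depend only on $q$). For $\widetilde{a}\neq 0$ it uses the gauge freedom (\ref{symmetry_transformation_a})--(\ref{symmetry_transformation_c}) to set $\widetilde{a}=1$, $\widetilde{c}=0$, whereupon (\ref{rozpisane_rownanie_master_2}) gives $E_{q}=0$ and then (\ref{HH_resztki_1}) forces $EE_{zzz}=0$, a contradiction. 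Your route---differentiating (\ref{rozpisane_rownanie_master_2}) twice in $z$ and substituting $E_{zzq}=EE_{zzz}$ from (\ref{HH_resztki_1}) to obtain the single algebraic relation $\widetilde{a}E=\widetilde{a}_{q}z+\widetilde{c}_{q}$---handles both cases at once and avoids any appeal to gauge. It is a cleaner and more uniform argument; the paper's version has the minor advantage that it makes visible \emph{why} the obstruction comes from $E$ being constrained to be too simple (either independent of $q$ or affine in $z$), but both arrive at the same contradiction with $E_{zzz}\neq 0$.
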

\begin{proof}
The condition for a space to be of the type $ [\textrm{D}]^{ee} \otimes [\textrm{III}]^{n}$ is $E_{zzz} \ne 0$. We prove that if one assumes an existence of a Killing vector ($\chi_{0} =0$) other then $K_{1} = \partial_{p}$ this condition is always violated. Indeed, from (\ref{rozpisane_rownanie_master_3}) we conclude that $\widetilde{\epsilon}=0$. If $\widetilde{a} =0$ then from (\ref{rozpisane_rownanie_master_2}) one finds $\widetilde{c}_{q} = 0$. Hence, $K_{2} \sim K_{1}$. Consider now a case with $\widetilde{a} \ne 0$. In this case we put $\widetilde{a}=1$ and $\widetilde{c} =0$ without any loss of generality (compare (\ref{symmetry_transformation})). Eq. (\ref{rozpisane_rownanie_master_2}) implies $E_{q}=0$ and from the field equation (\ref{HH_resztki_1}) we find $E_{zzz} =0$ what is a contradiction. 
\end{proof}

\begin{Twierdzenie}
\label{Twierdzenie_typ_III_pp_pp_symetrie}
Let $(\mathcal{M}, ds^{2})$ be an Einstein complex space of the type $\{ [\textrm{D}]^{ee} \otimes [\textrm{III}]^{n}, [++,++] \}$ equipped with a 2D symmetry algebra $A_{2,1}$. Then there exists a local coordinate system $(q,p,x,w)$ such that the metric takes the form (\ref{metryka_TypIII_pp_pp_ostateczna})
with 
\begin{equation}
\label{Typ_III_pp_pp_def_Z_symetrie}
Z(q,w) = Z_{0} \int  \left( q-\dfrac{3}{2 \chi_{0}} \ln w \right)^{-2}  w dw
\end{equation}
where $\chi_{0} \ne 0$ and $Z_{0} \ne 0$ are constants.
\end{Twierdzenie}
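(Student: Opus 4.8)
The plan is to start from Lemma~\ref{Lemat_typ_III_brak_symterii_Killinga}: since $K_{1}=\partial_{p}$ is the only Killing vector, a second generator of the assumed $2$-dimensional algebra must be a \emph{proper} homothetic vector $K_{2}$ with $\chi_{0}\neq 0$, which is compatible with (\ref{symmetry_integrability_condition}) precisely because type-[III] forces $\Lambda=0$. Feeding $\chi_{0}\neq 0$, $\Lambda=0$ into the master system (\ref{rozpisane_rownanie_master}): equation (\ref{rozpisane_rownanie_master_4}) drops out, (\ref{rozpisane_rownanie_master_5}) gives $\widetilde{a}_{qqq}=0$, and (\ref{rozpisane_rownanie_master_3}) together with the type-[III] condition $E_{zzz}\neq 0$ gives $\widetilde{\epsilon}=0$ (otherwise $E_{z}$, hence $E_{zz}$, would depend on $q$ only). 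One then excludes $\widetilde{a}=0$: using the residual $h$-freedom in (\ref{symmetry_transformation_c}) to set $\widetilde{c}=0$, equation (\ref{rozpisane_rownanie_master_2}) would collapse to $zE_{z}=E$, forcing $E$ linear in $z$, a contradiction. So $\widetilde{a}\neq 0$, and the reparametrisation $q\mapsto\int dq/\widetilde{a}$ together with the gauge functions $h$, $L$ in (\ref{gauge}) normalizes $\widetilde{a}=1$, $\widetilde{c}=0$, $\widetilde{\alpha}=0$ (the last via (\ref{symmetry_transformation_alpha}) with $\sigma=0$, a first-order linear equation for $L$). The homothetic vector becomes
\begin{equation}
\nonumber
K_{2}=\partial_{q}+\tfrac{2}{3}\chi_{0}\left(2p\,\partial_{p}-x\,\partial_{x}+y\,\partial_{y}\right),
\end{equation}
whence $[K_{1},K_{2}]=\tfrac{4}{3}\chi_{0}K_{1}$ (the algebra $A_{2,1}$) and $K_{2}z=\tfrac{4}{3}\chi_{0}z$ on $z=-y/x$.

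Next, with these normalizations the surviving master equations (\ref{rozpisane_rownanie_master_1})--(\ref{rozpisane_rownanie_master_2}) merely state that $E$ and $D_{z}$ carry fixed homothetic weights ($\tfrac{4}{3}\chi_{0}$ and $\tfrac{2}{3}\chi_{0}$ respectively). Choosing the coordinate $w$ of the general theorem so that $K_{2}w=\tfrac{2}{3}\chi_{0}w$ (this fixes the integration constant implicit in $w=\Omega(q,z)$), the unique $K_{2}$-invariant built from $q$ and $w$ is $\xi:=q-\tfrac{3}{2\chi_{0}}\ln w$, and matching weights forces
\begin{equation}
\nonumber
S(w)=s_{0}\,w^{-2},\qquad Z(q,w)=w^{2}\,h(\xi),
\end{equation}
with $s_{0}$ a constant and $h$ a function of one variable. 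Substituting this self-similar ansatz into the field equations (\ref{HH_resztki}) with $\Lambda=0$, and integrating once, reduces everything to the first-order linear equation $h'-\tfrac{4}{3}\chi_{0}h=c_{0}\,\xi^{-2}$, the $\xi^{-2}$ source coming from the $s_{0}w^{-2}$ term. Integrating and using the leftover gauge freedom $h\mapsto h+\mathrm{const}\cdot e^{\frac{4}{3}\chi_{0}q}$ (which only shifts $z$ by a multiple of $e^{\frac{4}{3}\chi_{0}q}$) to remove the homogeneous part, the one-parameter solution, rewritten in $(q,w)$, is
\begin{equation}
\nonumber
Z(q,w)=Z_{0}\int\left(q-\frac{3}{2\chi_{0}}\ln w\right)^{-2}w\,dw,
\end{equation}
so the metric is (\ref{metryka_TypIII_pp_pp_ostateczna}) with this $Z$ (and $S=E_{zz}$ recomputed from it), i.e. (\ref{Typ_III_pp_pp_def_Z_symetrie}).

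It then remains to record the non-degeneracy conditions: $\chi_{0}\neq 0$ and $Z_{0}\neq 0$ by construction, while a direct check from the explicit $Z$ gives $E_{zzz}\neq 0$ and $D_{z}\neq 0$, so the space is genuinely of type $\{[\textrm{D}]^{ee}\otimes[\textrm{III}]^{n},[++,++]\}$ and not more degenerate; and since Lemma~\ref{Lemat_typ_III_brak_symterii_Killinga} forbids any further Killing vector, the symmetry algebra is exactly $A_{2,1}$.

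I expect the main obstacle to be the bookkeeping in the middle step: choosing the $w$-coordinate precisely, reducing the field equations (\ref{HH_resztki}) to the single ODE for $h$, and then tracking the residual gauge freedom (the constants left in $f$, $h$, $L$ after the first step) carefully enough to strip off every spurious integration constant and land on exactly the stated one-parameter normal form — in particular to obtain the argument $q-\tfrac{3}{2\chi_{0}}\ln w$ with precisely these coefficients rather than a shifted or rescaled version.
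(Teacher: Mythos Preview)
Your proposal is correct but takes a somewhat different route from the paper. The paper, after normalizing $K_{2}$ exactly as you do, writes the two symmetry equations (\ref{rozpisane_rownanie_master_1})--(\ref{rozpisane_rownanie_master_2}) directly in the $(q,w)$ coordinates (using $D_{z}=Z_{w}$, $E=-Z_{q}$) and obtains the single linear PDE
\[
Z_{q}+\tfrac{2\chi_{0}}{3}\,w\,Z_{w}-\tfrac{4\chi_{0}}{3}\,Z=0,
\]
with the integration constant absorbed into $w$. It then simply substitutes the \emph{already established} general solution (\ref{Typ_III_pp_pp_def_Z}) into this PDE and reads off $F(w)=-\tfrac{3}{2\chi_{0}}\ln w+F_{0}$ and $S(w)=S_{0}w^{-2}$, after which $F_{0}$ is absorbed into $q$ and $Z_{0}:=3/(\chi_{0}S_{0})$.

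Your approach instead encodes the same PDE via the self-similar ansatz $Z=w^{2}h(\xi)$ and then re-solves the field-equation content to get an ODE for $h$. This is valid (and your ODE $h'-\tfrac{4}{3}\chi_{0}h=c_{0}\xi^{-2}$ does hold for the correct $Z$), but it duplicates work: the Liouville solution (\ref{Typ_III_pp_pp_def_Z}) already encodes the field equations, so plugging it into the symmetry PDE is quicker and avoids the somewhat vague ``substituting into (\ref{HH_resztki}) and integrating once'' step. Your residual-gauge argument for killing the homogeneous part of $h$ is correct (the surviving $h$-gauge with $\widetilde{c}=0$ preserved is precisely $h(q)=h_{0}e^{4\chi_{0}q/3}$), but in the paper's approach this constant never appears, being swallowed by the choice of origin of $w$ and $q$.
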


\begin{proof}
Assume, that a proper homothetic vector $K_{2}$ exists. It can be always brought to the form
\begin{equation}
\label{homotetyczny_dla_III}
K_{2} = \partial_{q} + \frac{2\chi_{0}}{3}  (2p \partial_{p} - x \partial_{x} + y \partial_{y} )
\end{equation}
Indeed, for a proper homothetic vector the function $\widetilde{c}$ can be always gauged away without any loss of generality (compare (\ref{symmetry_transformation_c})). Thus, $\widetilde{c}=0$. If additionally $\widetilde{a} =0$ holds then from (\ref{rozpisane_rownanie_master_2}) one finds $zE_{z} =E$ what implies $E_{zzz}=0$. This is a contradiction. Thus, $\widetilde{a} \ne 0$. One puts $\widetilde{a} =1$ without any loss of generality what proves (\ref{homotetyczny_dla_III}). Eqs. (\ref{rozpisane_rownanie_master_1})-(\ref{rozpisane_rownanie_master_2}) take the form
\begin{eqnarray}
&& D_{qz} + \frac{2 \chi_{0}}{3} (2z D_{zz} - D_{z} ) = 0
\\ \nonumber
&& E_{q} + \frac{4 \chi_{0}}{3} (z E_{z} - E) = 0
\end{eqnarray}
Transforming these equations into the coordinate system $(q,p,x,w)$ and remembering that $D_{z} = Z_{w}$ and $E=-Z_{q}$ one finds
\begin{eqnarray}
&& \frac{\partial}{\partial{w}} \left(  \frac{Z_{q}}{Z_{w}} - \frac{4 \chi_{0}}{3} \frac{Z}{Z_{w}}  + \frac{2 \chi_{0} w}{3} \right) =0
\\ \nonumber
&& \frac{\partial}{\partial{q}} \left(  \frac{Z_{q}}{Z_{w}} - \frac{4 \chi_{0}}{3} \frac{Z}{Z_{w}} \right) =0
\end{eqnarray}
Hence
\begin{equation}
\label{rownanie_na_Z_typIII_homotetia}
Z_{q} + \frac{2 \chi_{0}}{3} w Z_{w} - \frac{4 \chi_{0}}{3} Z=0
\end{equation}
(a constant of integration has been absorbed into $w$). Inserting (\ref{Typ_III_pp_pp_def_Z}) into (\ref{rownanie_na_Z_typIII_homotetia}) we deduce 
\begin{equation}
F(w) = - \frac{3}{2 \chi_{0}} \ln w + F_{0}, \ S(w) = \frac{S_{0}}{w^{2}}
\end{equation}
where $S_{0} \ne 0$ is a constant. Incorporating $F_{0}$ into $q$ and denoting $Z_{0} := 3 / \chi_{0} S_{0}$ we arrive at (\ref{Typ_III_pp_pp_def_Z_symetrie}). The commutator reads $[K_{1}, K_{2}] = \frac{4}{3} \chi_{0} K_{1}$. Hence, an algebra is non-abelian $A_{2,1}$.
\end{proof}

\begin{Uwaga}
\normalfont
By substitution $u:= - \dfrac{3}{4 \chi_{0}} \left( q - \dfrac{3}{2 \chi_{0}} \ln w \right)^{-1}$ the formula (\ref{Typ_III_pp_pp_def_Z_symetrie}) can be rearranged to the form 
\begin{equation}
\label{zredukowana_forma_Z}
Z = - \frac{8}{9} \chi_{0}^{2} Z_{0}  e^{\frac{4 \chi_{0} q}{3}} \int e^{\frac{1}{u}}du
 \end{equation}
The integral in (\ref{zredukowana_forma_Z}) cannot be expressed in terms of elementary functions\footnote{Note that$\int e^{\frac{1}{u}}  du = u e^{\frac{1}{u}}  - Ei ( \frac{1}{u} ) + \textrm{const}$, where $Ei(x)$ is \textsl{the exponential integral}.}.
\end{Uwaga}

No other symmetries are allowed. Indeed, we already proved that no Killing vector other then $K_{1}$ is admitted (Lemma \ref{Lemat_typ_III_brak_symterii_Killinga}) and an arbitrary space admits only one proper homothetic vector. Thus, a space of the type $\{ [\textrm{D}]^{ee} \otimes [\textrm{III}]^{n}, [++,++] \}$ admits at most 2D algebra of infinitesimal symmetries.

\subsection{Type $\{ [\textrm{D}]^{ee} \otimes [\textrm{III}]^{n}, [++,--] \}$}

In this Section we assume that the following conditions hold true
\begin{equation}
\label{warunki_Typ_III_pp_mm}
D_{z} = 0, \ \Lambda = 0, \ E_{zzz} \ne 0 
\end{equation}

\subsubsection{General case}

\begin{Twierdzenie}
\label{Twierdzenie_typ_III_pp_mm}
Let $(\mathcal{M}, ds^{2})$ be an Einstein complex space of the type  $\{ [\textrm{D}]^{ee} \otimes [\textrm{III}]^{n}, [++,--] \}$. Then there exists a local coordinate system $(q,p,x,w)$ such that the metric takes the form 
\begin{eqnarray}
\label{metryka_TypIII_pp_mm_ostateczna}
\frac{1}{2} ds^{2} &=& x^{-2} \bigg\{ -dpdx -Z \, dqdx - xZ_{w} \, dq dw 
\\ \nonumber
&& \ \ \ \ \ \
+   \left( \frac{\mu_{0}}{2}  x^{3}  + \frac{wx}{2}   \right) \left( dp^{2} + 2Z \, dp dq + Z^{2} \, dq^{2} \right)  \bigg\}
\end{eqnarray}
where  $\mu_{0}=1$ and 
\begin{equation}
\label{Typ_III_pp_mm_def_Z}
Z(q,w) = -2 \int \frac{F_{w} \, dw}{w (q+F)^{2}}
\end{equation}
where $F=F(w)$ is an arbitrary holomorphic function such that $F_{w} \ne 0$.
\end{Twierdzenie}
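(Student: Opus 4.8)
The plan is to proceed exactly as in the proof of Theorem \ref{Twierdzenie_typ_II_pp_mm}, noting that the conditions (\ref{warunki_Typ_III_pp_mm}) differ from (\ref{warunki_Typ_II_pp_mm}) only in that $\Lambda=0$, and that $\Lambda$ disappears from the field equations (\ref{HH_resztki}) the moment one imposes $D_z=0$. First I would observe that with $D_z=0$ (hence $D_{zz}=0$) Eq. (\ref{HH_resztki_2}) is satisfied identically, while Eq. (\ref{HH_resztki_1}) with $\Lambda=0$ reduces to $E E_{zzz}-E_{zzq}=0$; writing $S:=E_{zz}$ this is $E S_z - S_q = 0$, equivalently $E\,dE_{zz}\wedge dq - dz\wedge dE_{zz}=0$, which is precisely Eq. (\ref{HH_resztki_1_2}).

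I would then carry out the same coordinate change $(q,z)\to(q',w)$ with $q'=q$, $w=E_{zz}(q,z)$, so that $z=Z(q',w)$. The form equation forces $E=-Z_{q'}$ as in (\ref{wzor_na_EE}), and the derivative-transformation rules (\ref{transformacja_pochodnych_ffunkci}) give $E_{zzz}=1/Z_w$ together with the consistency condition $E_{zz}=w$ in the form (\ref{rownanie_Liouvilla_typ_II}), namely $wZ_w=-\partial_w(Z_{q'w}/Z_w)$. The substitution $H:=wZ_w$ turns this into the Liouville equation (\ref{Liouville_equation}), whose general solution (\ref{Liouville_equation_1}) is $H=-2F_wQ_{q'}/(F+Q)^2$ with $F=F(w)$ and $Q=Q(q')$ arbitrary; integrating $Z_w=H/w$ gives $Z=-2Q_{q'}\int F_w\,dw/(w(F+Q)^2)$. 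Since $Q_{q'}$ always accompanies $dq'$ in the metric, I replace $q'$ by $Q$ as a coordinate, rename it $q$, and arrive at (\ref{Typ_III_pp_mm_def_Z}); the Type-[III] requirement $E_{zzz}\neq 0$ then becomes $Z_w\neq 0$, i.e. $F_w\neq 0$ (and $Q_{q'}\neq 0$, which is what makes the renaming legitimate).

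Finally I would assemble the line element from (\ref{metryka_HH_ekspandujaca_D_any_in_qpxz}) and (\ref{postac_ABQ}) specialised to $D_z=D_{zz}=0$, $\Lambda=0$, $E_{zz}=w$, $E=-Z_{q'}$, $z=Z$: the $xE\,dq^2$ piece produced when $-dq\,d(xz)$ is rewritten in the new coordinates cancels the $-xE\,dq^2$ sitting inside $\mathcal{B}\,dq^2$, and what is left collects into $x^{-2}\{-dp\,dx-Z\,dq\,dx-xZ_w\,dq\,dw+(\frac{\mu_0}{2}x^3+\frac{wx}{2})(dp^2+2Z\,dp\,dq+Z^2\,dq^2)\}$, which with $\mu_0=1$ is (\ref{metryka_TypIII_pp_mm_ostateczna}). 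There is no genuine analytic obstacle: the governing equation is the explicitly solvable Liouville equation, so the hard part has already been handled in Theorem \ref{Twierdzenie_typ_II_pp_mm}; the only point that needs care is the bookkeeping of the coordinate-derivative transformations, so that the $dq^2$ contributions cancel correctly and $Z_w\neq 0$ is preserved throughout.
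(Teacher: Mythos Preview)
Your proposal is correct and follows essentially the same approach as the paper, which simply states that the proof is identical to that of Theorem \ref{Twierdzenie_typ_II_pp_mm}. You have spelled out exactly why: once $D_{z}=0$ is imposed, the term $-2\Lambda D_{z}$ drops out of (\ref{HH_resztki_1}), so the remaining analysis (the $(q,z)\to(q',w)$ change with $w=E_{zz}$, the Liouville equation for $H=wZ_{w}$, and the renaming $Q\to q$) goes through verbatim regardless of whether $\Lambda$ vanishes.
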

\begin{proof}
The proof is identical like the proof of Theorem \ref{Twierdzenie_typ_II_pp_mm}.
\end{proof}

\subsubsection{Solution with a 2D symmetry algebra}

Performing an analysis identical like in the Section \ref{Typ_DxIII_pppp_dwie_symetrie} one finds that the metric (\ref{metryka_TypIII_pp_mm_ostateczna}) does not admit any Killing vector other then $K_{1} = \partial_{p}$ but it admits a proper homothetic vector of the form (\ref{homotetyczny_dla_III}).

\begin{Twierdzenie}
Let $(\mathcal{M}, ds^{2})$ be an Einstein complex space of the type $\{ [\textrm{D}]^{ee} \otimes [\textrm{III}]^{n}, [++,--] \}$ equipped with a 2D symmetry algebra $A_{2,1}$. Then there exists a local coordinate system $(q,p,x,w)$ such that the metric takes the form (\ref{metryka_TypIII_pp_mm_ostateczna})
with 
\begin{equation}
\label{Typ_III_pp_mm_def_Z_symetrie}
Z(q,w) := - \frac{3}{2 \chi_{0}} \int \left( q + \dfrac{3}{4 \chi_{0}} \ln w \right)^{-2} w^{-2} dw
\end{equation}
\end{Twierdzenie}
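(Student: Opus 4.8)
The plan is to transcribe, step by step, the argument of Theorem~\ref{Twierdzenie_typ_III_pp_pp_symetrie}, the simplification being that here $D_{z}=0$, so the entire $D$-sector of the master equations is vacuous. By the remark preceding the statement (obtained exactly as in Section~\ref{Typ_DxIII_pppp_dwie_symetrie}), the metric (\ref{metryka_TypIII_pp_mm_ostateczna}) admits no Killing vector other than $K_{1}=\partial_{p}$; hence a two-dimensional symmetry algebra must be spanned by $K_{1}$ together with a proper homothetic vector. Since $\Lambda=0$ such a vector is allowed, and it can be brought to the form (\ref{homotetyczny_dla_III}), i.e. $\widetilde{a}=1$, $\widetilde{c}=0$, $\widetilde{\epsilon}=0$, $\chi_{0}\neq0$.

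Feeding the master system (\ref{rozpisane_rownanie_master}) with these data and with $\Lambda=0$, equations (\ref{rozpisane_rownanie_master_3})--(\ref{rozpisane_rownanie_master_5}) hold identically, (\ref{rozpisane_rownanie_master_1}) is trivially satisfied because $D_{z}=0$, and (\ref{rozpisane_rownanie_master_2}) collapses to
\[
E_{q}+\tfrac{4}{3}\chi_{0}\,(zE_{z}-E)=0 .
\]
I would then pass to the coordinates $(q',w)$ of Theorem~\ref{Twierdzenie_typ_III_pp_mm}, in which $q'=q$, $w=E_{zz}$, $z=Z(q',w)$ and $E=-Z_{q'}$, and apply the transformation rules (\ref{transformacja_pochodnych_ffunkci}) for the derivatives. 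A short computation (identical in spirit to the one leading to (\ref{rownanie_na_Z_typIII_homotetia})) shows that the displayed equation is the total $q'$-derivative of $Z_{q'}/Z_{w}-\tfrac{4}{3}\chi_{0}Z/Z_{w}$; integrating once and absorbing the resulting function of $w$ by a reparametrization of $w$, one arrives at a first-order linear PDE for $Z$ of the same shape as (\ref{rownanie_na_Z_typIII_homotetia}).

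Finally I would insert into this PDE the general solution of Theorem~\ref{Twierdzenie_typ_III_pp_mm}, namely $Z=-2\,Q_{q}\int F_{w}\,dw/[w(F+Q)^{2}]$ with $F=F(w)$, $Q=Q(q)$; separation of variables then forces $F(w)=\tfrac{3}{4\chi_{0}}\ln w+F_{0}$ and reduces $Q$ to the coordinate $q$ itself, and incorporating $F_{0}$ into $q$ yields (\ref{Typ_III_pp_mm_def_Z_symetrie}). The commutator comes out $[K_{1},K_{2}]=\tfrac{4}{3}\chi_{0}K_{1}$, so the algebra is the non-abelian $A_{2,1}$; and since an Einstein space carries at most one proper homothetic vector and no further Killing vector is admitted, the symmetry algebra is exactly two-dimensional. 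The only point requiring genuine work is the transformation and integration of the homothety equation in the $(q',w)$ chart, but with $D_{z}=0$ this is even shorter than in Theorem~\ref{Twierdzenie_typ_III_pp_pp_symetrie} and I expect no real obstacle; the mild subtlety is merely to check that the single surviving master equation still determines $Z$ as rigidly as the pair of equations did in the twisting case.
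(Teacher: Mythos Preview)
Your approach is essentially the one the paper has in mind (its own proof reads in full: ``We skip the proof due to its similarity to that of Theorem~\ref{Twierdzenie_typ_III_pp_pp_symetrie}''). One small correction: after integrating the transformed homothety equation you obtain $Z_{q}-\tfrac{4}{3}\chi_{0}Z=G(w)\,Z_{w}$ with $G$ an a~priori undetermined function of $w$, but here $w=E_{zz}$ is gauge-invariant (cf.\ (\ref{transformacja_na_E})), so you \emph{cannot} absorb $G$ by a reparametrization of $w$ as you could shift the additive constant in the $[++,++]$ case. This does not spoil the argument: carrying $G$ along and inserting the general Liouville solution (\ref{Typ_III_pp_mm_def_Z}) into the PDE, the $(q+F)^{-3}$ and $(q+F)^{-2}$ coefficients separately force $G\,F_{w}=1$ and $\tfrac{4}{3}\chi_{0}F_{w}=1/w$, hence $F(w)=\tfrac{3}{4\chi_{0}}\ln w+F_{0}$ and $G(w)=\tfrac{4}{3}\chi_{0}w$, which yields (\ref{Typ_III_pp_mm_def_Z_symetrie}). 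So the ``mild subtlety'' you flag is resolved precisely by the field-equation constraint already built into the form of $Z$, not by any extra coordinate freedom.
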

\begin{proof}
We skip the proof due to its similarity to that of Theorem \ref{Twierdzenie_typ_III_pp_pp_symetrie}.
\end{proof}

\begin{Uwaga}
\normalfont
By substitution $u:= - \dfrac{3}{4 \chi_{0}} \left( q + \dfrac{3}{4 \chi_{0}} \ln w \right)^{-1}$ the formula (\ref{Typ_III_pp_mm_def_Z_symetrie}) can be rearranged to the form 
\begin{equation}
\label{zredukowana_forma_Z_2}
Z = -  \frac{8 \chi_{0}}{3} e^{\frac{4 \chi_{0} q}{3}} \int e^{\frac{1}{u}}du
 \end{equation}
\end{Uwaga}


\section{Types $[\textrm{D}]^{ee} \otimes [\textrm{N}]^{n}$}
\label{section_Dee_x_Nn}
\setcounter{equation}{0}

\subsection{Type $\{ [\textrm{D}]^{ee} \otimes [\textrm{N}]^{n}, [++,++] \}$}

In this Section we deal with a case for which
\begin{equation}
\label{warunki_Typ_N_pp_pp}
D_{z} \ne 0, \ \Lambda = 0, \ E_{zzz} = 0 
\end{equation}

\subsubsection{General case}

\begin{Twierdzenie}
Let $(\mathcal{M}, ds^{2})$ be an Einstein complex space of the type $\{ [\textrm{D}]^{ee} \otimes [\textrm{N}]^{n}, [++,++] \}$. Then there exists a local coordinate system $(q,p,x,z)$ such that the metric takes the form 
\begin{eqnarray}
\label{metryka_TypN_pp_pp_ostateczna}
\frac{1}{2} ds^{2} &=& x^{-2} \bigg\{ -dpdx -dq d(xz) + \left( \frac{1}{2} \mu_{0} x^{3} + b_{0} x + (2z H - H_{t})x^{2}  \right) dp^{2} 
\\ \nonumber
&& \ \ \ \ \ \ +2 \left(  \frac{1}{2} \mu_{0} x^{3}  z + b_{0}xz + (z^{2} H - z H_{t})x^{2} \right)  dp dq + \left(  \frac{1}{2} \mu_{0} x^{3} z^{2} - x^{2} z^{2} H_{t}  \right) dq^{2}  \bigg\}
\end{eqnarray}
where $\mu_{0}=1$, $b_{0}$ is a constant, $H=H(t)$ is an arbitrary holomorphic function such that $H_{ttt} \ne 0$ and $t := \dfrac{1}{z} - b_{0} q$.
\end{Twierdzenie}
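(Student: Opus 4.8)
The plan is to run the same scheme as in the theorems for the $[\textrm{II}]^{n}$ and $[\textrm{III}]^{n}$ cases, but now the assumptions $\Lambda=0$ and $E_{zzz}=0$ collapse the field equations (\ref{HH_resztki}) so drastically that no auxiliary coordinate is needed and one may work directly in $(q,p,x,z)$.

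\textbf{Step 1 (reduction of $E$).} Since $E_{zzz}=0$, the function $E$ is a second-order polynomial in $z$, $E=\tfrac12\alpha(q)z^{2}+\beta(q)z+\gamma(q)$. With $\Lambda=0$ the first field equation (\ref{HH_resztki_1}) reduces to $E_{zzq}=0$, hence $\alpha$ is a constant. I would then use the residual gauge freedom in the functions $f$ and $h$ of (\ref{gauge}) (recall $\sigma$ is already fixed to zero), with the transformation laws (\ref{transformacja_na_z_ina_w}) for $z$ and (\ref{transformacja_na_E}) for $E$, to remove the linear and the constant term of $E$: vanishing of the coefficient of $z$ is a first-order relation expressing $h_{q}$ through $f$, and vanishing of the constant term then becomes a second-order ODE for $f$; both are solvable. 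This leaves $E=b_{0}z^{2}$ with $b_{0}:=\tfrac12\alpha$ a constant (possibly zero), so $E_{z}=2b_{0}z$, $E_{zz}=2b_{0}$, and (\ref{HH_resztki_1}) is now identically satisfied.

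\textbf{Step 2 (solving for $D_{z}$).} With $E=b_{0}z^{2}$ the second field equation (\ref{HH_resztki_2}) becomes the linear first-order PDE $P_{q}-b_{0}z^{2}P_{z}+2b_{0}zP=0$ for $P:=D_{z}$. Integrating by characteristics, the subsidiary equation $dq=dz/(-b_{0}z^{2})$ has the invariant $t:=\tfrac1z-b_{0}q$, while along a characteristic $dP/P=2\,dz/z$, so $z^{-2}P$ is constant; hence $D_{z}=z^{2}H(t)$ with $H$ an arbitrary holomorphic function (when $b_{0}=0$ the equation degenerates to $P_{q}=0$ and the same formula holds with $t=1/z$). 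A short computation then gives $D_{zz}=2zH-H_{t}$ and, inductively, $D_{zzzz}=-z^{-4}H_{ttt}$, so the type-$[\textrm{N}]$ requirement $D_{zzzz}\ne0$ of Table \ref{Kryteria_krzywizy_ASD} is equivalent to $H_{ttt}\ne0$ (which in particular forces $H\ne0$, i.e. $D_{z}\ne0$, consistent with (\ref{warunki_Typ_N_pp_pp})), while $E_{zzz}=0$ holds automatically. Substituting $E=b_{0}z^{2}$, $E_{zz}=2b_{0}$, $D_{z}=z^{2}H$, $D_{zz}=2zH-H_{t}$ (with $\Lambda=0$) into (\ref{postac_ABQ}) and then into (\ref{metryka_HH_ekspandujaca_D_any_in_qpxz}) yields the stated metric (\ref{metryka_TypN_pp_pp_ostateczna}); note that the $x$-linear part of $\mathcal{B}$ drops out because $2E-z^{2}E_{zz}=0$. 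Since the $\mathcal{HH}$-equation was already reduced to the pair (\ref{HH_resztki}) in Section \ref{Spaces_Dee_x_degn}, no further integrability check is required.

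\textbf{Main obstacle.} The genuinely delicate point is Step 1: one must verify that the two successive ODE conditions needed to gauge $E$ into the form $b_{0}z^{2}$ are simultaneously solvable — which is a little subtle because $z$ itself is transformed by (\ref{gauge}) — and that the gauge freedom remaining afterwards does not secretly restrict $H$, so that $H$ is indeed arbitrary. Everything else, namely the characteristic integration of Step 2 and the substitution into the known formulas, is routine.
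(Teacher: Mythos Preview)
Your proposal is correct and follows essentially the same route as the paper. The only cosmetic difference is the order in Step~1: the paper first gauges away the linear and constant coefficients of $E$ using the explicit transformation rules (\ref{transformacje_na_funkcje_w_typie_D}) (with $a=0$), and only afterwards feeds $E=bz^{2}$ into (\ref{HH_resztki_1}) to conclude that $b$ is constant; you instead read off $E_{zzq}=0$ first and then gauge. Since $b'=b$ when $a=0$, the leading coefficient is gauge-invariant and both orders give the same result; the paper's appeal to (\ref{transformacje_na_funkcje_w_typie_D}) also answers your ``main obstacle'' directly, so no separate ODE analysis for $f,h$ is needed.
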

\begin{proof}
Condition $E_{zzz} =0$ implies that $E$ has the form (\ref{postac_E_typ_D}) with $a=0$. From the transformation formulas (\ref{transformacje_na_funkcje_w_typie_D}) it follows that $c$ and $d$ can be gauged away without any loss of generality. Inserting $E=bz^{2}$ into Eq. (\ref{HH_resztki_1}) one finds that $b=b_{0} = \textrm{const}$. The second field equation (\ref{HH_resztki_2}) reduces to the form
\begin{equation}
\label{zredukowane_rownanie_pola_typ_N}
D_{zq} + b_{0} (2zD_{z} - z^{2} D_{zz})=0
\end{equation}
with a general solution $D_{z} (q,z) = z^{2} H(t)$ where $t := \dfrac{1}{z} - b_{0} q$. Type-[N] condition $D_{zzzz} \ne 0$ implies $H_{ttt} \ne 0$. Inserting such forms of $E$ and $D_{z}$ into (\ref{postac_ABQ}) and (\ref{metryka_HH_ekspandujaca_D_any_in_qpxz}) one arrives at (\ref{metryka_TypN_pp_pp_ostateczna}).
\end{proof}

\subsubsection{Solution with a 2D symmetry algebra}

Inserting $E=b_{0}  z^{2}$ into Eqs. (\ref{rozpisane_rownanie_master}) and remembering that $D_{zzzz} \ne 0$ we find the following relations
\begin{subequations}
\begin{eqnarray}
\label{zredukowane_rownania_symetrii_typ_N_1}
&&\widetilde{a}_{qqq}=0, \ \widetilde{\epsilon} =0, \ \widetilde{c}_{qq} = 0, \ 2b_{0} \widetilde{c}_{q} + \widetilde{a}_{qq} = 0 , \ \chi_{0} b_{0} = 0
\\ 
\label{zredukowane_rownania_symetrii_typ_N_2}
&&-\widetilde{a} D_{qz} + \left( \widetilde{a}_{q} z - \frac{4}{3} \chi_{0} z + \widetilde{c}_{q} \right) D_{zz} + \left( \frac{2}{3} \chi_{0} - \widetilde{a}_{q} \right) D_{z} =0
\end{eqnarray}
\end{subequations}

If we assume an existence of a proper homothetic vector then $b_{0}=0$. It means that $E=0$ and $D_{z} = D_{z} (z)$ (compare (\ref{zredukowane_rownanie_pola_typ_N})). Consequently, the key function $W$ does not depend on $q$, so does the metric. Hence, an existence of a proper homothetic vector automatically implies an existence of the third Killing vector $K_{2}=\partial_{q}$ and a symmetry algebra is 3-dimensional. In this Section we deal with a 2D symmetry algebra, so the second symmetry must be generated by a Killing vector.

\begin{Twierdzenie}
Let $(\mathcal{M}, ds^{2})$ be an Einstein complex space of the type $\{ [\textrm{D}]^{ee} \otimes [\textrm{N}]^{n}, [++,++] \}$ equipped with a 2D symmetry algebra $2A_{1}$. Then there exists a local coordinate system $(q,p,x,z)$ such that the metric takes the form  
\begin{eqnarray}
\label{metryka_TypN_pp_pp_algebra_2D}
\frac{1}{2} ds^{2} &=& x^{-2} \bigg\{ -dpdx -dq d(xz) + \left( \frac{1}{2} \mu_{0} x^{3} + H x^{2}  \right) dp^{2} 
\\ \nonumber
&& \ \ \ \ \ \ +2 \left(  \frac{1}{2} \mu_{0} x^{3}  z - (H-zH_{z})x^{2} \right)  dp dq + \left(  \frac{1}{2} \mu_{0} x^{3} z^{2} + x^{2} (z^{2} H_{z} - 2z H)  \right) dq^{2}  \bigg\}
\end{eqnarray}
where $\mu_{0}=1$ and  $H=H(z)$ is an arbitrary holomorphic function such that $H_{zzz} \ne 0$. The generators of an algebra read
\begin{equation}
K_{1} = \partial_{p}, \ K_{2} = \partial_{q}, \ [K_{1}, K_{2}]=0
\end{equation}
\end{Twierdzenie}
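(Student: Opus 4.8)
The plan is to carry out the standard symmetry analysis (as for the earlier types) adapted to this case. By Remark~\ref{remark_o_pierwszym_Killingu} the Killing vector $K_{1}=\partial_{p}$ is always present, so a two-dimensional algebra requires a second, independent generator $K_{2}$. As observed in the discussion immediately preceding the theorem, an Einstein space of the type $\{[\textrm{D}]^{ee}\otimes[\textrm{N}]^{n},[++,++]\}$ admitting a proper homothetic vector is automatically equipped with the extra Killing vector $\partial_{q}$ and hence has a three-dimensional symmetry algebra. Therefore $K_{2}$ must be a genuine Killing vector, and I would set $\chi_{0}=0$ throughout.

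First I would substitute $\chi_{0}=0$ into the symmetry system (\ref{zredukowane_rownania_symetrii_typ_N_1})--(\ref{zredukowane_rownania_symetrii_typ_N_2}); it reduces to $\widetilde a_{qqq}=\widetilde\epsilon=\widetilde c_{qq}=0$, $2b_{0}\widetilde c_{q}+\widetilde a_{qq}=0$, and $-\widetilde a D_{qz}+(\widetilde a_{q}z+\widetilde c_{q})D_{zz}-\widetilde a_{q}D_{z}=0$. If $\widetilde a=0$ the last equation becomes $\widetilde c_{q}D_{zz}=0$; since the space is type-[N] we have $D_{zzzz}\neq0$ (Table~\ref{Kryteria_krzywizy_ASD}), so $D_{zz}\not\equiv0$ and hence $\widetilde c_{q}=0$, which makes $K_{2}$ a constant multiple of $K_{1}$ --- a contradiction. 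Thus $\widetilde a\neq0$, and using the transformation rule (\ref{symmetry_transformation_a}) I would normalise $\widetilde a=1$; then $\widetilde a_{q}=\widetilde a_{qq}=0$, so $b_{0}\widetilde c_{q}=0$.

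The core of the argument is the exclusion of $b_{0}\neq0$. If $b_{0}\neq0$ then $\widetilde c_{q}=0$, the surviving symmetry equation collapses to $D_{qz}=0$, i.e.\ $D_{z}$ is $q$-independent; but in the general case of this type $D_{z}=z^{2}H(1/z-b_{0}q)$ for some function $H$, and $q$-independence together with $b_{0}\neq0$ forces $H$ to be constant, so $D_{z}\propto z^{2}$ and $D_{zzzz}=0$ --- contradicting type-[N] once more. Hence $b_{0}=0$. Then $E=b_{0}z^{2}=0$ and the field equation (\ref{zredukowane_rownanie_pola_typ_N}) gives $D_{zq}=0$, so $D_{z}$ is an arbitrary holomorphic function of $z$ alone; the key function (\ref{Key_function_step_3}), and therefore the entire metric, becomes independent of $q$, so $K_{2}=\partial_{q}$ is a bona fide second Killing vector with $[K_{1},K_{2}]=[\partial_{p},\partial_{q}]=0$, which is precisely the abelian algebra $2A_{1}$.

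It then remains to read off the metric: substituting $\Lambda=0$, $E=0$ and $D_{z}=D_{z}(z)$ into the coefficients $\mathcal{A}$, $\mathcal{Q}$, $\mathcal{B}$ of (\ref{postac_ABQ}) and inserting them into the metric (\ref{metryka_HH_ekspandujaca_D_any_in_qpxz}) yields (\ref{metryka_TypN_pp_pp_algebra_2D}) after relabelling $D_{z}$ as $H$, while the type-[N] condition $D_{zzzz}\neq0$ becomes $H_{zzz}\neq0$. I expect the case analysis that pins down $b_{0}=0$ --- in particular the use of the explicit form $D_{z}=z^{2}H(1/z-b_{0}q)$ together with the type-[N] nondegeneracy --- to be the only nontrivial point; everything after $b_{0}=0$ is a routine substitution.
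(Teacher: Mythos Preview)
Your overall strategy is sound, but the normalisation step has a gap. When you invoke (\ref{symmetry_transformation_a}) to set $\widetilde a=1$ you are performing a gauge transformation with $f=1/\widetilde a$; however, the gauge functions $f$ and $h$ have already been consumed in the general type-[N] analysis to bring $E$ to the form $b_{0}z^{2}$ (i.e., to gauge away $c$ and $d$ via (\ref{transformacje_na_funkcje_w_typie_D})). The residual gauge preserving $c=d=0$ is only finite-dimensional and cannot absorb an arbitrary function $\widetilde a(q)$. Consequently, once you normalise $\widetilde a$, the explicit solution $D_{z}=z^{2}H(1/z-b_{0}q)$ that you then use to exclude $b_{0}\neq0$ is no longer available, since that formula was derived precisely under the gauge $c=d=0$. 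The two gauge choices are incompatible, so the case $b_{0}\neq0$ is not actually disposed of.

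The paper avoids this by not normalising $\widetilde a$ at all. Instead it eliminates $D_{qz}$ between the field equation (\ref{zredukowane_rownanie_pola_typ_N}) and the symmetry equation (\ref{zredukowane_rownania_symetrii_typ_N_2}) with $\chi_{0}=0$, obtaining $rD_{zz}-r_{z}D_{z}=0$ where $r:=b_{0}\widetilde a\,z^{2}-\widetilde a_{q}z-\widetilde c_{q}$. If $r\not\equiv0$ this integrates to $D_{z}=g(q)\,r$, a polynomial of degree at most two in $z$, whence $D_{zzzz}=0$ --- contradicting type~[N]. Hence $r\equiv0$, which yields simultaneously $b_{0}\widetilde a=0$, $\widetilde a_{q}=0$ and $\widetilde c_{q}=0$. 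Since $\widetilde a\neq0$ (otherwise $K_{2}\sim K_{1}$), one obtains $b_{0}=0$ and $\widetilde a,\widetilde c$ constant, after which everything proceeds exactly as in your final paragraph.
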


\begin{proof}
Let us put $\chi_{0}=0$ in (\ref{zredukowane_rownania_symetrii_typ_N_2}). Eliminating $D_{zq}$ from (\ref{zredukowane_rownanie_pola_typ_N}) and (\ref{zredukowane_rownania_symetrii_typ_N_2}) we find the following equation
\begin{equation}
\label{pomocniczcze_na_symN}
r D_{zz} - r_{z} D_{z}=0, \ r(q,z) := b_{0} \widetilde{a} z^{2} - \widetilde{a}_{q}z - \widetilde{c}_{q}
\end{equation}
If we assume that $r \ne 0$ then a solution of (\ref{pomocniczcze_na_symN}) reads $D_{z} = g(q) r(q,z)$. It implies $D_{zzzz} = g(q) r_{zzz} =0$. Hence, the ASD Weyl spinor vanishes and a space is not type-[N] anymore, what is a contradiction. Thus, $r=0$ holds true. Consequently, $\widetilde{a} = \widetilde{a}_{0} = \textrm{const}$, $\widetilde{c} = \widetilde{c}_{0} = \textrm{const}$, $b_{0} \widetilde{a}_{0}=0$. Vector $K_{2}$ has the form $K_{2} = \widetilde{a}_{0} \partial_{q} + \widetilde{c}_{0} \partial_{p}$. Without any loss of generality one puts $\widetilde{c}_{0}=0$ (we already have $K_{1} = \partial_{p}$ and a linear combination of two Killing vectors with constant coefficients is still a Killing vector). One puts $\widetilde{a}_{0} =1$ what implies $b_{0}=0$. The second Killing vector reads $K_{2} = \partial_{q}$. $K_1$ and $K_2$ commute, $[K_{1}, K_{2}]=0$ so a symmetry algebra is abelian $2A_{1}$. From (\ref{zredukowane_rownanie_pola_typ_N}) we find $D_{zq}=0 \Longrightarrow D_{z} = H(z)$. Type-[N] condition yields $H_{zzz} \ne 0$. 
\end{proof}

\subsubsection{Solution with a 3D symmetry algebra}

In this Section we assume that a type-[N] space admits the third symmetry. With $b_{0}=0$, from (\ref{zredukowane_rownania_symetrii_typ_N_1}) it follows that the third homothetic vector has $\widetilde{a} = a_{0}q + const_{1}$ and $\widetilde{c} = c_{0}q + const_{2}$. Using $K_{1}=\partial_{p}$ and $K_{2}=\partial_{q}$ both $const_{1}$ and $const_{2}$ can be put zero without any loss of generality. The third homothetic vector takes the form
\begin{equation}
K_{3} = a_{0} (q \partial_{q} - y \partial_{y}) + c_{0} (q \partial_{p} + x \partial_{y}) + \frac{2}{3} \chi_{0} (2p \partial_{p} -x \partial_{x} + y \partial_{y})
\end{equation}
Eq. (\ref{zredukowane_rownania_symetrii_typ_N_2}) simplifies and yields 
\begin{equation}
\label{rownanie_na_trzecia_symetrie_typ_N}
 \left( a_{0} z - \frac{4}{3} \chi_{0} z + c_{0} \right) H_{z} + \left( \frac{2}{3} \chi_{0} - a_{0} \right) H =0, \ \ H := D_{z}
\end{equation}
If we assume $\chi_{0}=0$ (it means that $K_{3}$ is the third Killing vector), then from (\ref{rownanie_na_trzecia_symetrie_typ_N}) we find $(a_{0} z + c_{0} ) H_{zz}=0$. If $H_{zz}=0$ then the ASD Weyl spinor vanishes, while $a_{0} z + c_{0}=0$ implies $K_{3}=0$. Both these possibilities are contradictions. Thus, the third symmetry must be generated by a proper homothetic vector, $\chi_{0} \ne 0$. Using this fact we scale $K_{3}$ to the form
\begin{equation}
K_{3} = \alpha_{0} (q \partial_{q} - y \partial_{y}) + \gamma_{0} (q \partial_{p} + x \partial_{y}) + \frac{1}{2} (2p \partial_{p} -x \partial_{x} + y \partial_{y}), \ \ \alpha_{0} := \frac{3 a_{0}}{4  \chi_{0}}, \ \gamma_{0} := \frac{3 c_{0}}{4  \chi_{0}}
\end{equation}
With such abbreviations Eq. (\ref{rownanie_na_trzecia_symetrie_typ_N}) reads
\begin{equation}
\label{rownanie_na_trzecia_symetrie_typ_N_przeskalowane}
 ( 2(\alpha_{0}-1) z + 2 \gamma_{0}  ) H_{z} + ( 1-2 \alpha_{0}  ) H =0
\end{equation}

Now we are ready to formulate
\begin{Twierdzenie}
Let $(\mathcal{M}, ds^{2})$ be an Einstein complex space of the type $\{ [\textrm{D}]^{ee} \otimes [\textrm{N}]^{n}, [++,++] \}$ equipped with a 3D symmetry algebra. Then there exists a local coordinate system $(q,p,x,z)$ such that the metric takes the form  
\begin{eqnarray}
\label{metryka_TypN_pp_pp_algebra_3D}
\frac{1}{2} ds^{2} &=& x^{-2} \bigg\{ -dpdx -dq d(xz) + \left( \frac{1}{2} \mu_{0} x^{3} + H x^{2}  \right) dp^{2} 
\\ \nonumber
&& \ \ \ \ \ \ +2 \left(  \frac{1}{2} \mu_{0} x^{3}  z - (H-zH_{z})x^{2} \right)  dp dq + \left(  \frac{1}{2} \mu_{0} x^{3} z^{2} + x^{2} (z^{2} H_{z} - 2z H)  \right) dq^{2}  \bigg\}
\end{eqnarray}
where $\mu_{0}=1$. The generators of an algebra read
\begin{equation}
K_{1} = \partial_{p}, \ K_{2} = \partial_{q}, \ K_{3} = \alpha_{0} (q \partial_{q} - y \partial_{y}) + \gamma_{0} (q \partial_{p} + x \partial_{y}) + \frac{1}{2} (2p \partial_{p} -x \partial_{x} + y \partial_{y})
\end{equation} with commutation rules
\begin{equation}
\label{komutatory_dla_typu_N}
[K_{1}, K_{2}]=0, \ [K_{1}, K_{3}] =  K_{1}, \ [K_{2}, K_{3}] = \alpha_{0}K_{2} + \gamma_{0} K_{1}
\end{equation}
The function $H=H(z)$ takes the form
\begin{subequations}
\begin{eqnarray}
\nonumber
(i) && \textrm{for the algebra } A^{\alpha_{0}}_{3,5}
\\ 
\label{rozwiazanie_typ_N_1}
&& H(z) = H_{0} \left( z + \frac{\gamma_{0}}{\alpha_{0}-1} \right)^{\frac{2 \alpha_{0} -1}{2 \alpha_{0} -2}}, \ \alpha_{0} \ne \left\{  -1,0,\frac{1}{2}, 1, \frac{3}{2} \right \}, \ \gamma_{0} \textrm{ is arbitrary} \ \ \ \ \ \
\\ \nonumber
(ii) && \textrm{for the algebra } A_{3,4}
\\ 
\label{rozwiazanie_typ_N_2}
&& H(z) = H_{0} \left( z - \frac{\gamma_{0}}{2} \right)^{\frac{3}{4}}, \ \alpha_{0}=-1, \ \gamma_{0} \textrm{ is arbitrary}
\\ \nonumber
(iii) && \textrm{for the algebra } A_{3,2}
\\ 
\label{rozwiazanie_typ_N_3}
&& H(z) = H_{0} e^{\frac{z}{2 \gamma_{0}}}, \ \alpha_{0}=1, \ \gamma_{0} \ne 0
\\ \nonumber
(iv) && \textrm{for the algebra } A_{2,1} \oplus A_{1}
\\
\label{rozwiazanie_typ_N_4}
&& H(z) = H_{0} (z-\gamma_{0})^{\frac{1}{2}}, \ \alpha_{0}=0, \ \gamma_{0} \textrm{ is arbitrary}
\end{eqnarray}
\end{subequations}
For all the cases $(i)-(iv)$, $H_{0} \ne 0$.
\end{Twierdzenie}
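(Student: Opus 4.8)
The plan is to build directly on the analysis carried out immediately before the statement. There it was established that for a space of the type $\{[\textrm{D}]^{ee}\otimes[\textrm{N}]^{n},[++,++]\}$ the third symmetry must be a proper homothetic vector ($\chi_{0}\neq 0$), that $\chi_{0}b_{0}=0$ then forces $b_{0}=0$, hence $E=0$ and, by (\ref{zredukowane_rownanie_pola_typ_N}), $D_{z}=H(z)$ with the metric collapsing to (\ref{metryka_TypN_pp_pp_algebra_3D}); and that $K_{3}$ can be scaled to the displayed form with constants $\alpha_{0},\gamma_{0}$, the master equation reducing to the first order linear ODE (\ref{rownanie_na_trzecia_symetrie_typ_N_przeskalowane}). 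So the genuine content of the theorem is only: integrate that ODE, intersect its solutions with the type-[N] condition $H_{zzz}\neq 0$ (which is exactly the condition $D_{zzzz}\neq 0$ of Table \ref{Kryteria_krzywizy_ASD} since $E_{zzz}=0$), and identify the resulting Lie algebra against the list of \cite{Patera}.

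For the integration I would split on whether $\alpha_{0}=1$. If $\alpha_{0}\neq 1$, Eq. (\ref{rownanie_na_trzecia_symetrie_typ_N_przeskalowane}) is separable, $H_{z}/H=\dfrac{2\alpha_{0}-1}{2(\alpha_{0}-1)}\bigl(z+\tfrac{\gamma_{0}}{\alpha_{0}-1}\bigr)^{-1}$, and a single quadrature gives the power law $H=H_{0}\bigl(z+\tfrac{\gamma_{0}}{\alpha_{0}-1}\bigr)^{(2\alpha_{0}-1)/(2\alpha_{0}-2)}$ of (\ref{rozwiazanie_typ_N_1}); necessarily $H_{0}\neq 0$, else $H\equiv 0$ and $C_{\dot{A}\dot{B}\dot{C}\dot{D}}=0$. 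If $\alpha_{0}=1$ the equation degenerates to $2\gamma_{0}H_{z}=H$, which forces $\gamma_{0}\neq 0$ (again to avoid $H\equiv 0$) and yields the exponential $H=H_{0}e^{z/(2\gamma_{0})}$. For the power law, $H_{zzz}$ is proportional to $e(e-1)(e-2)$ with exponent $e=\tfrac{2\alpha_{0}-1}{2\alpha_{0}-2}$; it vanishes precisely for $e\in\{0,1,2\}$, i.e.\ for $\alpha_{0}=\tfrac{1}{2}$ or $\alpha_{0}=\tfrac{3}{2}$ ($e=1$ being impossible), so these two values are excluded, while the exponential branch has $H_{zzz}=H_{0}(2\gamma_{0})^{-3}e^{z/(2\gamma_{0})}\neq 0$ automatically. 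This already produces the four admissible families, the separate labelling of $\alpha_{0}=-1$ and $\alpha_{0}=0$ being dictated not by the shape of $H$ but by the change of algebra type at those values.

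For the algebra identification I would read off the abstract structure from (\ref{komutatory_dla_typu_N}). When $\alpha_{0}\neq 1$, the shift $K_{2}':=K_{2}+\tfrac{\gamma_{0}}{\alpha_{0}-1}K_{1}$ removes the $\gamma_{0}K_{1}$ term, leaving $[K_{1},K_{2}']=0$, $[K_{1},K_{3}]=K_{1}$, $[K_{2}',K_{3}]=\alpha_{0}K_{2}'$: for generic $\alpha_{0}$ this is $A_{3,5}^{\alpha_{0}}$, for $\alpha_{0}=-1$ it is $A_{3,4}$, and for $\alpha_{0}=0$ the element $K_{2}'$ becomes central, so the algebra splits as $A_{2,1}\oplus A_{1}$ with $A_{2,1}$ spanned by $K_{1},K_{3}$. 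For $\alpha_{0}=1$ the rescaling $e_{1}=\gamma_{0}K_{1}$, $e_{2}=K_{2}$, $e_{3}=K_{3}$ gives $[e_{1},e_{3}]=e_{1}$, $[e_{2},e_{3}]=e_{1}+e_{2}$, the Jordan-block algebra $A_{3,2}$. Comparing with Table I of \cite{Patera} completes the proof. The only delicate point — the ``hard part'', modest as it is — is this last bookkeeping: checking that the exceptional values at which the redefinition fails or the algebra type jumps are exactly $\{-1,0,1,\tfrac{1}{2},\tfrac{3}{2}\}$ and that the normalisations of the generators agree with the conventions of \cite{Patera}.
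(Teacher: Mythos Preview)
Your proposal is correct and follows essentially the same route as the paper: integrate the linear ODE (\ref{rownanie_na_trzecia_symetrie_typ_N_przeskalowane}) by separating on $\alpha_{0}=1$ versus $\alpha_{0}\neq 1$, exclude $\alpha_{0}\in\{\tfrac12,\tfrac32\}$ via $H_{zzz}\neq 0$, and identify the algebra by an affine change of basis against the Patera--Sharp--Winternitz list. The only cosmetic difference is that the paper writes the basis change as $K_{2}=:\gamma_{0}\bigl(e_{2}-\tfrac{1}{\alpha_{0}-1}e_{1}\bigr)$ rather than your shift $K_{2}'=K_{2}+\tfrac{\gamma_{0}}{\alpha_{0}-1}K_{1}$, which is the same map up to an overall scale (and your version has the minor advantage of not requiring $\gamma_{0}\neq 0$ in the generic branch).
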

\begin{proof}
Note that $2\alpha_{0} = 1$ or $2\alpha_{0} = 3$ implies $H_{zzz} =0$. Thus, the ASD Weyl spinor vanishes what is a contradiction. Hence, $\alpha_{0} \ne \bigg\{ \dfrac{1}{2}, \dfrac{3}{2} \bigg\}$.

Assume $\alpha_{0} \ne \{ -1,0,1 \}$. Substituting
\begin{equation}
K_1 =: e_1, \ K_2 =: \gamma_{0} \left( e_2 - \frac{1}{\alpha_{0}-1} e_1 \right), \ K_{3} =: e_3
\end{equation}
we find that the commutation rules (\ref{komutatory_dla_typu_N}) take the form
\begin{equation}
[e_1, e_2]=0, \ [e_1, e_3] = e_1, \ [e_2, e_3] = \alpha_{0} e_2
\end{equation}
what implies that a symmetry algebra is $A^{\alpha_{0}}_{3,5}$ (compare the Table I of \cite{Patera}). Eq. (\ref{rownanie_na_trzecia_symetrie_typ_N_przeskalowane}) is straightforward to be solved and the solution reads (\ref{rozwiazanie_typ_N_1}). Solution (\ref{rozwiazanie_typ_N_2}) can be obtained from (\ref{rozwiazanie_typ_N_1}) by substituting\footnote{It means that the solution (\ref{rozwiazanie_typ_N_2}) could be interpreted as a degenerate solution (\ref{rozwiazanie_typ_N_1}) and a corresponding symmetry algebra is $A^{\alpha_{0}}_{3,5}$ for $\alpha_{0} = -1$. However, Patera, Sharp and Winternitz in their distinguished work \cite{Patera} call such an algebra $A_{3,4}$.} $\alpha_{0} = -1$. 

If $\alpha_{0} = 1$ then the substitution
\begin{equation}
\label{pomocnicze_komutatory}
K_1 =: e_1, \ K_2 =: \gamma_{0}  e_2 , \ K_{3} =: e_3
\end{equation}
leads to 
\begin{equation}
[e_1, e_2]=0, \ [e_1, e_3] = e_1, \ [e_2, e_3] = e_{1} + e_2
\end{equation}
Hence, a symmetry algebra is $A_{3,2}$. Solution of Eq. (\ref{rownanie_na_trzecia_symetrie_typ_N_przeskalowane}) is immediate and (\ref{rozwiazanie_typ_N_3}) is proved. Note, that in this case $\gamma_{0} \ne 0$ holds, otherwise the ASD Weyl spinor vanishes.

Finally, if $\alpha_{0}=0$ then the substitution (\ref{pomocnicze_komutatory}) implies
\begin{equation}
[e_1, e_2]=0, \ [e_1, e_3] = e_1, \ [e_2, e_3] = e_{1} 
\end{equation}
Consequently, a symmetry algebra is $A_{2,1} \oplus A_{1}$. It proves (\ref{rozwiazanie_typ_N_4}).
\end{proof}

\subsection{Type $\{ [\textrm{D}]^{ee} \otimes [\textrm{N}]^{n}, [++,--] \}$}

Type $\{ [\textrm{D}]^{ee} \otimes [\textrm{N}]^{n}, [++,--] \}$ is characterized by the conditions $\Lambda=D_{z}=E_{zzz}=0$. However, these conditions imply vanishing of the ASD Weyl spinor. Thus, spaces of the type $\{ [\textrm{D}]^{ee} \otimes [\textrm{N}]^{n}, [++,--] \}$ do not exist.


\section{Concluding remarks and further perspectives}
\setcounter{equation}{0}

In this paper we analyzed para-Hermite Einstein spaces for which the ASD Weyl tensor is algebraically degenerate and which are equipped with a nonexpanding congruence of ASD null strings. Thus, all spaces considered in this work belong to the Walker class. The results are gathered in the Table \ref{summary}.

The metrics for all the types except one have been written down explicitly. The Einstein equations for the type $\{ [\textrm{D}]^{ee} \otimes [\textrm{II}]^{n} , [++,++]\}$ have been reduced to the equation (\ref{Typ_II_pp_pp_jedyne_rownanie}) which is a simplified Abel equation of the first kind. An Abel equation is not always integrable but a large class of solutions is known (see, e.g., \cite{Polyanin}). As an example we found the solution (\ref{Przyklad_rozwiazania_rownania_Abela}).

\begin{table}[ht]
 \footnotesize
\begin{center}
\begin{tabular}{|c|c|c|}   \hline
 Type  &   Metric   & Functions in the metric        \\ \hline \hline
 $\{ [\textrm{D}]^{ee} \otimes [\textrm{II}]^{n} , [++,++]\}$ & (\ref{metryka_TypII_pp_pp_ostateczna}) & 1 function of 2 variables restricted by Eq. (\ref{Typ_II_pp_pp_jedyne_rownanie}), $\Lambda \ne 0$   \\ \hline
 $\{ [\textrm{D}]^{ee} \otimes [\textrm{II}]^{n} , [++,--]\}$ & (\ref{metryka_TypII_pp_mm_ostateczna})  & 1 function of 1 variable, $\Lambda \ne 0$   \\ \hline
  $\{ [\textrm{D}]^{ee} \otimes [\textrm{D}]^{nn} , [++,++,++,++]\}$ & (\ref{metryka_TypD_ostateczna_pppppppp})  &  3 constants, $\Lambda \ne 0$  \\ \hline
  $\{ [\textrm{D}]^{ee} \otimes [\textrm{D}]^{nn} , [++,--,--,++]\}$ & (\ref{metryka_TypD_ostateczna_ppmmmmpp})  &   2 constants, $\Lambda \ne 0$ \\ \hline
   $\{ [\textrm{D}]^{ee} \otimes [\textrm{III}]^{n} , [++,++]\}$ & (\ref{metryka_TypIII_pp_pp_ostateczna})  &  2 functions of 1 variable  \\ \hline
   $\{ [\textrm{D}]^{ee} \otimes [\textrm{III}]^{n} , [++,--]\}$ & (\ref{metryka_TypIII_pp_mm_ostateczna})  &  1 function of 1 variable  \\ \hline
   $\{ [\textrm{D}]^{ee} \otimes [\textrm{N}]^{n} , [++,++]\}$ & (\ref{metryka_TypN_pp_pp_ostateczna})  &   1 function of 1 variable, 1 constant  \\ \hline
   $\{ [\textrm{D}]^{ee} \otimes [\textrm{N}]^{n} , [++,--]\}$ & \multicolumn{2}{|c|}{ does not exist }    \\ \hline
\end{tabular}
\caption{Summary of the results.}
\label{summary}
\end{center}
\end{table}

Although a physical interpretation of pHE-spaces is not as clear as an interpretation of pKE-spaces, we believe that the solutions presented in this paper sooner or later will found their place in a landscape of theoretical physics. Definitely, the solutions will be interesting for mathematicians who work on 4-dimensional non-Lorentzian geometries.

What more can be done in the subject? The natural question arises: can our approach be used in more general pHE-spaces for which an ASD congruence of null strings is expanding? In other words: can vacuum Einstein equations for spaces of the types $ [\textrm{D}]^{ee} \otimes [\textrm{deg}]^{e}$ be completely solved? The answer to this question remains unknown although we suspect it is negative. The crucial question is if Eqs. (\ref{solution_for_z_precise})-(\ref{solution_for_Z_precise}) allow to find an explicit form of the key function. If "yes" then the procedure presented in the current paper could be, at least in principles, repeated. If "no", an approach to the types $[\textrm{D}]^{ee} \otimes [\textrm{deg}]^{e}$ demands an alternative method or some additional "ad hoc" assumption. Such an assumption was made in \cite{Chudecki_Examples}. It allowed to find examples of pHE-spaces of the types $ [\textrm{D}]^{ee} \otimes [\textrm{deg}]^{e}$, but we were not able to explain what was a geometrical interpretation of this assumption. Thus, a classification of the examples presented in \cite{Chudecki_Examples} is not complete.

However, there is a promising direction for further investigations with a clear geometrical interpretation. Its basic assumption is vanishing of the twist of $\mathcal{I}_{1}$, $\varrho_{1}=0$ (compare (\ref{wlasnosci_przeciec})). Equivalently, 3D-distribution spanned by both $\mathcal{C}_{m^{A}}$ and $\mathcal{C}_{m^{\dot{A}}}$ is integrable. The condition $\varrho_{1}=0$ implies $z=z(q)$ (compare (\ref{wlasnosci_przeciec})-(\ref{kongruencja_mdotA_rownania_1})). Consequently, one puts $z=0$ without any loss of generality (compare the transformation formula (\ref{transformacja_na_z_ina_w})). If $z=0$ a form of the key function can be found and the $\mathcal{HH}$-equation can be reduced to a relatively simple form. Also, $\mathcal{C}_{m^{\dot{A}}}$ is necessarily expanding (compare (\ref{ekspansja_pierwszej_ASD_struny_1})). Consequently, such spaces are not Walker spaces anymore. This case is now intensively studied, but results will be published elsewhere.

\end{document}